\newcommand\NoDo{\renewcommand\algorithmicdo{}}
\newcommand\ReDo{\renewcommand\algorithmicdo{\textbf{do}}}
\newtheorem{theorem}{Theorem}
\newtheorem*{theorem*}{Theorem}
\newtheorem{corollary}[theorem]{Corollary}
\newtheorem{lemma}{Lemma}
\newtheorem{remark}{Remark}
\newtheorem{observation}{Observation}
\newtheorem{example}[theorem]{Example}
\newtheorem{invariant}{Invariant}
\theoremstyle{definition}
\newcommand{\hide}[1]{}
\newcommand{\s}[1]{\mathsf{#1}}
\newcommand{\mpb}{\mathsf{MPB}}
\newcommand{\x}{\mathbf{x}}
\newcommand{\y}{\mathbf{y}}
\newcommand{\p}{\mathbf{p}}
\newcommand{\R}{\mathbb R}
\newcommand{\poly}[1]{\mathsf{poly}(#1)}
\newcommand{\RR}{$\mathsf{RoundRobin}$ }
\title{New Algorithms for the Fair and Efficient Allocation of Indivisible Chores\thanks{A preliminary version of this paper appeared at IJCAI 2023 \cite{GMQ23chores}. Work supported by NSF Grant CCF-1942321.}} 
\author{Jugal Garg\footnote{University of Illinois at Urbana-Champaign, USA} \\
\texttt{\small jugal@illinois.edu} 
\and
Aniket Murhekar\footnote{University of Illinois at Urbana-Champaign, USA}\\
\texttt{\small aniket2@illinois.edu}
\and
John Qin\footnote{University of Illinois at Urbana-Champaign, USA}\\
\texttt{\small johnqin2@illinois.edu}
}
\date{}
\begin{document}
\renewcommand{\arraystretch}{1.2}
\maketitle

\begin{abstract}
We study the problem of fairly and efficiently allocating indivisible chores among agents with additive disutility functions. We consider the widely-used envy-based fairness properties of EF1 and EFX, in conjunction with the efficiency property of fractional Pareto-optimality (fPO). Existence (and computation) of an allocation that is simultaneously EF1/EFX and fPO are challenging open problems, and we make progress on both of them. We show the existence of an allocation that is
\begin{itemize}
    \item EF1+fPO, when there are three agents,
    \item EF1+fPO, when there are at most two disutility functions,
    \item EFX+fPO, for three agents with bivalued disutilities.
\end{itemize}
These results are constructive, based on strongly polynomial-time algorithms. We also investigate non-existence and show that an allocation that is EFX+fPO need not exist, even 
for two agents.
\end{abstract}

\section{Introduction}
Discrete fair division has recently received significant attention due to its applications in a wide variety of multi-agent settings; see recent surveys \cite{aziz2020survey,walsh2020survey,aziz2022survey}. Given a set of indivisible items and a set of $n$ agents with diverse preferences, the goal is to find an allocation that is \emph{fair} (i.e., acceptable by all agents) and \emph{efficient} (i.e., non-wasteful). We assume that agents have additive valuations. The standard economic efficiency notion is Pareto-optimality (PO) and its strengthening fractional Pareto-optimality (fPO). Fairness notions based on \textit{envy} \cite{foleyEF} are most popular, where an allocation is said to be \emph{envy-free} (EF) if every agent weakly prefers her bundle to any other agent's bundle. Since EF allocations need not exist (e.g., dividing one item among two agents), its relaxations \emph{envy-free up to any item} (EFX) \cite{caragiannis2019mnwacm} and \emph{envy-free up to one item} (EF1) \cite{lipton,budish2011approxCEEI} are most widely used, where EF $\Rightarrow$ EFX $\Rightarrow$ EF1.

\begin{table*}[t]
\centering
\begin{tabular}{|c||c|c||c|c|}\hline
 \multirow{2}{*}{\bf Instance type} & \multicolumn{2}{c||}{\bf EF1+fPO} & \multicolumn{2}{c|}{\bf EFX+fPO}  \\\hhline{|~||-|-|-|-|}
 & Goods & Chores & Goods & Chores \\\hhline{|=#=|=#=|=|}
 General additive & \checkmark~BKV\cite{Barman18FFEA}, GM\cite{GargM21}$^\dagger$ & ? & \ding{55}~GM\cite{garg2021sagt} & \cellcolor{blue!25} \ding{55} (Thm. \ref{thm:efx-fpo-nonexistence}) \\\hhline{|-|-|-|-|-|}
 $n=3$ agents & \checkmark GM\cite{GargM21} & \cellcolor{blue!25}\checkmark (Thm.~\ref{thm:3agents}) & \cellcolor{blue!25} \ding{55} (Thm. \ref{thm:efx-fpo-nonexistence}) &
 \cellcolor{blue!25} \ding{55} (Thm. \ref{thm:efx-fpo-nonexistence})\\\hhline{|-|-|-|-|-|}
 Two-type & \cellcolor{blue!25}\checkmark (Rem.~\ref{rem:2types}) & \cellcolor{blue!25}\checkmark (Thm.~\ref{thm:2types}) & \cellcolor{blue!25} \ding{55} (Thm. \ref{thm:efx-fpo-nonexistence}) & \cellcolor{blue!25} \ding{55} (Thm. \ref{thm:efx-fpo-nonexistence})\\\hhline{|-|-|-|-|-|}
 Bivalued & \checkmark GM\cite{garg2021sagt} & \checkmark GM\cite{Garg_Murhekar_Qin_2022}, EPS\cite{ebadian2021bivaluedchores} & \checkmark GM\cite{garg2021sagt} & \cellcolor{blue!25}\checkmark $n=3$ (Thm. \ref{thm:bivalued3agents}) \\\hhline{|-|-|-|-|-|}
 2-ary & \checkmark GM\cite{GargM21} & \cellcolor{blue!25}\checkmark $k_i \ge m$ (Lem.~\ref{lem:BC_EF1_PO_2ary})  & \cellcolor{blue!25}\ding{55} (Thm. \ref{thm:efx-fpo-nonexistence}) & \cellcolor{blue!25}\ding{55} (Thm. \ref{thm:efx-fpo-nonexistence})\\\hline
 \end{tabular}
\caption{State-of-the-art for EF1/EFX+fPO allocation of indivisible items. \checkmark denotes existence/polynomial-time algorithm, \ding{55} denotes non-existence, ? denotes (non-)existence is unknown, $\dagger$ denotes no polynomial-time algorithm is known. Colored cells highlight our results.}\label{tab:results}
\end{table*}

Achieving both fairness and efficiency is utmost desirable because just an efficient allocation can be highly unfair, and similarly, just a fair allocation can be highly inefficient\footnote{Consider an example with 2 agents and 2 items where agent 1 prefers item 1 to 2 while agent 2 prefers item 2 to 1. Giving all items to one agent is an efficient but unfair allocation. Similarly, giving item 2 to agent 1 and item 1 to agent 2 is a fair, according to EFX/EF1, but inefficient allocation.}. However, attaining both may not even be possible because these are often conflicting requirements, which put hard constraints on the set of all feasible allocations. Moreover, the landscape of known existence and tractability results varies depending on the nature of the items. The items to be divided can be either goods (which provide utility) or chores (which provide disutility). 

For the case of goods, a series of works provided many remarkable results showing the existence of EF1+(f)PO allocations. There are two broad approaches. The first uses the concept of \textit{Nash welfare}, which is the geometric mean of agent utilities. Caragiannis et al.~\cite{caragiannis2019mnwacm} showed that the allocation with the maximum Nash welfare (MNW) is both EF1 and PO. However, computing MNW allocation is APX-hard~\cite{lee2015nsw-apx,garg2017nswhardness}, thus rendering this approach ineffective for computing an EF1+PO allocation in general. A notable special case is when agents have binary valuations; the MNW allocation can be computed in polynomial time for this case~\cite{darmann2014binary}. The efficiency notion of PO does not seem appropriate for fast computation because even checking if an allocation is PO is a coNP-hard problem~\cite{keijzer}. Therefore, for fast computation, we need to work with the stronger notion of fPO, which admits polynomial time verification.

The second approach achieves fairness while maintaining efficiency through a \emph{competitive equilibrium} \cite{Barman18FFEA,barman2019prop1po,GargM21,garg2021sagt} of a Fisher market. In a Fisher market, agents are endowed with some monetary budget, which they use to `buy' goods to maximize their utility. A competitive equilibrium is an allocation along with the prices of the goods in which each agent only owns goods that give them maximum `bang-for-buck', i.e., goods with the highest utility-to-price ratio. The latter property ensures that the resulting allocation is fPO. The idea is then to endow agents with fictitious budgets and maintain an allocation that is the outcome of a Fisher market while performing changes to the allocation, prices, and budgets to achieve fairness. Barman, Krishnamurthy, and Vaish~\cite{Barman18FFEA} used this approach to show that an EF1+fPO allocation exists and that an EF1+PO allocation can be computed in pseudo-polynomial time. Later, Garg and Murhekar~\cite{GargM21} showed that an EF1+fPO allocation can be computed in pseudo-polynomial time in general and in polynomial time for constantly many agents or when agents have $k$-ary valuations with constant $k$. For bivalued instances of goods, an EFX+fPO allocation was shown to be polynomial time computable \cite{garg2021sagt}. Designing a polynomial-time algorithm for computing an EF1+fPO allocation of goods remains a challenging open problem.

In contrast, the case of chores turns out to be much more difficult to work with, resulting in relatively slow progress despite significant efforts by many researchers. Neither of the above-mentioned approaches seems to be directly applicable in the setting of chores. 
Indeed, given the absence of a global welfare function like Nash welfare, even the existence of EF1+PO allocations for chores is open. Using competitive equilibria for chores remains a promising approach, which also guarantees fPO. In a Fisher market for chores, agents have a monetary expectation, i.e., a salary, which they aim to achieve by performing chores that have associated \textit{payments} instead of prices. The algorithms for computing an EF1+(f)PO allocation of goods (\cite{Barman18FFEA,GargM21}) use the CE framework and show termination via some potential function. The main difficulty in translating algorithms for goods to the chores setting seems to be that the price-rises and item transfers only increase the potential in the case of the goods. For chores, however, price (payment) changes and transfers do not push these potential functions in one direction as they do for goods, making it difficult to show termination.

Consequently, the existence of EF1 + (f)PO allocation for chores remains open except for the case of two agents~\cite{aziz2019chores}, bivalued instances~\cite{ebadian2021bivaluedchores,Garg_Murhekar_Qin_2022}, and two types of chores \cite{aziz2022twotypes}. 
The problem becomes significantly difficult when there are $n> 2$ agents. Table~\ref{tab:results} provides a summary of existing results that are relevant to our work. In this paper, we focus on the chores setting and make progress on the above-mentioned problems. Our first set of results show that an allocation that is 

\begin{itemize}[leftmargin=*]
\item EF1+fPO exists when there are three agents. Our algorithm uses the competitive equilibrium framework (CE) to maintain an fPO allocation while using the payments associated with chores to guide their transfer to reduce the envy between the agents. Our novel approach starts with one agent having the highest envy and then makes careful chore transfers unilaterally away from this agent while maintaining that the other two agents have bounded envy. 

\item EF1+fPO exists when there are two types of agents, where agents of the same type have the same preferences. This subsumes the result of Aziz et al.~\cite{aziz2019chores} computing EF1+fPO for two agents. We develop a novel approach combining the CE allocation of an appropriately constructed market with a round-robin procedure. Combining CE-based frameworks with envy-resolving algorithms may be an important tool in settling the problem in its full generality. Our approach also gives a similar result for the case of goods. Note that two types of agents strictly generalize the well-studied setting of identical agents~\cite{barman2018binarynsw,plaut2018efx}. Mahara~\cite{mahara21efx} showed that EFX, without fPO, exists for two types of agents in the case of goods through involved case analysis.

\item EFX+fPO exists when there are three agents with bivalued preferences, where each disutility value is one of two values. This improves the result of Zhou and Wu~\cite{zhou22efx} which shows that EFX exists in this case. Similar to~\cite{zhou22efx}, our algorithm is quite involved based on a case-by-case analysis. We first derive a simple algorithm for computing an EF1+fPO for bivalued preferences, which was recently shown to exist by~\cite{Garg_Murhekar_Qin_2022,ebadian2021bivaluedchores}. An interesting aspect of the simple algorithm is that it outputs an allocation that gives each agent a \emph{balanced} number of chores. We start from such a balanced EF1+fPO outcome and improve the guarantee from EF1 to EFX while maintaining the fPO property. Additionally, we show that EF1+PO exists for a class of 2-ary preferences, where each disutility value of an agent is one of two values, but these two values can be different for different agents. This class is not subsumed by bivalued instances.
\end{itemize}

All our existence results are accompanied by polynomial-time algorithms. Next, we investigate the non-existence of fair and efficient allocations and show that 
\begin{itemize}[leftmargin=*]
\item EFX+fPO need not exist when there are two agents with 2-ary disutility functions. Naturally, this also implies that an EFX+fPO allocation need not exist for two-type instances as well.
\end{itemize}

\subsection{Further Related Work}
The problems in their full generality have remained challenging open questions. Most progress has been made for special cases, providing a better picture of (non-)existence and (in)tractable classes of discrete fair division.

\paragraph{EF1.} Barman, Krishnamurthy, and Vaish~\cite{Barman18FFEA} showed that an EF1+PO allocation of goods is computable in pseudo-polynomial time. Garg and Murhekar~\cite{GargM21} showed algorithms for computing an EF1+fPO allocation in (i) pseudo-polynomial time for general additive goods, (ii) poly-time for $k$-ary instances with constant $k$, (iii) poly-time for constantly many agents. Barman, Krishnamurthy, and Vaish \cite{barman2018binarynsw} show the polynomial time computation of EF1+fPO allocation for binary instances of goods. For chores, it is not known if EF1+PO allocations exist. Recently, the existence and polynomial time computation of an EF1+(f)PO allocation was shown for two agents by Aziz et al. \cite{aziz2019chores} for bivalued chores by Garg, Murhekar, and Qin~\cite{Garg_Murhekar_Qin_2022}, and Ebadian, Peters, and Shah~\cite{ebadian2021bivaluedchores}, and for two types of chores by \cite{aziz2022twotypes}.

\paragraph{EFX.} Plaut and Roughgarden~\cite{plaut2018efx} showed that an EFX+PO allocation exists for the case of identical goods using the leximin mechanism. Garg and Murhekar~\cite{garg2021sagt} showed an EFX+fPO allocation can be computed in poly-time for bivalued goods, and non-existence for 3-valued instances. For both goods and chores, the existence of EFX allocations is a challenging open problem. Existence is known in the goods case for three agents \cite{chaudhury2020efx,AkramiCGMM22} and two types of agents \cite{mahara21efx}, and for two types of goods \cite{gorantla2022efx}. In the chores case, EFX allocations are known to exist for three bivalued agents \cite{zhou22efx}, or when there are two types of chores~\cite{aziz2022twotypes}.

\paragraph{Organization of the paper.} Section~\ref{sec:prelim} introduces relevant notation and definitions.  Sections~\ref{sec:3agents} and \ref{sec:2types} discuss our algorithms for computing an EF1+fPO allocation for three agents and two-type instances, respectively. Section~\ref{sec:bivalued3agents} presents an overview of our algorithm computing an EFX+fPO allocation for three bivalued agents, which is described and analyzed in full detail in Appendix~\ref{app:3bivaluedagents}. Appendix \ref{app:examples} contains illustrative examples.

\section{Notation and Preliminaries}\label{sec:prelim}
In a \textit{chore allocation} instance $(N, M, D)$, we are given a set $N = [n]$ of $n$ agents, a set $M = [m]$ of $m$ indivisible chores, and a set $D = \{d_i\}_{i \in [n]}$, where $d_i : 2^M \rightarrow \R_{\ge 0}$ is agent $i$'s \textit{disutility} or \textit{cost} function over the chores. We assume that $d_i(\varnothing) = 0$. We let $d_i(j)$ denote the disutility agent $i$ incurs from chore $j$ and assume disutility functions are additive, so that for every $i \in N$ and $S \subseteq M$, $d_i(S) = \sum_{j \in S} d_i(j)$. In addition to general chore allocation instances, we especially consider the following classes. We call a chore allocation instance a:

\begin{itemize}
\item \textit{Three agent instance} if there are $n=3$ agents.
\item \textit{Two-type instance} if there exist disutility functions $d_1$ and $d_2$ so that for all $i \in N$, $d_i \in \{d_1, d_2\}$. That is, every agent has one of two unique disutility functions.
\item \textit{Bivalued instance} if there exist $a, b \in \R_+$ so that for all $i \in N$ and $j \in M$ we have $d_i(j) \in \{a, b\}$. Here, rather than two disutility functions, we have two chore costs. 
\item \textit{2-ary instance} if for each $i \in N$ there exist $a_i, b_i \in \R_+$ so that for all $j \in M$ we have $d_i(j) \in \{a_i, b_i\}$. Clearly, 2-ary instances strictly generalize bivalued instances. However, neither is comparable with two-type instances.
\end{itemize}

An \textit{allocation} $\x = (\x_1, \x_2, \ldots, \x_n)$ is an $n$-partition of the chores where agent $i$ receives bundle $\x_i \subseteq M$ and incurs disutility $d_i(\x_i)$. A \textit{fractional allocation} $\x \in [0, 1]^{n \times m}$, where $\x_{ij} \in [0, 1]$ denotes the fraction of chore $j$ given to agent $i$, allows for the chores to be divided. Here $d_i(\x_i) = \sum_{j \in M} d_i(j) \cdot x_{ij}$. We will assume that allocations are integral unless explicitly stated otherwise.

% \paragraph{Fairness.}
We next define our fairness notions. An allocation $\x$ is said to be:
\begin{itemize}
\item \textit{Envy-free} if for all $i, h \in N$, $d_i(\x_i) \leq d_i(\x_h)$, i.e., every agent weakly prefers her own bundle to others' bundles. 
\item \textit{Envy-free up to any chore} (EFX) if for all $i,h\in N$, $d_i(\x_i \setminus j) \leq d_i(\x_h)$ for any $j\in \x_i$, i.e., every agent weakly prefers her own bundle to any other agent's bundle after removing her own easiest (least disutility) chore. 
\item \textit{Envy-free up to one chore} (EF1) if for all $i,h\in N$, $d_i(\x_i \setminus j) \leq d_i(\x_h)$ for some $j\in \x_i$, i.e., every agent weakly prefers her own bundle to any other agent's bundle after removing her own hardest (highest disutility) chore. We use $d_{i_{-1}}(S)$ to denote $\min_{j \in S} d_i(S \setminus j)$ for $S\subseteq M$. Thus $\x$ is EF1 if $\forall i, h \in N$, $d_{i_{-1}}(\x_i) \leq d_i(\x_h)$.
\end{itemize}

We now define the efficiency notions of Pareto optimality (PO) and fractional Pareto optimality (fPO). An allocation $\y$ dominates an allocation $\x$ if for all $i \in N$, $d_i(\y_i) \leq d_i(\x_i)$, and there exists $h \in N$ such that $d_h(\y_h) < d_h(\x_h)$. An allocation is then PO if it is not dominated by any other allocation. Similarly, an allocation is fPO if it is not dominated by any fractional allocation. Note that an fPO allocation is necessarily PO, but not vice-versa.

Two important concepts we use in our algorithms are \textit{competitive equilibrium} and \textit{Fisher markets}. In the Fisher market setting we attach payments $\p = (p_1, \ldots, p_m)$ to the chores. Agents perform chores in exchange for payment, with each agent $i$ aiming to earn her \textit{minimum payment} $e_i \geq 0$. Given a (fractional) allocation $\x$ and a set of payments $\p$, the \textit{earning} of an agent $i$ under $(\x, \p)$ is given by $\p(\x_i) = \sum_{j \in M} p_j \cdot x_{ij}$. For each agent $i$, we define the \textit{pain-per-buck} ratio $\alpha_{ij}$ of chore $j$ as $\alpha_{ij} = d_i(j) / p_j$ and the \textit{minimum-pain-per-buck} (MPB) ratio $\alpha_i = \min_{j \in M} \alpha_{ij}$. Further, we let $\mpb_i = \{j \in M \mid d_i(j) / p_j = \alpha_i\}$ denote the set of chores which are MPB for agent $i$ for payments $\p$. 

We say that $(\x, \p)$ is a \textit{competitive equilibrium} if (i) for all $j \in M$, $\sum_{i \in N} x_{ij} = 1$, i.e., all chores are completely allocated, (ii) for all $i \in N$, $\p(\x_i) = e_i$, i.e., each agent receives her minimum payment, and (iii) for all $i \in N$, $\x_i \subseteq \mpb_i$, i.e., agents receive only chores which are MPB for them. Competitive equilibria are known to guarantee economic efficiency via the First Welfare Theorem \cite{mas1995microeconomic}, i.e., for a competitive equilibrium $(\x, \p)$, the allocation $\x$ is fPO.

Given a competitive equilibrium $(\x, \p)$ with integral allocation $\x$, we let $\p_{-1}(\x_i)$ denote the payment agent $i$ receives from $\x_i$ excluding her highest paying chore. That is, $\p_{-1}(\x_i) = \min_{j \in \x_i} \p(\x_i \setminus j)$. We say that $(\x, \p)$ is \textit{payment envy-free up to one chore} (pEF1) if for all $i, h \in N$ we have $\p_{-1}(\x_i) \leq \p(\x_h)$. We say that agent $i$ \textit{pEF1-envies} $h$ if $\p_{-1}(\x_i) > \p(\x_h)$. The following lemma shows that the pEF1 condition is, in fact, a strengthening of EF1. 

\begin{lemma}\label{lem:pEF1impliesEF1}
Let $(\x,\p)$ be an integral competitive equilibrium. If $(\x,\p)$ is pEF1, then $\x$ is EF1+fPO.
\end{lemma}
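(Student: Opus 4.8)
The plan is to show directly that the pEF1 condition on payments translates into the EF1 condition on disutilities, using the fact that in a competitive equilibrium every agent holds only chores that are minimum-pain-per-buck for her. The fPO part is immediate from the First Welfare Theorem (stated in the excerpt), so the entire work is in deriving EF1.

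First I would fix two agents $i, h \in N$ and recall that since $(\x,\p)$ is a competitive equilibrium, $\x_i \subseteq \mpb_i$, which means every chore $j \in \x_i$ satisfies $d_i(j) = \alpha_i \cdot p_j$. Summing over a subset $S \subseteq \x_i$ gives the clean identity $d_i(S) = \alpha_i \cdot \p(S)$. In particular, taking $j^* \in \x_i$ to be the highest-\emph{paying} chore in $\x_i$ (the one achieving $\p_{-1}(\x_i) = \p(\x_i \setminus j^*)$), we get $d_i(\x_i \setminus j^*) = \alpha_i \cdot \p_{-1}(\x_i)$. Since removing \emph{some} chore suffices for EF1, it is enough to bound $d_i(\x_i \setminus j^*)$, and note that $d_{i_{-1}}(\x_i) = \min_{j \in \x_i} d_i(\x_i \setminus j) \le d_i(\x_i \setminus j^*)$ because the minimum is over all chores.

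Next I would handle the bundle $\x_h$. Here I cannot use equality, since the chores in $\x_h$ are MPB for agent $h$, not necessarily for agent $i$; instead I use the inequality direction of the MPB definition: for \emph{every} chore $j$, $\alpha_{ij} = d_i(j)/p_j \ge \alpha_i$, hence $d_i(j) \ge \alpha_i \cdot p_j$. Summing over $j \in \x_h$ yields $d_i(\x_h) \ge \alpha_i \cdot \p(\x_h)$. Now chain everything together: by the pEF1 hypothesis $\p_{-1}(\x_i) \le \p(\x_h)$, so
\[
d_{i_{-1}}(\x_i) \le d_i(\x_i \setminus j^*) = \alpha_i \cdot \p_{-1}(\x_i) \le \alpha_i \cdot \p(\x_h) \le d_i(\x_h),
\]
which is exactly the EF1 condition for the ordered pair $(i,h)$. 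Since $i,h$ were arbitrary, $\x$ is EF1; combined with fPO from the competitive equilibrium, $\x$ is EF1+fPO.

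The only mild subtlety — the ``main obstacle,'' though it is minor — is the edge case $\alpha_i = 0$, which would force $d_i(j) = 0$ for all $j \in \x_i$ and make the equality $d_i(S) = \alpha_i \p(S)$ degenerate; but then $d_{i_{-1}}(\x_i) = 0 \le d_i(\x_h)$ trivially, so the conclusion still holds. One should also note $\alpha_i$ is well-defined and finite as long as all payments $p_j > 0$, which is part of the competitive equilibrium setup (every chore must be done by someone earning their minimum payment); if some $p_j = 0$ one argues that chore contributes $0$ to both sides. I would fold these observations into a single sentence rather than belabor them.
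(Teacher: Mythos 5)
Your proof is correct and follows essentially the same route as the paper's: use $\x_i\subseteq\mpb_i$ to get $d_{i_{-1}}(\x_i)\le\alpha_i\,\p_{-1}(\x_i)$, use the MPB inequality to get $d_i(\x_h)\ge\alpha_i\,\p(\x_h)$, chain these through the pEF1 hypothesis, and invoke the First Welfare Theorem for fPO. The extra care about the edge cases $\alpha_i=0$ and $p_j=0$ is harmless but unnecessary given the paper's standing assumption that $d_i(j)>0$ and the initialization $p_j=d_i(j)$.
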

\begin{proof}
As $(\x,\p)$ is pEF1, for all agents $i,h \in N$ we have $\p_{-1}(\x_i)\leq \p(\x_h)$. Since an agent $i$ has only MPB chores, i.e., $\x_i \subseteq \mpb_i$,  $d_{i_{-1}}(\x_i) = \alpha_i \p_{-1}(\x_i)$ and $d_i(\x_h) \geq \alpha_i \p(\x_h)$, where $\alpha_i$ is the MPB ratio of agent $i$. This gives us $d_{i_{-1}}(\x_i) = \alpha_i \p_{-1}(\x_i) \leq \alpha_i \p(\x_h) \leq d_i(\x_h)$, showing that $\x$ is EF1. Additionally, the First Welfare Theorem \cite{mas1995microeconomic} implies that the allocation $\x$ is fPO for a competitive equilibrium $(\x, \p)$.
\end{proof}

Lemma~\ref{lem:pEF1impliesEF1} suggests that in order to compute an EF1+fPO allocation, one can search instead for a pEF1+fPO allocation, if possible. To do this, a natural approach is to start with an integral competitive equilibrium $(\x,\p)$, and perform changes to both the allocation $\x$ and associated payments $\p$ until the allocation is EF1 (or pEF1), while maintaining that $(\x,\p)$ remains an integral competitive equilibrium. The latter ensures that $\x$ is always fPO. We, therefore, perform a transfer of a chore $j \in \x_h$ from an agent $h$ to an agent $i$ only if $j\in \mpb_i$. If we wish to take a chore away from an agent $h$ to reduce her envy and yet for all agents $i$ we have $\x_h \cap \mpb_h = \emptyset$, then no agent $i$ can receive a chore of $h$. In such a case, we must adjust the payments to ensure some chore of $h$ enters the MPB set of another agent $i$. This can be done in one of two ways. The first way is to \textit{reduce the payments} of chores owned by a subset of agents other than $h$, done in a manner that does not violate the MPB condition of any agent until the agents undergoing \textit{payment drops} find the chores of $h$ attractive enough to enter their MPB set. Alternatively, one could \textit{raise the payments} of chores of $h$, along with possibly some other chores to ensure that the MPB condition of no agent is violated until an agent who did not undergo a \textit{payment rise} finds the chores of $h$ attractive enough to enter their MPB set. Once there is an agent $i$ such that there is a chore $j\in\x_h \cap \mpb_i$, we can transfer $j$ from $h$ to $i$. Our goal is to perform chore transfers facilitated via payment drops or raises to modify an initial allocation to one that is EF1 (or pEF1).

Two agents will be of particular interest to execute the above ideas. An agent $\ell$ is called a \textit{least earner} (LE) among agent set $A$ if $\ell \in \s{argmin}_{i \in A} \p(\x_i)$. An agent $b$ is called a \textit{big earner} (BE) among agent set $A$ if $b \in \s{argmax}_{i \in A} \p_{-1}(\x_i)$, i.e., among agents in $A$, $b$ earns the highest payment from her bundle of chores after every agent removes their highest paying chore. We call an agent the \textit{global} least/big earner when $A = N$. We will assume that a mentioned least earner or big earner is global unless explicitly noted otherwise. The next lemma shows the importance of the BE and LE agents.

\begin{lemma}\label{lem:BEenvyLE}
An integral competitive equilibrium $(\x,\p)$ is pEF1 if and only if a big earner $b$ does not pEF1-envy a least earner $\ell$.
\end{lemma}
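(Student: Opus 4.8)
The plan is to prove both directions, with the forward direction being trivial and the reverse direction carrying all the content.

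\textit{The easy direction.} If $(\x,\p)$ is pEF1, then by definition $\p_{-1}(\x_i) \le \p(\x_h)$ for \emph{all} pairs $i,h \in N$, so in particular taking $i = b$ a big earner and $h = \ell$ a least earner gives $\p_{-1}(\x_b) \le \p(\x_\ell)$, i.e., $b$ does not pEF1-envy $\ell$. Nothing else is needed here.

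\textit{The hard direction.} Suppose $b$ does not pEF1-envy $\ell$, i.e., $\p_{-1}(\x_b) \le \p(\x_\ell)$. I want to show $(\x,\p)$ is pEF1, i.e., $\p_{-1}(\x_i) \le \p(\x_h)$ for every pair $i,h$. The key observation is the chain of inequalities coming from the extremal definitions of $b$ and $\ell$: for any $i \in N$, since $b$ maximizes $\p_{-1}(\cdot)$ over all agents, we have $\p_{-1}(\x_i) \le \p_{-1}(\x_b)$; and for any $h \in N$, since $\ell$ minimizes $\p(\cdot)$ over all agents, we have $\p(\x_\ell) \le \p(\x_h)$. Stitching these together with the hypothesis yields
\[
\p_{-1}(\x_i) \;\le\; \p_{-1}(\x_b) \;\le\; \p(\x_\ell) \;\le\; \p(\x_h)
\]
for every $i, h \in N$, which is exactly the pEF1 condition. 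So the reverse direction is really just transitivity sandwiched between the two extremal properties.

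\textit{Main obstacle / subtlety.} There is essentially no obstacle — the argument is a two-line sandwich — but one should be careful that the definitions of BE and LE are stated in terms of $\p_{-1}(\cdot)$ and $\p(\cdot)$ respectively (not, say, both in terms of $\p(\cdot)$), so that the middle inequality $\p_{-1}(\x_b) \le \p(\x_\ell)$ is precisely the ``no pEF1-envy'' hypothesis and the two outer inequalities are exactly the defining $\s{argmax}$/$\s{argmin}$ properties. One might also note for clarity that the statement is vacuous/automatic when $b = \ell$ or more generally when a single agent could play both roles, but the chain above handles all cases uniformly without needing to case-split. The lemma is useful downstream precisely because it reduces checking the global pEF1 property (a condition over all $\binom{n}{2}$ ordered pairs) to checking a single pair, which is what the algorithms will exploit.
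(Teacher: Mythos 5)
Your proof is correct and matches the paper's argument exactly: the forward direction is immediate, and the reverse direction is the same sandwich $\p_{-1}(\x_i) \le \p_{-1}(\x_b) \le \p(\x_\ell) \le \p(\x_h)$ using the extremal definitions of the big and least earners together with the no-envy hypothesis. Nothing is missing.
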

\begin{proof}
It is clear that if $(\x, \p)$ is pEF1 then $b$ does not pEF1-envy $\ell$. We need only show that if $b$ does not pEF1-envy $\ell$, then $(\x, \p)$ is pEF1. For all $i, h \in N$, we have that $\p_{-1}(\x_i) \leq \p_{-1}(\x_b)$ by the definition of BE, and that $\p(\x_{\ell}) \leq \p(\x_h)$ by the definition of LE. Putting these together with $\p_{-1}(\x_b) \le \p(\x_h)$ since $b$ does not pEF1-envy $\ell$, we get that $\p_{-1}(\x_i) \le \p(\x_h)$. Thus $i$ does not pEF1-envy $h$ and $(\x, \p)$ is pEF1.
\end{proof}

\section{EF1 + fPO for Three Agents}\label{sec:3agents}

In this section, we prove the first main result of our paper. 

\begin{theorem}\label{thm:3agents}
Given a chore allocation instance $(N,M,D)$ with three agents, an EF1 + fPO allocation exists. Furthermore, it can be computed in strongly polynomial time.
\end{theorem}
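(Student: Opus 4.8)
The plan is to follow the competitive-equilibrium roadmap sketched right before the theorem: maintain an integral competitive equilibrium $(\x,\p)$ and, by Lemmas~\ref{lem:pEF1impliesEF1} and~\ref{lem:BEenvyLE}, reduce the task to eliminating the pEF1-envy of a big earner $b$ toward a least earner $\ell$. First I would initialize with \RR-type or any balanced integral competitive equilibrium obtained by scaling an auxiliary Fisher market so that all payments are rational with a common denominator; the First Welfare Theorem gives fPO, and initializing payments from the MPB structure guarantees $\x_i \subseteq \mpb_i$ throughout. With $n=3$, write the agents as $\{b, m, \ell\}$, the big earner, the ``middle'' agent, and the least earner with respect to $\p$. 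The overall strategy is to keep $\ell$ and (appropriately) $m$ pEF1-satisfied and only ever move chores \emph{away} from the agent with the highest payment-envy, so progress is monotone.

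The core loop has two ingredients. (1) \textbf{Chore transfers when possible:} if some chore $j \in \x_b$ lies in $\mpb_i$ for $i \in \{m,\ell\}$, transfer $j$ from $b$ to that agent; this keeps the CE property (transferred chore is MPB for the receiver), lowers $\p(\x_b)$, and we show it does not create new ``bad'' envy because $b$ was the unique source of pEF1-envy by Lemma~\ref{lem:BEenvyLE}. (2) \textbf{Payment adjustment when no such transfer exists:} following the excerpt, either \emph{drop} the payments of chores held by $\{m,\ell\}$ (or a suitable subset), uniformly scaling by a factor, until some chore of $b$ enters $\mpb_m$ or $\mpb_\ell$ without violating any agent's MPB condition; or symmetrically \emph{raise} the payments on $\x_b$. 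One should choose the variant (drop vs.\ raise) and the subset of agents so that the quantity we track -- e.g., the payment-envy $\p_{-1}(\x_b)-\p(\x_\ell)$, or a lexicographic potential on the sorted payment vector -- does not increase, and after finitely many steps strictly decreases. The key structural claim to establish is an \emph{invariant}: at the start of each iteration, $m$ and $\ell$ do not pEF1-envy anyone (so only $b$ can), and after each transfer/adjustment this invariant is restored, possibly after re-designating who plays the role of $b$. For three agents the case analysis is small: the roles can only permute among three agents, and the ``new big earner'' can only be the former $m$ or $\ell$, which bounds how the potential can behave.

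The two things that need care -- and where I expect the main obstacle to lie -- are (a) \textbf{termination / running time}, and (b) \textbf{not over-shooting when transferring the heaviest chore}. For (a), the price-adjustment steps do not obviously push any single potential in one direction (this is precisely the difficulty the authors flagged for chores), so the argument must exploit $n=3$: one shows that between consecutive ``structural'' events the payments are determined by a small linear system with bounded-bit-size coefficients, so there are only polynomially many breakpoints, and a carefully chosen potential (e.g., number of agents $i$ with $\p(\x_i)\le \p(\x_b)$ together with $\p(\x_b)$, ordered lexicographically, or the integer $|\x_b|$ when ties are handled) strictly decreases across events; strong polynomiality then follows from bounding the number of distinct payment ``configurations.'' For (b), when we move a chore from $b$ we must ensure $b$ does not then become pEF1-envied \emph{from below} by $\ell$ in a way that cannot be undone -- this is handled by always moving chores out of the \emph{current} highest-envy agent and showing $\p_{-1}(\x_b)$ after the move is still $\ge \p(\x_\ell)$ only transiently, i.e., the new envying agent has strictly smaller potential. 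I would structure the write-up as: (i) setup and initial CE; (ii) the invariant and the transfer rule with its correctness lemma; (iii) the payment-drop and payment-raise subroutines with the MPB-preservation lemma; (iv) the potential-decrease lemma driving termination; (v) assembling these into the algorithm and the strong-polynomial bound, concluding by Lemma~\ref{lem:pEF1impliesEF1} that the output is EF1+fPO.
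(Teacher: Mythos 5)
There is a genuine gap, and it sits exactly where you say you expect the obstacle to lie: termination. Your proposal reduces the problem to the CE framework correctly (Lemmas~\ref{lem:pEF1impliesEF1} and~\ref{lem:BEenvyLE} are the right reductions), but the argument that drives the whole proof in the paper is an \emph{asymmetric initialization} that you do not use and whose consequence you do not recover. The paper gives \emph{all} of $M$ to one fixed agent $i$ with $p_j = d_i(j)$, and then designs the transfer rules so that $i$'s chores never undergo a payment drop and $i$ never receives a chore. This makes $|\x_i|$ a genuinely monotone potential: $i$ must cease to be the big earner before losing all $m$ chores, and between consecutive losses of a chore by $i$ there are only polynomially many payment drops and $h\to\ell$ transfers. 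Starting instead from a ``balanced'' CE, as you propose, destroys this monotonicity; the candidate potentials you list ($\p(\x_b)$, the count of agents with $\p(\x_i)\le\p(\x_b)$, $|\x_b|$ under re-designation of $b$) are precisely the kind of quantities the authors observe are \emph{not} monotone for chores, and you do not verify any of them. The ``polynomially many breakpoints of a linear system with bounded bit-size'' idea would at best give pseudo-polynomial or weakly polynomial time, not the strongly polynomial bound claimed.

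A second, smaller gap is the invariant-restoration mechanism. You assert that after each transfer the invariant ``only $b$ pEF1-envies'' is restored, but transferring a chore from $b$ to the middle agent $m$ can make $m$ pEF1-envy $\ell$, and nothing in your rule prevents this. The paper's algorithm handles it by a strict priority order: transfer $b\to\ell$ if possible; otherwise transfer $h\to\ell$ only when $h$ pEF1-envies $\ell$ via a chore in $\x_h\cap\mpb_\ell$; and only allow $b\to h$ when such a chore exists in $\x_h\cap\mpb_\ell$ as a one-step repair if $h$ starts to pEF1-envy $\ell$ after the gain. The correctness then needs the case analyses of Lemmas~\ref{lem:3agents_LE_change} and~\ref{lem:3agentsBE} (in particular the three-way case split on who is BE/LE after the final two transfers), none of which is replaced by anything concrete in your outline. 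In short, the skeleton is right but the two load-bearing ideas --- the all-to-one-agent initialization that yields a monotone potential, and the specific transfer priority that keeps the other two agents mutually pEF1 --- are missing.
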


We prove Theorem~\ref{thm:3agents} by showing that Algorithm~\ref{alg:3agents} computes an EF1 + fPO allocation in polynomial time. Algorithm~\ref{alg:3agents} begins by allocating the entire set of chores $M$ to an arbitrarily chosen agent $i \in N$, with payments set so that $p_j = d_i(j)$, giving $\alpha_i = 1$.\footnote{We can assume w.l.o.g. that $d_i(j) > 0$ for all $i,j$. Otherwise, if $d_i(j)=0$ for some $i,j$, we can simply allocate $j$ to $i$ and remove $j$ from further consideration. It is easy to check that doing so does not affect EF1 or fPO properties.} This gives us an initial competitive equilibrium $(\x, \p)$ where $i$ is the big earner (BE). We define agent $\ell$ to be the least earner (LE) and agent $h$ to be the \textit{middle earner} (ME), i.e., the agent who is neither the big earner nor the least earner. In the initial allocation, $\ell$ and $h$ are arbitrarily chosen after the BE $i$ is chosen.

\begin{algorithm}[!t]
\caption{Computing an EF1+fPO allocation for 3 agents}\label{alg:3agents}
\textbf{Input:} Fair division instance $(N,M,D)$ with $\lvert N \rvert = 3$ \\
\textbf{Output:} An integral allocation $\x$
\begin{algorithmic}[1]
\State $\x \gets$ give $M$ to some agent $i\in\{1,2,3\}$ 
\For{$j \in M$}
    \State $p_j \gets d_i(j)$
\EndFor
\While{$\x$ is not EF1}
    \State $b \gets \s{argmax}_{k \in N} \p_{-1}(\x_k)$ \Comment{Big earner}
    \State $\ell \gets \s{argmin}_{k \in N} \p(\x_k)$ \Comment{Least earner}
    \State $h \gets k \in N \setminus \{b, \ell\}$ \Comment{Middle earner}
    \If{$\exists j \in \x_b \cap \s{MPB}_{\ell}$} \Comment{A chore can be potentially transferred from $b$ to $\ell$}  
        \State $\x_b \gets \x_b \setminus j$
        \State $\x_{\ell} \gets \x_{\ell} \cup j$
    \ElsIf{$\exists j \in \x_h \cap \s{MPB}_{\ell}$} \Comment{A chore can be potentially transferred from $h$ to $\ell$}
        \If{$\p(\x_h \setminus j) > \p(\x_{\ell})$} \Comment{$h$ pEF1-envies $\ell$ w.r.t. any chore $j\in \x_h\cap \mpb_\ell$ }
            \State $\x_h \gets \x_h \setminus j$
            \State $\x_{\ell} \gets \x_{\ell} \cup j$ 
        \ElsIf{$\exists j' \in \x_b \cap \s{MPB}_h$} \Comment{A chore can be potentially transferred from $b$ to $h$}
            \State $\x_b \gets \x_b \setminus j'$
            \State $\x_h \gets \x_h \cup j'$
        \Else \Comment{No chore can be transferred from $b$ to $h$ or $\ell$}
            \State $\beta \gets \max_{i \in \{\ell, h\}, j \in \x_b} \frac{\alpha_i}{d_i(j)/p_j}$
            \For{$j \in \x_{\ell} \cup \x_h$}
                \State $p_j \gets p_j \cdot \beta$ \Comment{Lower payments until a chore of $b$ is MPB for $\ell$ or $h$}
            \EndFor
        \EndIf
    \Else  \Comment{No chore can be transferred from $b$ or $h$ to $\ell$}
        \State $\beta \gets \max_{j \in \x_b \cup \x_h} \frac{\alpha_{\ell}}{d_{\ell}(j)/p_j}$
            \For{$j \in \x_{\ell}$}
                \State $p_j \gets p_j \cdot \beta$ \Comment{Lower payments until a chore of $b$ or $h$ is MPB for $\ell$}
            \EndFor        
    \EndIf
\EndWhile
\State \Return $\x$
\end{algorithmic}
\end{algorithm}

At a high level, Algorithm~\ref{alg:3agents} maintains a competitive equilibrium where the initially chosen agent $i$ remains the big earner and transfers chores away from $i$. As a result, $i$'s envy towards the other two agents decreases, and eventually, $i$ must cease to be the big earner. We argue that we arrive at an EF1 allocation almost immediately when this happens. While a chore is transferred away from the big earner $b$ (who is $i$ initially and for most of the algorithm) to either the LE $\ell$ or ME $h$, we must carefully perform additional chore transfers between the LE and ME to ensure they remain EF1 w.r.t. each other. To do this, we perform transfers in a very specific manner. 

First, we check if a chore transfer is possible from $b$ to $\ell$ directly (Line 8), and if so, we make the transfer (Lines 9-10). If not, we check if there is a chore $j\in \x_h \cap \mpb_\ell$ that can be potentially transferred from $h$ to $\ell$ (Line 11). If $h$ pEF1-envies $\ell$ w.r.t. a chore $j\in \x_h\cap \mpb_\ell$, then $j$ is transferred from $h$ to $\ell$. If $h$ does not pEF1-envy $\ell$, and if a chore $j'$ can be transferred from $b$ to $h$, then we perform the transfer (Lines 15-17). If there is no such chore $j'$, then the payments of chores owned by both $\ell$ and $h$ are dropped until a chore of $b$ joins one of their MPB sets (Lines 19-21). Finally, if no chore of either $h$ or $b$ can be transferred to, we lower payments of chores of $\ell$ until a chore of $h$ or $b$ joins the MPB set of $\ell$ (Lines 23-25).

Thus, Algorithm~\ref{alg:3agents} makes progress towards obtaining an EF1 allocation by reallocating chores while maintaining the following key properties.
\begin{enumerate}[label=(\roman*)]
    \item Agent $i$ neither gains chores nor experiences payment decreases.
    \item If agent $i$ ceases to be the big earner, an EF1 allocation is found in at most one chore transfer.
    \item The allocation is always fPO. This is done by performing payment decreases as necessary to maintain a competitive equilibrium and ensuring chore transfers do not violate it.
\end{enumerate}
We use Property (i) to bound the number of steps for $i$ to cease being the big earner, after which Property (ii) gives us an EF1 allocation. Property (iii) then guarantees that our EF1 allocation is additionally fPO. We start with two preliminary lemmas.

\begin{lemma}\label{lem:BE_h_LE}
Given an integral competitive equilibrium $(\x,\p)$ with big earner $b$ and least earner $\ell$, for all agents $k \in N$, $\ell$ does not pEF1-envy $k$ and $k$ does not pEF1-envy $b$.
\end{lemma}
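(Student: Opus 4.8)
The statement to prove is Lemma~\ref{lem:BE_h_LE}: for an integral competitive equilibrium $(\x,\p)$ with big earner $b$ and least earner $\ell$, every agent $k$ satisfies that $\ell$ does not pEF1-envy $k$ and $k$ does not pEF1-envy $b$.

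\textbf{Proof plan.} The plan is to unpack the two claims directly from the definitions of big earner, least earner, and pEF1-envy. For the first claim, I would observe that by definition of the least earner, $\p(\x_\ell) \le \p(\x_k)$ for every agent $k$. Since dropping the highest-paying chore only decreases an agent's earnings, $\p_{-1}(\x_\ell) \le \p(\x_\ell) \le \p(\x_k)$, so $\ell$ does not pEF1-envy $k$. For the second claim, by definition of the big earner, $b \in \s{argmax}_{i \in N} \p_{-1}(\x_i)$, so $\p_{-1}(\x_k) \le \p_{-1}(\x_b)$ for every $k$; again using $\p_{-1}(\x_b) \le \p(\x_b)$, we get $\p_{-1}(\x_k) \le \p(\x_b)$, which says exactly that $k$ does not pEF1-envy $b$.

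\textbf{Anticipated obstacle.} There is essentially no obstacle here; the only point requiring a moment's care is the inequality $\p_{-1}(\x_i) \le \p(\x_i)$, i.e., that removing the highest-paying chore from a bundle does not increase its total payment. This holds because all payments $p_j$ are nonnegative (chores have nonnegative payments in the Fisher market for chores), so $\p(\x_i \setminus j) = \p(\x_i) - p_j \le \p(\x_i)$ for the removed chore $j$; and in the edge case $\x_i = \varnothing$ the quantity $\p_{-1}(\x_i)$ should be read as $0 = \p(\x_i)$. I would state this nonnegativity observation explicitly, then chain the inequalities as above. The whole argument is three or four lines.
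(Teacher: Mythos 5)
Your proof is correct and follows essentially the same argument as the paper: chain $\p_{-1}(\x_\ell) \le \p(\x_\ell) \le \p(\x_k)$ using the least-earner definition, and $\p_{-1}(\x_k) \le \p_{-1}(\x_b) \le \p(\x_b)$ using the big-earner definition. Your extra remark justifying $\p_{-1}(\x_i)\le \p(\x_i)$ via nonnegativity of payments (and the empty-bundle convention) is a harmless elaboration of a step the paper takes for granted.
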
 
\begin{proof}
Consider an arbitrary agent $k\in N$. We have that $\ell$ does not pEF1-envy $k$ as $\p_{-1}(\x_{\ell}) \leq \p(\x_{\ell}) \leq \p(\x_k)$ by the definition of least earner. Additionally, $k$ does not pEF1-envy $b$ as $\p_{-1}(\x_k) \leq \p_{-1}(\x_b) \leq \p(\x_b)$ by the definition of big earner. 
\end{proof}

\begin{lemma}\label{lem:BEtoLE}
Given an integral competitive equilibrium $(\x, \p)$ with big earner $b$ and least earner $\ell$, if after a single chore transfer $b$ becomes a least earner or $\ell$ becomes a big earner, then the resulting allocation must be pEF1.
\end{lemma}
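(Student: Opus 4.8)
The plan is to reduce the claim to Lemma~\ref{lem:BEenvyLE} and then run a short case analysis on the single transfer. Write $(\x',\p)$ for the allocation after the transfer — payments do not change — and say the transfer moves a chore $c$ from agent $p$ to agent $q$, leaving the third agent $r$ untouched. By Lemma~\ref{lem:BEenvyLE} it suffices to exhibit a big earner $b'$ of $\x'$ and a least earner $\ell'$ of $\x'$ with $\p_{-1}(\x'_{b'})\le \p(\x'_{\ell'})$. Three ingredients will be used repeatedly: (a) the definitions of big/least earner for $(\x,\p)$, i.e.\ $\p_{-1}(\x_k)\le \p_{-1}(\x_b)$ and $\p(\x_\ell)\le \p(\x_k)$ for all $k$; (b) elementary properties of $\p_{-1}(\cdot)$ — it weakly increases when a chore is added to a bundle and weakly decreases when one is removed, $\p_{-1}(S)\le\p(S)$, $\p_{-1}(S)\le\p(S)-p_j$ for $j\in S$, and $\p_{-1}(S\cup\{j\})\le\p(S)$ for $j\notin S$; and (c) $\p_{-1}(\x_b)>\p(\x_\ell)$, i.e.\ $b$ pEF1-envies $\ell$ in $(\x,\p)$ — by Lemma~\ref{lem:BEenvyLE} this is exactly the assertion that $(\x,\p)$ is not already pEF1, which is the only situation in which a transfer is performed and hence the only one that needs treatment.

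For the hypothesis that $b$ becomes a least earner of $\x'$, I would take $\ell'=b$. Using (b) and (c) one first rules out every configuration except those in which $c$ is transferred \emph{to} $\ell$ (from $b$ or from $h$): if $b$ received $c$ then $\p(\x'_b)>\p(\x_b)\ge\p(\x_\ell)\ge\p(\x'_\ell)$; if $b$ gave $c$ to $h$ then, with $\ell$ untouched, $\p(\x'_b)=\p(\x_b)-p_c\ge\p_{-1}(\x_b)>\p(\x_\ell)=\p(\x'_\ell)$; if $\ell$ gave away a chore then $\p(\x'_b)\ge\p_{-1}(\x_b)>\p(\x_\ell)>\p(\x'_\ell)$ — each contradicting that $b$ is a least earner of $\x'$. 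In the surviving cases $\p(\x'_b)\ge\p_{-1}(\x_b)>\p(\x_\ell)$, and one checks $\p_{-1}(\x'_k)\le\p(\x'_b)$ for every $k$: for $k=\ell$, $\p_{-1}(\x'_\ell)\le\p(\x_\ell)<\p(\x'_b)$; for $k=h$, $\p_{-1}(\x'_h)\le\p_{-1}(\x_h)\le\p_{-1}(\x_b)\le\p(\x'_b)$; for $k=b$, this is immediate. Since $b$ is a least earner of $\x'$, these bounds give $\p_{-1}(\x'_k)\le\p(\x'_b)\le\p(\x'_{k'})$ for all $k,k'$, so $(\x',\p)$ is pEF1.

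For the hypothesis that $\ell$ becomes a big earner of $\x'$, I may additionally assume $b$ is not a least earner of $\x'$ (else the previous paragraph applies), and take $b'=\ell$. Regardless of whether $\ell$ gains, loses, or keeps its chores, $\p_{-1}(\x'_\ell)\le\p(\x_\ell)$ by (b). The key step is that $b$ must be the giver: were $b$ untouched or the recipient, then $\p_{-1}(\x'_b)\ge\p_{-1}(\x_b)>\p(\x_\ell)\ge\p_{-1}(\x'_\ell)$ by (b),(c), contradicting that $\ell$ is a big earner of $\x'$. So $b$ gives $c$ to $q\in\{\ell,h\}$; then $h$'s bundle is either unchanged or gains $c$, so $\p(\x'_h)\ge\p(\x_h)\ge\p(\x_\ell)\ge\p_{-1}(\x'_\ell)$ by (a),(b), while also $\p_{-1}(\x'_\ell)\le\p(\x'_\ell)$ by (b). Since $b$ is not a least earner of $\x'$, the least earner $\ell'$ of $\x'$ lies in $\{\ell,h\}$, and in either case $\p_{-1}(\x'_\ell)\le\p(\x'_{\ell'})$; thus $(\x',\p)$ is pEF1 by Lemma~\ref{lem:BEenvyLE}.

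The step I expect to be the main obstacle is bounding $\p_{-1}$ of the agent receiving the transferred chore when that agent becomes the new big earner of $\x'$: a priori this quantity is unconstrained, and the bound $\p_{-1}(\x'_q)\le\p(\x_q)$ on the recipient $q$ is only useful because $\p(\x_\ell)$ is provably small — for any other recipient one must instead argue that the configuration is incompatible with $b$ simultaneously dropping to least earner (respectively with $\ell$ becoming a big earner), which is precisely where fact (c) enters. Organizing the case split so that these few incompatibility arguments are the only nontrivial content, and checking that they exhaust all giver/recipient/new‑extremal‑agent configurations, is the delicate part; every surviving inequality is then a one‑line consequence of (a)–(c).
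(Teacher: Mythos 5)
Your proof is correct and follows essentially the same route as the paper's: in the case where $b$ drops to least earner you show the transferred chore must have gone to $\ell$ and then bound $\p_{-1}(\x'_k)$ for every agent by $\p_{-1}(\x_b)\le\p(\x'_b)$, and in the case where $\ell$ rises to big earner you show $b$ must have been the giver and bound the new least earner's earning from below by $\p(\x_\ell)\ge\p_{-1}(\x'_\ell)$ — exactly the two observations the paper's proof hinges on. The only cosmetic difference is that you phrase the case analysis for three agents (giver/recipient/untouched third agent $r$), whereas the paper's wording is agent-count agnostic; since every untouched agent satisfies the same inequalities as your $h$, nothing is lost.
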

\begin{proof}
Let $(\x', \p)$ be the CE after the transfer and let $b'$ and $\ell'$ denote the big earner and least earner, respectively, in $(\x',\p)$. 

We first show that if $b = \ell'$ then $(\x', \p)$ is pEF1. Since a chore transfer was performed, $(\x, \p)$ was not pEF1 and thus $\p_{-1}(\x_b) > \p(\x_{\ell})$. Since $b$ becomes a least earner in $(\x', \p)$, it must be that a chore was transferred to $\ell$, as we would otherwise have $\p(\x'_b) \geq \p_{-1}(\x_b) > \p(\x_{\ell}) = \p(\x'_{\ell})$ and $b$ could not be a least earner in $(\x',\p)$. The earning up to one chore of the big earner cannot increase in $(\x',\p)$ as compared to $(\x,\p)$, as $\ell$ is the only agent who gets an extra chore, and we have $\p_{-1}(\x'_{\ell}) \leq \p(\x_{\ell}) < \p_{-1}(\x_b)$. Thus, $\p_{-1}(\x'_{b'} )\leq \p_{-1}(\x_b)$. It follows then that $\p_{-1}(\x'_{b'}) \leq \p_{-1}(\x_b) \leq \p(\x'_b) = \p(\x'_{\ell'})$, since $b=\ell'$. Hence by Lemma~\ref{lem:pEF1impliesEF1} $(\x', \p)$ is pEF1.

We now show that if $\ell = b'$ then $(\x', \p)$ is pEF1. As before, since a chore transfer was performed, $(\x, \p)$ was not pEF1 and so $\p_{-1}(\x_b) > \p(\x_{\ell})$. It must then be that a chore was taken from $b$, as we would otherwise have $\p_{-1}(\x'_b) \geq \p_{-1}(\x_b) > \p(\x_{\ell}) \geq \p_{-1}(\x'_{\ell})$ and $\ell$ could not be a big earner in $(\x'\p)$. The earning of the least earner cannot decrease in $(\x',\p)$ as compared to $(\x,\p)$, as $b$ is the only agent who loses a chore, and we have $\p(\x'_b) \geq \p_{-1}(\x_b) > \p(\x_{\ell})$. Thus, $\p(\x'_{\ell'}) \geq \p(\x_{\ell})$. It follows that $ \p(\x'_{\ell'}) \geq \p(\x_{\ell}) = \p(\x_{b'}) \geq \p_{-1}(\x'_{b'}) $ since $\ell = b'$. Hence by Lemma~\ref{lem:pEF1impliesEF1} $(\x', \p)$ is pEF1.
\end{proof}

We record an important property of the algorithm when the identity of the least earner changes.

\begin{lemma}\label{lem:3agents_LE_change}
During the run of Algorithm~\ref{alg:3agents}, if an agent $\ell$ stops being the least earner, then either the pEF1 condition is satisfied or $\ell$ becomes the middle earner and does not pEF1-envy the new least earner.
\end{lemma}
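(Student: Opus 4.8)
The plan is to enumerate, within a single iteration of Algorithm~\ref{alg:3agents}, every way in which $\ell$ can cease to be the least earner, and to check that each leaves the resulting equilibrium $(\x',\p')$ in a good state. Since the loop runs only while $\x$ is not EF1, Lemma~\ref{lem:pEF1impliesEF1} gives that $(\x,\p)$ is not pEF1 at the start of the iteration, so Lemma~\ref{lem:BEenvyLE} gives $\p_{-1}(\x_b)>\p(\x_\ell)$; this will be used repeatedly. An iteration does exactly one of: transfer a chore into $\x_\ell$ (Lines~9--10 or 13--14); transfer a chore $j'$ from $b$ to $h$ (Lines~16--17); or scale the payments of the chores of $\x_\ell\cup\x_h$ (Lines~19--21) or of $\x_\ell$ (Lines~23--25) by a factor $\beta\le 1$. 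Note that no operation ever removes a chore from $\x_\ell$.

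First I would dispose of the operations that cannot dethrone $\ell$. In a payment drop, $\x_b$ is untouched and $\p(\x_\ell)\le\min\{\p(\x_h),\p(\x_b)\}$, so $\p'(\x'_\ell)=\beta\,\p(\x_\ell)\le\min\{\p'(\x'_h),\p'(\x'_b)\}$ and $\ell$ is still a least earner; if tie-breaking nonetheless hands the title to some $m$, then $\p_{-1}(\x'_\ell)\le\p'(\x'_\ell)=\p'(\x'_m)$, so $\ell$ does not pEF1-envy $m$ --- an outcome that will recur below. In the $b\to h$ transfer, $b$ loses one chore so $\p'(\x'_b)=\p(\x_b)-p_{j'}\ge\p_{-1}(\x_b)>\p(\x_\ell)=\p'(\x'_\ell)$, while $\p'(\x'_h)=\p(\x_h)+p_{j'}>\p(\x_\ell)$; thus $\ell$ remains the strict least earner and the lemma's hypothesis is not triggered.

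The substantive case is the transfer of a chore $j$ from some $s\in\{b,h\}$ into $\x_\ell$. The workhorse inequality is $\p_{-1}(\x'_\ell)\le\p(\x_\ell)$: adding a chore to a bundle and then deleting its most expensive chore cannot raise the total. Let $m\in\{b,h\}$ be the new least earner. If $m\ne s$ then $\x'_m=\x_m$, so $\p_{-1}(\x'_\ell)\le\p(\x_\ell)\le\p(\x_m)=\p'(\x'_m)$ and $\ell$ does not pEF1-envy $m$. If $m=s=h$, the transfer was executed through Line~12, so $\p(\x_h\setminus j)>\p(\x_\ell)$, giving $\p'(\x'_h)=\p(\x_h)-p_j>\p(\x_\ell)\ge\p_{-1}(\x'_\ell)$, so again $\ell$ does not pEF1-envy $m=h$. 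The remaining case $m=s=b$ is exactly a single chore transfer after which $b$ becomes a least earner, so Lemma~\ref{lem:BEtoLE} gives that $(\x',\p')$ is pEF1 outright. I expect this last reduction --- recognizing the $m=s=b$ sub-case as an instance of Lemma~\ref{lem:BEtoLE} rather than something to re-derive --- to be the only genuine subtlety; the rest is bookkeeping with the defining inequalities of the big and least earners, plus the observation that $\ell$ never loses a chore.

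Finally I would assemble the conclusion: in $(\x',\p')$ agent $\ell$ is not the least earner, hence is the big earner or the middle earner, and in every sub-case above either $(\x',\p')$ is already pEF1 or $\ell$ does not pEF1-envy the new least earner. If $\ell$ is the middle earner, the latter is precisely the claimed alternative. If $\ell$ is the big earner, then a big earner that does not pEF1-envy a least earner forces $(\x',\p')$ to be pEF1 by Lemma~\ref{lem:BEenvyLE}. Throughout I use that the algorithm moves a chore into a bundle only when it is MPB for the recipient and rescales payments so as to preserve every agent's MPB condition, so $(\x',\p')$ is always an integral competitive equilibrium and the notions pEF1 and fPO remain meaningful.
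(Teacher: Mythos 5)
Your proposal is correct and follows essentially the same route as the paper: rule out payment drops, then case-split on which of the three chore transfers occurred, using the key inequality $\p_{-1}(\x'_\ell)\le\p(\x_\ell)$ together with Lemma~\ref{lem:BEtoLE} for the degenerate cases where $b$ becomes a least earner or $\ell$ becomes a big earner. The only cosmetic difference is that you close the ``$\ell$ is the new big earner'' case via Lemma~\ref{lem:BEenvyLE} rather than directly via Lemma~\ref{lem:BEtoLE}, which is equally valid.
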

\begin{proof}
Clearly, it cannot happen that agent $\ell$ stops being the least earner due to a payment decrease. Thus $\ell$ stops being a least earner due to a chore transfer. Let $\x$ be the allocation immediately before the transfer and let $\x'$ be the allocation immediately afterwards, and let $\p$ be the payment vector. Additionally, let $b$ be the big earner and $h$ the middle earner in $(\x, \p)$. By Lemma~\ref{lem:BEtoLE}, if $b$ becomes the least earner or $\ell$ becomes the big earner in $(\x', \p)$, then $(\x', \p)$ is pEF1. Thus, if $(\x',p)$ is not, it must be that in $(\x', \p)$, agent $\ell$ is the new middle earner and $h$ is the new least earner. We now show that the new middle earner $\ell$ does not pEF1-envy the new least earner $h$ by considering the possible combinations of agents involved in the transfer:
\begin{itemize}
\item Suppose a chore was transferred from $b$ to $h$. In this case, $\p(\x'_{\ell}) = \p(\x_{\ell}) < \p_{-1}(\x_b) \leq \p(\x'_b)$, so $\ell$ would remain the least earner in $(\x', \p)$, leading to a contradiction.
\item Suppose a chore was transferred from $b$ to $\ell$. In this case, $\p_{-1}(\x'_{\ell}) \leq \p(\x_{\ell}) \leq \p(\x_h) = \p(\x'_h)$, thus showing that $\ell$ does not pEF1-envy $h$.
\item Suppose a chore was transferred from $h$ to $\ell$. In this case, $\p(\x_h \setminus j) > \p(\x_{\ell})$ for $j\in \x_h$. Then, $\p_{-1}(\x'_{\ell}) \leq \p(\x_{\ell}) < \p(\x_h \setminus j) = \p(\x'_h)$. This again shows that the $\ell$ does not pEF1-envy $h$.
\end{itemize}
In conclusion, either the allocation after the transfer is pEF1, or the new middle earner does not pEF1-envy the new least earner.
\end{proof}

Initially, agent $i$ is allocated all the chores, and is thus the big earner. We show that if agent $i$ stops being the big earner then an EF1+fPO allocation can be found almost immediately.

\begin{lemma}\label{lem:3agentsBE}
If agent $i$ ceases to be the big earner, then Algorithm~\ref{alg:3agents} finds an EF1 allocation in at most one chore transfer.
\end{lemma}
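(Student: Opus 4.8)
The plan is to follow Algorithm~\ref{alg:3agents} through the single iteration in which agent $i$ first loses the big-earner title, and show that at most one further chore transfer then produces a pEF1 allocation, which is EF1+fPO by Lemma~\ref{lem:pEF1impliesEF1}. First I would observe that $i$ can only cease to be the big earner through a chore transfer, never through a payment drop: every payment drop in the algorithm scales down (by a factor $\beta<1$) the payments of chores held only by $\ell$ and/or $h$, so it only decreases $\p(\x_\ell),\p(\x_h)$, hence $\p_{-1}(\x_\ell),\p_{-1}(\x_h)$, while leaving $\p_{-1}(\x_i)$ untouched; thus if $i=b$ before the drop then $i=b$ after. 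So there is a transfer $(\x,\p)\to(\x',\p)$ with $i=b$ in $(\x,\p)$ after which $i$ is no longer the big earner, and since $i=b$ the branch performing it is one of: (a) a transfer from $b=i$ to $\ell$ (Lines~9--10); (b) a transfer from the middle earner $h$ to $\ell$ (Lines~13--14); or (c) a transfer from $b=i$ to $h$ (Lines~16--17). Throughout I would use the invariant that $i$ never gains a chore and never has one of its chores' payments lowered, so that $p_k=d_i(k)$ for all $k\in\x_i$ and $\alpha_i=1$ at all times.

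Case (b) is immediate: $h$ loses a chore, so $\p_{-1}(\x'_h)\le\p_{-1}(\x_h)\le\p_{-1}(\x_i)=\p_{-1}(\x'_i)$ (the last equality because $\x_i$ is untouched in this branch), so $h$ cannot overtake $i$; since $i$ does stop being the big earner, $\ell$ must become the new big earner, whence $(\x',\p)$ is pEF1 by Lemma~\ref{lem:BEtoLE}, with no further transfer needed. In cases (a) and (c) the transfer is from $b=i$, so Lemma~\ref{lem:BEtoLE} already handles the sub-cases where $i$ becomes the new least earner or $\ell$ becomes the new big earner --- both give pEF1 immediately. As $i$ ceases to be the big earner, the only remaining possibility is that the middle earner $h$ becomes the new big earner. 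Two sub-cases of this are still clean. In case (a), if moreover $i$ is the new least earner, then $\x'_h=\x_h$ gives $\p_{-1}(\x'_h)=\p_{-1}(\x_h)\le\p_{-1}(\x_i)=\p(\x_i)-\max_{k\in\x_i}p_k\le\p(\x_i)-p_j=\p(\x'_i)$, so $h$ does not pEF1-envy $i$ and $(\x',\p)$ is pEF1 by Lemma~\ref{lem:BEenvyLE}. In case (c), $i$ cannot be the new least earner at all: $\p(\x'_i)=\p(\x_i)-p_{j'}\ge\p(\x_i)-\max_{k\in\x_i}p_k=\p_{-1}(\x_i)>\p(\x_\ell)=\p(\x'_\ell)$, where $\p_{-1}(\x_i)>\p(\x_\ell)$ holds because $(\x,\p)$ is not pEF1 and hence the big earner $i$ pEF1-envies $\ell$ (Lemma~\ref{lem:BEenvyLE}); so the new least earner is $\ell$.

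This leaves exactly one hard configuration (in each of cases (a) and (c)): after the critical transfer, $h$ is the big earner, $\ell$ the least earner, and $i$ the middle earner. For this I would argue that either $(\x',\p)$ is already pEF1, or the next iteration performs a single transfer --- necessarily of a chore of $h$ into $\ell$'s bundle (the relevant MPB chore in case (c) being a chore of $\x_h\cap\mpb_\ell$, which survives into $\x'_h$ since $\x_h\subseteq\x'_h$ and payments are unchanged) --- after which, because the new middle earner $i$ is left untouched and so cannot become the big earner, Lemma~\ref{lem:BEtoLE} applies to the equilibrium $(\x',\p)$ and certifies pEF1 as soon as $\ell$ becomes the big earner or $h$ the least earner; Lemma~\ref{lem:3agents_LE_change} is used to rule out any resurgence of envy if the least-earner identity changes during these final steps. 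I expect this configuration --- the middle earner overtaking $i$ to become the new big earner --- to be the main obstacle, being the one case not closed directly by Lemma~\ref{lem:BEtoLE}; the delicate point is to show it costs \emph{at most one} more chore transfer rather than initiating a chain, which requires combining the frozen state of agent $i$, the exact branch conditions in force at the critical transfer, and a re-application of Lemma~\ref{lem:BEtoLE} to the post-transfer competitive equilibrium.
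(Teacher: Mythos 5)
Your setup matches the paper's: $i$ can only lose the big-earner title via a chore transfer, the $h\to\ell$ transfer case and the sub-cases where $i$ becomes the least earner or $\ell$ becomes the big earner are closed by Lemma~\ref{lem:BEtoLE}, and the remaining difficulty is the configuration where the middle earner $h$ overtakes $i$. But precisely at that point your proposal stops being a proof and becomes a plan, and the plan does not work as stated. For the transfer $i\to\ell$ (your case (a)) with $h$ becoming the new big earner, you propose that ``the next iteration performs a single transfer, necessarily of a chore of $h$ into $\ell$'s bundle.'' Nothing guarantees $\x'_h\cap\mpb_\ell\neq\emptyset$ in this branch: the $b\to\ell$ transfer fires under the condition $\x_b\cap\mpb_\ell\neq\emptyset$ alone, with no hypothesis on $\x_h$, so the algorithm might next do payment drops or other transfers, and your one-transfer budget is not obviously respected. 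The paper resolves this case with \emph{zero} further transfers by showing $\x'$ is already EF1 --- but not necessarily pEF1. That argument is the heart of the lemma and is entirely absent from your proposal: it goes back to the time $t$ at which $\ell$ most recently became the least earner, invokes Lemma~\ref{lem:3agents_LE_change} to get $\p^t_{-1}(\x^t_h)\le\p^t(\x^t_\ell)$, tracks that $\ell$ only gains chores afterwards, and crucially converts payments to disutilities via $\alpha_i=1$ (i.e., $d_{i_{-1}}(\x'_i)=\p_{-1}(\x'_i)\le\p^{t'}(\x^{t'}_\ell)\le d_i(\x^{t'}_\ell)\le d_i(\x'_\ell)$). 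Because $\ell$'s chores may have undergone payment drops since $t'$, a purely payment-based (pEF1) argument of the kind you are running cannot close this case; your global strategy of certifying pEF1 everywhere is aiming at a conclusion that can actually be false at termination.

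The case (c) half of your hard configuration is closer to correct: the branch condition does guarantee a chore $j\in\x_h\cap\mpb_\ell$ with $\p(\x_h\setminus j)\le\p(\x_\ell)$ survives into $\x'_h$, and the algorithm then transfers it. But your claim that Lemma~\ref{lem:BEtoLE} ``certifies pEF1 as soon as $\ell$ becomes the big earner or $h$ the least earner'' leaves open exactly the situations where neither happens; the paper must (and does) verify pEF1 of $(\x'',\p)$ by direct payment chains in three explicit big-earner/least-earner configurations, using the inequality $\p(\x_h\setminus j)\le\p(\x_\ell)$ from the branch condition. Your appeal to Lemma~\ref{lem:3agents_LE_change} ``to rule out any resurgence of envy'' is a placeholder, not an argument. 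In short: the easy cases are handled correctly and in the same way as the paper, but the two genuinely hard cases --- which are the content of the lemma --- are missing, and for one of them the proposed route (one more $h\to\ell$ transfer yielding pEF1) is not the right one.
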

\begin{proof}
We first observe that agent $i$ can only cease to be the big earner if she loses a chore. Indeed, $i$'s chores participate in no payment drops, and chore transfers between middle earner $h$ and least earner $\ell$ only give chores to $\ell$, so only $\ell$ can possibly become the big earner. If this occurs, then by Lemma~\ref{lem:BEtoLE}, the allocation is pEF1 and hence EF1 by Lemma~\ref{lem:pEF1impliesEF1}. Let $\x$ (resp. $\x'$) denote the allocation immediately before (resp. after) $i$ stops being the big earner. By Lemma~\ref{lem:BEtoLE}, if $(\x', \p)$ is not pEF1, it must be that the middle earner $h$ in $(\x, \p)$ becomes the big earner in $(\x', \p)$ while $i$ becomes the middle earner in $(\x, \p)$ and the identity of the least earner $\ell$ remains unchanged. Thus, $i$ first ceases to be the big earner due to the transfer of a chore $j'$ from $i$ to either the middle earner or the least earner agent. We consider these two cases below.

\textbf{Case (1).} Suppose that $i$ first ceases to be the big earner due to the transfer of a chore $j'$ from $i$ to $h$, the middle earner, after which $h$ becomes the new big earner. Then by the design of our algorithm, for an $i$ to $h$ transfer to happen, it must be that there exists $j \in \x_h \cap \s{MPB}_{\ell}$ such that $\p(\x_h \setminus j) \leq \p(\x_{\ell})$. If the new allocation $(\x', \p)$ is not EF1, then Algorithm~\ref{alg:3agents} will immediately transfer $j$ from the new big earner $h$ to least earner $\ell$ resulting in allocation $(\x'', \p)$. We record the relationships between $\x, \x'$ and $\x''$ below:
\begin{equation}\label{eq:allocs}
\x''_i = \x'_i = \x_i \setminus j; \quad \x''_h = \x'_h \setminus j = (\x_h \cup j') \setminus j; \quad \x''_\ell = \x'_\ell \cup j = \x_\ell \cup j.    
\end{equation} 

% Let $b''$ (resp. $\ell''$) be the big (resp. least) earner in $(\x'', \p)$. 
Due to Lemma~\ref{lem:BEtoLE}, $(\x'', \p)$ is already pEF1 in all cases except the following ones below:
\begin{itemize}
\item $h$ is the BE and $\ell$ is the LE in $(\x'',\p)$. Observe from \eqref{eq:allocs} that $\p_{-1}(\x''_h) \le \p(\x_h\setminus j)$. Since $\p(\x_h\setminus j) \le \p(\x_\ell)$ and $\p(\x_\ell) \le \p(\x''_\ell)$ we obtain  $\p_{-1}(\x''_h) \le \p(\x''_\ell) $, showing that $(\x'',\p)$ is pEF1.
\item $h$ is the BE and $i$ is the LE $(\x'',\p)$. As the case above, we observe that $\p_{-1}(\x''_h) \le \p(\x_\ell) = \p(\x'_\ell)$. Since $i$ is the ME in $(\x', \p)$, we have $\p(\x'_\ell) \le \p(\x'_i)$. Finally since $\p(\x'_i) = \p(\x''_i)$, we obtain $\p_{-1}(\x''_h) \le \p(\x''_i) $, showing that $(\x'',\p)$ is pEF1.
\item $i$ is the BE and $\ell$ is the LE in $(\x'',\p)$. Since $\x''_i = \x'_i$ and $h$ is the BE in $\x'$, we have $\p_{-1}(\x''_i) = \p_{-1}(\x'_i) \le \p_{-1}(\x'_h)$. Since $\x'_h = \x_h \cup j'$, we have $\p_{-1}(\x'_h) \le \p(\x_h)$. Finally since $\p(\x_h\setminus j)\le \p(\x_\ell)$, this implies $\p(\x_h) \le \p(\x_\ell \cup j) = \p(\x''_\ell)$. Putting this all together we obtain $\p_{-1}(\x''_i) \le \p(\x''_\ell)$, implying $(\x'', \p)$ is pEF1.

\end{itemize}

% First we observe that since the chore $j$ is transferred from the big earner to the least earner in $(\x'',\p)$, the arguments in Lemma~\ref{lem:BEtoLE} show that the earning up to one chore of the big earner does not increase and the earning of the least earner does not decrease in going from $(\x',\p)$ to $(\x'',\p)$. That is:
% \begin{equation}\label{eq:be-changes-1}
% \p_{-1}(\x''_{b''}) \leq \p_{-1}(\x'_h), \text{ and } \p(\x'_{\ell}) \leq \p(\x''_{\ell''}).
% \end{equation} 

% Observe that $\p_{-1}(\x'_h) = \p_{-1}(\x_h \cup j') \leq \p(\x_h) = \p(\x_h \setminus j) + p_j \leq \p(\x_{\ell}) + p_j = \p(\x''_{\ell})$, since $\x''_\ell = \x_\ell \cup j$. Together with \eqref{eq:be-changes-1}, this shows that $(\x'', \p)$ is pEF1 and thus EF1. 

\medskip
\textbf{Case (2).} Suppose that $i$ first ceases to be the big earner due to the transfer of a chore from $i$ to $\ell$. We claim that $\x'$ is then EF1. By Lemma~\ref{lem:BE_h_LE}, no agent pEF1-envies (and thus EF1-envies) the big earner $h$ in $(\x',\p)$. We also see that no agent pEF1-envies $i$ in $(\x',\p)$, since $\p_{-1}(\x'_h) = \p_{-1}(\x_h) \leq \p_{-1}(\x_i) \leq \p(\x'_i)$ by the definition of big earner, and $\p_{-1}(\x'_{\ell}) \leq \p(\x_{\ell}) < \p_{-1}(\x_i) \leq \p(\x'_i)$, since $(\x,\p)$ was not pEF1. It only remains to be shown that $i$ and $h$ do not EF1-envy $\ell$. 

We first show that $h$ does not EF1-envy $\ell$ in $\x'$. Let $t$ be the time-step immediately after $\ell$ most recently became the least earner and let $(\x^t, \p^t)$ denote the allocation and payments at time $t$. At time $t$ it must have been that $i$ was the big earner, $h$ was the middle earner, and $\ell$ was the least earner. By Lemma~\ref{lem:3agents_LE_change} we have that $\p^t_{-1}(\x^t_h) \leq \p^t(\x^t_{\ell})$ and $h$ did not pEF1-envy $\ell$ at time $t$. Note that from time $t$ onwards $\ell$ does not lose any chores, so for $h$ to EF1-envy $\ell$ after time $t$ it must be because $h$ gains some chore $j'$. As seen previously, this only occurs when $h$ already has a chore $j$ which is also MPB for $\ell$, and if $h$ pEF1-envies $\ell$ after receiving $j'$ then Algorithm~\ref{alg:3agents} will immediately transfer $j$ from $h$ to $\ell$ so that $h$ no longer pEF1-envies $\ell$. So, there exists some time $t' \geq t$ such that $h$ does not pEF1-envy $\ell$ and $h$ gains no chores after $t'$. Thus, it cannot be that $h$ EF1-envies $\ell$ in $\x'$.

We now show that $i$ does not EF1-envy $\ell$ in $\x'$. Consider again the time $t'$, at which $h$ does not pEF1-envy $\ell$ and after which does not gain any chores. We therefore have $\p_{-1}(\x'_i) \leq \p_{-1}(\x'_h) \leq \p^{t'}_{-1}(\x^{t'}_h) \leq \p^{t'}(\x^{t'}_{\ell})$. While $i$ has been the big earner, $i$'s chores have never experienced a payment decrease so $i$ has always maintained MPB ratio $\alpha_i = 1$. Then, we have $d_{i_{-1}}(\x'_i) = \p_{-1}(\x'_i) \leq \p^{t'}(\x^{t'}_{\ell}) \leq d_i(\x^{t'}_{\ell})$. Subsequent to $t'$, $\ell$ is the least earner and thus can only have gained chores, so $\x^{t'}_{\ell} \subseteq \x'_{\ell}$ and $d_{i_{-1}}(\x'_i) \leq d_i(\x^{t'}_{\ell}) \leq d_i(\x'_{\ell})$ as desired. Thus, if $i$ first ceases to be the big earner due to the transfer of a chore from $i$ to $\ell$, the resulting allocation is immediately EF1.

In conclusion, if agent $i$ ceases to be the big earner, then Algorithm~\ref{alg:3agents} terminates with an EF1 allocation after at most one subsequent chore transfer.
\end{proof}

We now complete the proof of  Theorem~\ref{thm:3agents}.

\begin{proof}(of Theorem~\ref{thm:3agents})
We show that Algorithm~\ref{alg:3agents} computes an EF1+fPO allocation in strongly polynomial time, which suffices to prove Theorem~\ref{thm:3agents}.

With Lemma~\ref{lem:3agentsBE} in hand, it suffices to bound the number of steps for $i$ can remain the big earner. We have that throughout the algorithm's run, $i$ undergoes no payment drops and may only lose chores. Then, she must cease to be the big earner before losing all her chores, which are at most $m$. 

While $i$ is a big earner, note that there can be at most two payment decreases before a chore transfer must occur: one decrease which makes a chore in $\x_h$ join the MPB set of $\ell$, and one decrease which makes a chore in $\x_i$ join the MPB set for either $h$ or $\ell$. We now bound the number of chore transfers between middle earner $h$, and least earner $\ell$ before a chore must be taken from $i$. Since transfers between $h$ and $\ell$ are always from $h$ to $\ell$, it must be that $h$ becomes the least earner by the time she transfers all of her at most $m$ chores to $\ell$. Suppose $h$ becomes the new least earner after transferring chore $j$ to $\ell$. Let $\x$ be the allocation before transferring $j$ and let $\x'$ be the allocation after transferring $j$. By Lemma~\ref{lem:BEtoLE}, $\ell$ must be the middle earner in $(\x', \p)$ or we already have the pEF1 condition. Since $j$ was transferred from $h$ to $\ell$ we have $\p(\x_h \setminus j) > \p(\x_{\ell})$ and it follows that $\p_{-1}(\x'_{\ell}) \leq \p(\x_{\ell}) < \p(\x_h \setminus j) = \p(\x'_h)$. Notably, it remains that $j$ is MPB for the least earner $h$ in $(\x', \p)$ and the next chore transfer is guaranteed to not be between $h$ and $\ell$. Subsequently, payments of chores in $\x_{\ell} \cup \x_h$ will be decreased until some chore $j'$ in $\x_i$ joins the MPB set for either $\ell$ or $h$, if one is not already. Then $j'$ will be transferred from $i$. Thus, in at most 
$\s{poly}(m)$ chore transfers and payment decreases, a chore is taken from $i$.

Since $i$ can lose only $m$ chores, in $\s{poly}(m)$ steps $i$ ceases to be the big earner. From Lemma~\ref{lem:3agentsBE}, in at most one subsequent step, Algorithm~\ref{alg:3agents} terminates and returns an EF1+fPO allocation.
\end{proof}

Having shown that EF1+fPO allocation is efficiently computable for three agents, a natural follow-up question is to investigate the existence and computation of EFX+fPO allocations for the same class. The following example shows that EFX+fPO allocations need not exist even when there are two agents. 

\begin{theorem}\label{thm:efx-fpo-nonexistence}
There exists a two-type, 2-ary chore allocation instance with two agents that does not admit an EFX+fPO allocation. 
\end{theorem}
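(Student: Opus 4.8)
The plan is to exhibit a small explicit instance and then rule out EFX by inspecting \emph{every} fPO allocation. I would take two agents and four chores $e_1,e_2,f_1,f_2$ with disutilities
\[
d_1(e_1)=d_1(e_2)=3,\quad d_1(f_1)=d_1(f_2)=1,\qquad d_2(e_1)=d_2(e_2)=4,\quad d_2(f_1)=d_2(f_2)=1.
\]
Agent $1$'s disutilities lie in $\{1,3\}$ and agent $2$'s in $\{1,4\}$, so this is a $2$-ary instance; having only two agents, it is automatically two-type.

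The first step is to pin down all fPO allocations, which for two agents is clean. If $(\x_1,\x_2)$ is fPO then it must be a \emph{threshold allocation} with respect to the cost ratios $r(j)=d_1(j)/d_2(j)$, i.e.\ $\max_{j\in\x_1}r(j)\le\min_{j\in\x_2}r(j)$ (empty $\max$/$\min$ read as $-\infty$/$+\infty$): otherwise some $j\in\x_1$ and $k\in\x_2$ satisfy $r(j)>r(k)$, and moving an $\varepsilon$-fraction of $j$ from agent $1$ to agent $2$ together with an $\varepsilon\, d_2(j)/d_2(k)$-fraction of $k$ from agent $2$ to agent $1$ keeps $d_2(\x_2)$ fixed while strictly lowering $d_1(\x_1)$, contradicting fPO. (Conversely each threshold allocation here is genuinely fPO, supported by payments $p_j=d_1(j)$ on $\x_1$ scaled up on $\x_2$ until every chore of agent $2$ is MPB; but only the direction ``fPO $\Rightarrow$ threshold'' is needed below.) Since $r(e_i)=3/4<1=r(f_i)$, in a threshold allocation agent $1$ can own an $f$-chore only if she owns both $e$-chores. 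Hence every fPO allocation has
\[
\x_1\in\bigl\{\,\varnothing,\ \{e_1\},\ \{e_2\},\ \{e_1,e_2\},\ \{e_1,e_2,f_1\},\ \{e_1,e_2,f_2\},\ \{e_1,e_2,f_1,f_2\}\,\bigr\}.
\]

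The second step is to check that none of these seven allocations is EFX, in two cases. If $\{e_1,e_2\}\subseteq\x_1$: then $d_1(\x_1)-\min_{j\in\x_1}d_1(j)\ge 2\cdot 3-3=3$ whereas $\x_2$ contains only $f$-chores, so $d_1(\x_2)\le 2<3$, and agent $1$ violates EFX towards agent $2$. If instead $\x_1\in\{\varnothing,\{e_1\},\{e_2\}\}$: then $\x_2$ contains both $f$-chores and at least one $e$-chore (or is all four chores), so after deleting agent $2$'s cheapest chore (cost $1$) her disutility is at least $4+1=5$ (at least $9$ if $\x_1=\varnothing$), which exceeds $d_2(\x_1)\le 4$, and agent $2$ violates EFX towards agent $1$. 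Either way the allocation is not EFX, so the instance admits no EFX$+$fPO allocation.

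The step I expect to require the most care is the first one---in particular verifying the list of seven fPO allocations is exhaustive, e.g.\ excluding ``agent $1$ gets both $f$-chores'' or ``agent $1$ gets one $e$- and one $f$-chore''---and this is exactly where the inequality $r(e_i)<r(f_i)$, equivalently $d_1(e_i)\,d_2(f_i)<d_1(f_i)\,d_2(e_i)$, is used. The EFX checks are routine; the numbers were chosen so that in addition $d_1(e_i)>2\,d_1(f_i)$, which is what forces agent $1$'s bundle to be too heavy in the ``balanced'' allocation $\x_1=\{e_1,e_2\}$.
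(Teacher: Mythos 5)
Your proposal is correct, and your instance is exactly the paper's instance (your $e_1,e_2,f_1,f_2$ are its $j_3,j_4,j_1,j_2$, with your agents $1,2$ being its $a,b$), but you argue in the opposite direction. The paper observes that the only EFX allocation (up to swapping identical chores) is $\x_a=\{j_1,j_3\}$, $\x_b=\{j_2,j_4\}$, and then kills it by exhibiting a single fractional Pareto improvement $\y_a=\{j_3,\tfrac14 j_4\}$, $\y_b=\{j_1,j_2,\tfrac34 j_4\}$; you instead characterize all fPO allocations via the two-agent threshold condition on the ratios $r(j)=d_1(j)/d_2(j)$ (proved by the standard $\varepsilon$-exchange that holds one agent's disutility fixed while strictly decreasing the other's) and then verify that none of the seven resulting integral bundles for agent $1$ is EFX. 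Both arguments are complete. The paper's route is shorter because there is essentially one EFX candidate and one explicit dominating fractional allocation to write down, though it leaves the uniqueness of the EFX allocation as an unverified (if easily checked) enumeration; your route is longer but more systematic, requires only the easy direction ``fPO $\Rightarrow$ threshold,'' and yields a reusable structural fact about two-agent fPO allocations that makes the exhaustiveness of the case analysis transparent. Your EFX checks on the seven candidates are all numerically correct.
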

\begin{proof}
We consider the following instance with two agents $a$ and $b$ and four chores $j_1,j_2,j_3,j_4$ with disutilities given in the following table:
\begin{center}
\begin{tabular}{|c||c|c|c|c|}
\hline
& $j_1$ & $j_2$ & $j_3$ & $j_4$ \\
\hline
$a$ & 1 & 1 & 3 & 3 \\
\hline
$b$ & 1 & 1 & 4 & 4 \\
\hline
\end{tabular}
\end{center}

Clearly, this instance is a two-type and 2-ary instance with two agents. The only EFX allocation up to swapping of $j_1, j_2$ or $j_3, j_4$ is the allocation $\x$, where $\x_a = \{j_1,j_3\}$ and $\x_b = \{j_2,j_4\}$. We have $d_a(\x_a) = 4$ and $d_b(\x_b) = 5$. Consider the fractional allocation $\y$ where $\y_a = \{j_3, \frac{1}{4}j_4\}$ and $\y_b = \{j_1, j_2, \frac{3}{4}j_4\}$. Then $d_a(\y_a) = 3.75$ but $d_b(\y_b) = 5$, so $\y$ Pareto dominates the EFX allocation $\x$. Thus, this instance admits no EFX+fPO allocation. We note that the same example with $j_1,\dots,j_4$ treated as goods can be used to show the non-existence of EFX+fPO allocations for goods.    
\end{proof}

\section{EF1 + fPO for Two Types of Agents}\label{sec:2types}
In this section, we present Algorithm~\ref{alg:ef1po} which computes an EF1+fPO allocation for two-type instances in strongly polynomial-time. Due to Lemma~\ref{lem:pEF1impliesEF1}, we seek a pEF1 integral competitive equilibrium $(\x,\p)$. Let $N_1$ (resp. $N_2$) be an ordered list of agents with disutility function $d_1$, called Type-1 agents (resp. $d_2$, called Type-2 agents). Algorithm \ref{alg:ef1po} maintains a partition of the chores $M$ into sets $M_1$ and $M_2$, where $M_i$ is allocated to $N_i$, for $i\in\{1,2\}$. Initially, $M_1 = M$ and $M_2 = \emptyset$, and $p_j = d_1(j)$ for each $j\in M$. The chores in $M_1$ are allocated to $N_1$ using the \RR procedure (Algorithm \ref{alg:round-robin}). Given an ordered list of agents $N'$ and chores $M'$, \RR allocates chores as follows. Agents take turns picking chores according to the order specified by $N'$ and each agent picks the least cost chore among the pool of remaining chores during their turn. It is a well-known folklore result that \RR returns an EF1 allocation. 

Initially, agents in $N_1$ have chores while $N_2$ have none, causing agents in $N_1$ to potentially pEF1-envy agents in $N_2$. Thus we must transfer chores from $M_1$ to $M_2$ to reduce this pEF1-envy. When necessary, payments of chores in $M_1$ are raised appropriately before such a transfer to maintain that chores in $M_i$ are MPB for agents in $N_i$, for each $i\in\{1,2\}$\footnote{Note that in this algorithm we perform payment raises instead of payment drops as in Algorithm~\ref{alg:3agents}. This is purely for ease of presentation, as we could equivalently perform payment drops on chores in $M_2$ instead of performing payment raises on chores in $M_1$}. After each transfer, the chores in the (updated) sets $M_i$ are re-allocated to $N_i$ using the \RR procedure, always using the \textit{same ordering} of agents. Since agents in $N_i$ have the same disutility function and chores in $M_i$ are MPB for $N_i$, we have the following feature.

\begin{invariant}\label{inv:ef1po}
Throughout the run of Algorithm~\ref{alg:ef1po}, agents in $N_i$ do not pEF1-envy each other, for each $i\in\{1,2\}$.
\end{invariant}

\begin{algorithm}[t]
\caption{\RR ($\mathsf{RR}$) procedure}\label{alg:round-robin}
\textbf{Input:} Ordered list of agents $N'$, set of chores $M'$\\
\textbf{Output:} An allocation $\x$
\begin{algorithmic}[1]
\State $\x_i \gets \emptyset$ for each $i\in N'$, $i\gets 1$, $P\gets M'$
\While{$P\neq \emptyset$}
\State $j \gets \s{argmin}_{j'\in P} ~d_i(j')$
\State $\x_i \gets \x_i \cup \{j\}$, $P\gets P\setminus \{j\}$, $i\gets i\mod |N'|+1$
\EndWhile
\State \Return $\x$
\end{algorithmic}
\end{algorithm}

By Invariant \ref{inv:ef1po}, no agent can pEF1-envy another agent of the same type. Thus, if we do not have pEF1, it must be that the global BE pEF1-envies the global LE, with the BE and LE being in different groups. Initially, the BE is in $N_1$ and the LE is in $N_2$. Our goal is now to eliminate the pEF1-envy between the BE and LE, and to do this we transfer chores from $M_1$ to $M_2$ with necessary payment raises and \RR re-allocations. We then reconsider the pEF1-envy between the new BE and the new LE. The algorithm terminates when the BE no longer pEF1-envies the LE. We argue that this must happen.

While the BE remains in $N_1$, chores are transferred from $M_1$ to $M_2$. Clearly there can be at most $m$ such transfers, since always $|M_1| \le m$. If in some iteration both the BE and LE belong to the same group $N_i$, then we must be done due to Invariant~\ref{inv:ef1po}. The only remaining case is if the BE becomes an agent in $N_2$ and the LE becomes an agent in $N_1$. We address this case in the following lemma.

\begin{restatable}{lemma}{lemef1podone}\label{lem:ef1po-done}
If the BE is in $N_2$ and the LE is in $N_1$, the allocation must already be pEF1.  
\end{restatable}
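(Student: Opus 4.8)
I want to show that if the big earner (BE) lies in $N_2$ and the least earner (LE) lies in $N_1$, then the current competitive equilibrium $(\x,\p)$ is already pEF1. By Lemma~\ref{lem:BEenvyLE}, it suffices to show that the BE $b \in N_2$ does not pEF1-envy the LE $\ell \in N_1$, i.e., $\p_{-1}(\x_b) \le \p(\x_\ell)$. The key structural facts I will lean on are: (1) the algorithm only ever transfers chores \emph{from} $M_1$ \emph{to} $M_2$, so at the moment the BE first lands in $N_2$, the set $M_2$ was just enlarged; and (2) right before that transfer, the BE was still in $N_1$, so the $N_1$-agents collectively were earning a lot, and in particular the \RR re-allocation on $M_1$ distributes these chores evenly among $N_1$.

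\textbf{Key steps.} First I would look at the last transfer: let $(\x,\p)$ be the current state, and let $(\x^-,\p^-)$ be the state just before the most recent chore transfer from $M_1$ to $M_2$ (together with the payment raise on $M_1$ that may precede it and the two \RR re-allocations). Since transfers go only from $M_1$ to $M_2$, before this step the BE was in $N_1$; call that agent $b^-$. Because the payment raise scales all of $M_1$ uniformly (preserving MPB for $N_1$ and, after scaling, making the transferred chore MPB for the receiving $N_2$-agent), and because \RR is run with a fixed agent order, I can compare the total $N_1$-payment $\p(M_1)$ and $\p(M_2)$ before and after. The crucial point: after the transfer, the new BE $b$ is in $N_2$, so $\p_{-1}(\x_b) \le \p(\x_b) \le \p(M_2)$, and this $\p(M_2)$ is not much larger than the payment of a \emph{single} $N_2$-chore plus the old $\p^-(M_2)$. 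Meanwhile the new LE $\ell \in N_1$ holds a \RR-share of $M_1$; since \RR is envy-free up to one chore within $N_1$ and $b^-\in N_1$ was the old BE, $\ell$'s bundle $\x_\ell$ cannot be too small — specifically $\p(\x_\ell) \ge \p_{-1}(\x_{b^-})$ up to the usual EF1 slack, and $b^-$ being the BE just before means $\p_{-1}(\x^-_{b^-}) \ge \p^-(\x^-_b)$ for the old position. Chaining these, together with the fact that the single new $N_2$-chore is MPB-priced and hence its payment is controlled relative to the $N_1$-chores it "came from," gives $\p_{-1}(\x_b) \le \p(\x_\ell)$.

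\textbf{Alternative cleaner route.} Rather than tracking one transfer, I might argue directly: with $b\in N_2$ and $\ell\in N_1$, by Invariant~\ref{inv:ef1po} all of $N_1$ have essentially equal bundles (equal up to the \RR one-chore slack) and all of $N_2$ have essentially equal bundles. Suppose for contradiction $\p_{-1}(\x_b) > \p(\x_\ell)$. Since $b$ is the BE, every $N_2$-agent has $\p_{-1} \ge \p_{-1}(\x_b)$... no, $b$ is the \emph{max}, so I instead get every $N_2$-agent earns at least $\p_{-1}(\x_b)$ only after removing a chore — careful. The honest version: $N_1$ agents each earn at most $\approx \p(\x_\ell)$ (since $\ell$ is the min and \RR is balanced within $N_1$), so $\p(M_1) \lesssim |N_1|\,\p(\x_\ell)$, while $N_2$ agents each earn at least $\approx \p_{-1}(\x_b)$ (balanced within $N_2$, $b$ near the top), so $\p(M_2) \gtrsim |N_2|\,\p_{-1}(\x_b)$. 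If $\p_{-1}(\x_b) > \p(\x_\ell)$, the total payment on $M_2$ is "disproportionately large"; but $M_2$ was built by moving chores out of $M_1$, and each such move was made precisely because the moving agent (always an $N_1$-agent, being the BE at the time) pEF1-envied the LE — meaning at the time of transfer the $M_1$ side was heavier, contradicting that we ended with $M_2$ so heavy. Making the word "disproportionately" precise via the MPB-preserving payment raises is the crux.

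\textbf{Main obstacle.} The hard part will be handling the payment raises cleanly: the payments $\p$ are not the same as at earlier time-steps, so comparisons like "$\p(\x_\ell) \ge \p_{-1}(\x_{b^-})$" must be made with the \emph{current} $\p$, and I need the payment raises to be uniform on $M_1$ (they are, by construction) and to not have touched $M_2$ in a way that breaks the comparison. I also need to make sure the \RR re-allocation after a transfer doesn't let the new LE in $N_1$ drop below the bound I want — this is exactly where the "same ordering" of \RR and Invariant~\ref{inv:ef1po} do the work, so I expect the proof to reduce to: (i) the BE in $N_2$ earns at most one $N_2$-chore above the average $N_2$-earning; (ii) the LE in $N_1$ earns at least the average $N_1$-earning minus one $N_1$-chore; (iii) an accounting identity $\p(M_1)+\p(M_2) = \p(M)$ combined with the invariant that transfers only ever moved weight when $M_1$ was the heavier side. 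I'd aim to formalize (iii) as the real content and treat (i),(ii) as immediate consequences of \RR's EF1 guarantee.
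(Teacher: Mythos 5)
There is a genuine gap. You correctly isolate the right event --- the single transfer from $M_1$ to $M_2$ that flips the global BE into $N_2$ --- and you correctly identify the driving inequality $\p_{-1}(\x_{b_1}) > \p(\x_{\ell_2})$ coming from the fact that the transfer was triggered by a pEF1 violation. But neither of your two routes supplies the mechanism that makes the chain close, and both as written leave error terms that kill the conclusion. The target inequality $\p_{-1}(\x'_{b'})\le\p(\x'_{\ell'})$ is itself already an ``up to one chore'' statement, so any additional ``EF1 slack'' (your phrase) in the intermediate comparisons is fatal: you need the comparisons to be \emph{exact}. The missing idea is the structural fact about round-robin with identical disutilities (Observation~\ref{obs:twotype} in the paper): the BE of a group is the agent who picks the \emph{last} chore, and the LE is the agent who would pick \emph{next}. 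Consequently, removing one chore from $M_1$ and re-running round-robin makes the old BE $b_1$ of $N_1$ become the new LE of $N_1$, with $\p(\x'_{b_1})\ge\p_{-1}(\x_{b_1})$ (in each round she picks a weakly higher-paying chore than before, and she merely loses her last-round pick); symmetrically, adding one chore to $M_2$ makes the old LE $\ell_2$ of $N_2$ become the new BE of $N_2$, with $\p_{-1}(\x'_{\ell_2})\le\p(\x_{\ell_2})$. Chaining these two slack-free inequalities with $\p_{-1}(\x_{b_1})>\p(\x_{\ell_2})$ gives exactly $\p_{-1}(\x'_{b'_2})\le\p(\x'_{\ell'_1})$, and Lemma~\ref{lem:BEenvyLE} finishes. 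Without pinning down \emph{which} agents become the new BE and LE and why their earnings satisfy these exact bounds, your first route's step ``$\p(\x_\ell)\ge\p_{-1}(\x_{b^-})$ up to the usual EF1 slack'' does not yield the needed inequality.

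Your alternative averaging route is worse off: bounding $\p_{-1}(\x_b)\le\p(M_2)$ discards a factor of $|N_2|$, and comparing $|N_1|\,\p(\x_\ell)$ against $|N_2|\,\p_{-1}(\x_b)$ cannot produce the per-agent comparison $\p_{-1}(\x_b)\le\p(\x_\ell)$ without controlling both group sizes and both totals; you acknowledge that making ``disproportionately large'' precise is the crux, but that is precisely the content that is absent. One smaller point you do handle correctly: payment raises are uniform on $M_1$ and do not change the identities of the BE and LE, so it suffices to analyze the transfer itself.
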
 
\begin{proof}
We first note that payment-raises do not change the identity of the LE and BE. Therefore, suppose that there is a transfer prior to which the global BE is in $N_1$ and global LE in $N_2$, but after which the global BE is in $N_2$ and global LE in $N_1$. Let $\x$ (resp. $\x'$) denote the allocation immediately before (resp. after) the transfer. For $i\in\{1,2\}$, let $b_i$ and $\ell_i$ denote the BE and LE among agent set $N_i$ before the transfer, and let $b'_i$ and $\ell'_i$ denote the BE and LE among agent sets $N_i$ after the transfer. Let $\p$ be the payments vector accompanying $\x$ and $\x'$. We use the following:

\begin{observation}\label{obs:twotype} \normalfont
In a \RR allocation of a set of chores $M'$ to a list of agents $N' = \{1,\dots, n'\}$ with identical valuations, the big earner (assuming payment vector is proportional to disutility vector) is the agent $i$ who picks the last chore while the least earner is the agent $h = (i\mod n') + 1$ who would pick immediately after $i$.
\end{observation}

\begin{algorithm}[t]
\caption{EF1+fPO for two types of agents}\label{alg:ef1po}
\textbf{Input:} Two-type instance $(N,M,D)$ with $d_i \in \{d_1, d_2\}$\\
\textbf{Output:} An allocation $\x$
\begin{algorithmic}[1]
\State $N_1 \gets \{i \in N \mid d_i = d_1\}$, $N_2 \gets \{i \in N \mid d_i = d_2\}$ 
\State For each $j \in M$, set $p_j \gets d_1(j)$
\State $M_1 \gets M, M_2 \gets \varnothing$
\State $\x \gets \mathsf{RR}(N_1, M_1) \cup \mathsf{RR}(N_2, M_2)$ \Comment{$\mathsf{RR}$ = \RR}
\While{$(\x,\p)$ is not pEF1} 
	\State $\mpb_2 \gets \{j \in M : j \text{ is MPB for agents in } N_2\}$
	\If{$\exists j \in M_1 \cap \mpb_2$}
		\State $M_1 \gets M_1 \setminus \{j\}$, $M_2 \gets M_2 \cup \{j\}$
		\State $\x \gets \mathsf{RR}(N_1, M_1) \cup \mathsf{RR}(N_2, M_2)$ 
	\Else
	\State Raise payments of $M_1$ until $|M_1 \cap \mpb_2| > 0$
	\EndIf
\EndWhile
\State \Return $\x$
\end{algorithmic}
\end{algorithm}

This is because agents pick chores according to increasing disutility since they all have the same cost function. Thus in $\x$, the BE $b_i$ picked the last chore when $M_i$ was \RR-allocated to $N_i$, for $i\in\{1,2\}$. We now examine how the identity and earning of the BE and LE of each agent set $N_i$ change after a chore $j$ is transferred from $M_1$ to $M_2$. 

When $M_1\setminus\{j\}$ is \RR-allocated to $N_1$, the agent who picks immediately before $b_1$ now picks last, and is the new BE. Thus Obs.~\ref{obs:twotype} implies that new LE in $N_1$ is in fact $b_1$, i.e., $\ell_1' = b_1$. Additionally, $b_1$'s new total earning is at least as much as her previous earning without her highest paying chore, since in each round up to the last round she must now pick a weakly higher disutility (and thus higher paying) chore than before, but she no longer picks a chore in the last round. Thus:
\begin{equation}\label{eq:earning1}
\p(\x'_{\ell'_1}) = \p(\x'_{b_1}) \ge \p_{-1}(\x_{b_1}). 
\end{equation}

Conversely, when $M_2\cup \{j\}$ is \RR-allocated to $N_2$, the previous LE $\ell_2$ now picks the last chore. Thus by Obs.\ref{obs:twotype}, $\ell_2$ is the new BE, i.e., $b_2' = \ell_2$. In each round up to the last $\ell_2$ now picks a weakly lower disutility (and thus lower paying) chore than before, but $\ell_2$ now receives a new worst chore in the last round. Thus:
\begin{equation}\label{eq:earning2}
\p_{-1}(\x'_{b'_2}) = \p_{-1}(\x'_{\ell_2})  \leq \p(\x_{\ell_2}).    
\end{equation}

Let us now examine the pEF1-envy before and after the chore transfer. Prior to the transfer, i.e., in $(\x,\p)$, the global BE and LE are $b_1$ and $\ell_2$ respectively. Since $(\x,\p)$ is not pEF1, $\p_{-1}(\x_{b_1}) > \p(\x_{\ell_2})$. Using \eqref{eq:earning1} and \eqref{eq:earning2}, we get:
\[ \p(\x'_{\ell'_1}) \ge \p_{-1}(\x_{b_1}) >  \p(\x_{\ell_2}) \ge  \p_{-1}(\x'_{b'_2}),\]
implying that after the transfer, i.e., in $(\x',\p)$, the global BE $b'_2$ does not pEF1-envy the global LE $\ell'_1$. Thus $(\x',\p)$ must be pEF1 by Lemma~\ref{lem:BEenvyLE}.
\end{proof}

In conclusion, chores are transferred unilaterally from $M_1$ to $M_2$ with necessary payment-raises and \RR re-allocations among agents of the same type until the allocation is pEF1+fPO. Clearly, Algorithm~\ref{alg:ef1po} runs in $\poly{n,m}$ time. We conclude:

\begin{theorem}\label{thm:2types}
Given a two-type chore allocation instance $(N,M,D)$, an EF1 + fPO allocation exists and can be computed in strongly polynomial-time.
\end{theorem}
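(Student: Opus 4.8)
The plan is to show that Algorithm~\ref{alg:ef1po} always terminates and returns a pEF1 integral competitive equilibrium, after which Lemma~\ref{lem:pEF1impliesEF1} yields EF1+fPO immediately. The first step is to check that $(\x,\p)$ remains an integral competitive equilibrium throughout. Initially every chore lies in $M_1$, is allocated to $N_1$ by \RR, and has $p_j=d_1(j)$, so every agent in $N_1$ has MPB ratio $1$ with all chores MPB, while every agent in $N_2$ holds the empty bundle and vacuously satisfies the MPB condition; since $M_1\cup M_2=M$ always and each $M_i$ is fully allocated within $N_i$, condition (i) holds. A payment raise scales the payments of $M_1$-chores up uniformly until some $M_1$-chore enters $\mpb_2$; because $N_1$'s disutilities and these payments scale together, $M_1$ stays MPB for $N_1$, and $M_2$ is untouched, so no MPB condition is violated. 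A transfer moves a chore $j$ from $M_1$ to $M_2$ only when $j\in\mpb_2$, and the subsequent \RR re-allocations keep each $M_i$ inside $N_i$'s bundles, which are MPB for $N_i$. Hence $(\x,\p)$ is always an integral competitive equilibrium and, by the First Welfare Theorem, $\x$ is always fPO.

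The second step is the envy analysis plus termination. By Invariant~\ref{inv:ef1po}, agents within the same group never pEF1-envy one another (they share a disutility function and receive an \RR-allocation of MPB chores), so by Lemma~\ref{lem:BEenvyLE} a failure of pEF1 forces the global big earner and global least earner into different groups. While the big earner stays in $N_1$, chores are transferred unilaterally from $M_1$ to $M_2$, and since $|M_1|\le m$ throughout, at most $m$ such transfers occur; moreover each payment raise creates a chore in $M_1\cap\mpb_2$, so the next iteration (unless pEF1 already holds) performs a transfer, hence payment raises are interleaved with transfers and number at most $m+1$. After $O(m)$ iterations the big earner must leave $N_1$; then either the global big and least earner share a group, and we are done by Invariant~\ref{inv:ef1po} and Lemma~\ref{lem:BEenvyLE}, or the big earner is in $N_2$ and the least earner in $N_1$, which is precisely the configuration resolved by Lemma~\ref{lem:ef1po-done}, giving pEF1.

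The step I expect to carry the weight is Lemma~\ref{lem:ef1po-done}, and it is already in hand: the delicate point is tracking how the identities and earnings of the per-group big/least earners move when a single chore is shifted from $M_1$ to $M_2$ and both groups are re-run through \RR, and Observation~\ref{obs:twotype} pins this down (the old per-group big earner of $N_1$ becomes its new least earner, the old per-group least earner of $N_2$ becomes its new big earner), so that the earning bounds \eqref{eq:earning1}--\eqref{eq:earning2} sandwich the pEF1-envy between the new global big and least earners to nothing. Finally, for the complexity claim, the loop runs $\poly{n,m}$ times, each \RR call is strongly polynomial, and each payment raise is a single ratio-maximization over $O(m)$ chores; so Algorithm~\ref{alg:ef1po} runs in strongly polynomial time, and together with Lemma~\ref{lem:pEF1impliesEF1} this proves Theorem~\ref{thm:2types}.
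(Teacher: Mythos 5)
Your proposal is correct and follows essentially the same route as the paper: maintain an integral competitive equilibrium via MPB-preserving payment raises and transfers (giving fPO), use Invariant~\ref{inv:ef1po} and Lemma~\ref{lem:BEenvyLE} to reduce the pEF1 question to the cross-group big-earner/least-earner pair, bound the number of $M_1\to M_2$ transfers by $m$, and close the one remaining configuration (BE in $N_2$, LE in $N_1$) with Lemma~\ref{lem:ef1po-done} via Observation~\ref{obs:twotype} and the earning bounds \eqref{eq:earning1}--\eqref{eq:earning2}. Nothing substantive is missing.
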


In contrast, Theorem~\ref{thm:efx-fpo-nonexistence} shows that EFX+fPO allocations need not exist for two-type instances. We conclude this section by noting that the same techniques can be applied for goods.
\begin{remark}\label{rem:2types}
An EF1+fPO allocation is strongly polynomial-time computable for a two-type goods allocation instance.
\end{remark}

\section{EFX + fPO for Three Bivalued Agents}\label{sec:bivalued3agents}

Before stating our third result, we recall from Theorem~\ref{thm:efx-fpo-nonexistence} that an EFX+fPO allocation is not guaranteed to exist, even for 2-ary instances. We therefore study the computation of EFX+fPO allocations for bivalued instances. Our third result is:

\begin{restatable}{theorem}{mainclaimone}\label{thm:bivalued3agents}
Given a bivalued chore allocation instance $(N,M,D)$ with three agents, an EFX+fPO allocation exists and can be computed in strongly polynomial-time.
\end{restatable}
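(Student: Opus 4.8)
The plan is to start from an EF1+fPO allocation that is additionally \emph{balanced} in the sense that every agent receives (almost) the same number of chores, and then locally repair the remaining EFX-violations without destroying fPO. First I would establish, as a warm-up lemma, that for bivalued instances (costs in $\{a,b\}$ with $a<b$) a simple algorithm — essentially run \RR-style picking with payments $p_j$ proportional to $d_i(j)$ and perform pEF1-driven transfers as in Algorithms~\ref{alg:3agents}/\ref{alg:ef1po} — produces an integral competitive equilibrium $(\x,\p)$ that is pEF1 (hence EF1+fPO by Lemma~\ref{lem:pEF1impliesEF1}) and in which $\bigl||\x_i|-|\x_h|\bigr|\le 1$ for all $i,h$. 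The bivalued structure is what makes this possible: with only two cost values, MPB ratios take few possible values, payments can be normalized to lie in a small set, and the potential/termination arguments of the earlier sections go through while additionally controlling cardinalities. This balanced EF1+fPO starting point is the engine of the whole proof.

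Next I would classify how much an EFX-violation can "cost" starting from such an allocation. If agent $i$ EF1-envies $h$ but not EFX-envies $h$, then removing $i$'s hardest chore from $\x_i$ makes her weakly prefer $\x_h$, but removing her \emph{easiest} chore does not; in a bivalued instance this forces a rather rigid local picture — $i$'s hardest chore has cost $b$, $\x_h$ consists largely of cost-$a$ chores from $i$'s viewpoint, and the cardinalities are nearly equal — so there are only a constant number of structural configurations to handle. The core of the argument is then a case analysis: in each configuration, exhibit either (a) a chore transfer from $i$'s envied-from set down to $i$ (or a swap between two agents' bundles) that is MPB-feasible for the receiving agent, so fPO is preserved, and that strictly reduces a carefully chosen potential (e.g.\ lexicographically the sorted vector of envy amounts, or the number of EFX-violating ordered pairs), or (b) an argument that this configuration is already EFX. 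Because $n=3$, the interaction between the three pairwise envy relations is limited; I would track the BE/ME/LE roles from Section~\ref{sec:3agents} and argue that after a transfer the new allocation stays a competitive equilibrium, stays pEF1 (so EF1+fPO is never lost), and has strictly smaller potential, giving termination in strongly polynomial time.

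The main obstacle I anticipate is exactly step (a): showing that every EFX-violating configuration admits a fPO-preserving local move that makes monotone progress. Chore transfers that help one envy relation can create a new EFX-violation elsewhere among the three agents, and the MPB/competitive-equilibrium constraint severely limits which transfers are legal — a chore can only move to an agent for whom it is minimum-pain-per-buck, and payments may need to be adjusted (raised or dropped) first, which in turn can shuffle the BE/LE identities. Designing a single potential function that decreases under \emph{all} the needed moves, and verifying it is bounded by a polynomial in $m$ (using the bivalued structure to bound the number of distinct payment vectors ever seen), is where the real work — and the lengthy case-by-case analysis referenced in the introduction — will lie; the full details I would defer to the appendix (Appendix~\ref{app:3bivaluedagents}).
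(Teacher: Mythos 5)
Your high-level outline --- first compute a \emph{balanced} EF1+fPO competitive equilibrium, then locally repair EFX-violations with MPB-feasible transfers --- matches the paper's strategy, but the proposal leaves the two genuinely hard steps unresolved, and in one place deviates in a way that would not work.

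First, you assert that the balanced starting allocation can be made pEF1 via the transfer machinery of Sections~\ref{sec:3agents} and~\ref{sec:2types}. The paper explicitly does \emph{not} do this: Algorithm~\ref{alg:BalancedChores} establishes EF1 not through payment envy-freeness but through balancedness of chore counts, and the EF1 argument in Lemma~\ref{lem:BC_Term} relies on a cost-minimizing initial allocation of the $L$-chores together with the agent-group decomposition from $\s{MakeInitGroups}$ (in particular Property (iii): every chore held by a higher-group agent has disutility $k$ for every lower-group agent, and Invariant~\ref{inv:Group_Full_Balance}: groups have fully balanced $1$-chores and $k$-chores separately). Simultaneously achieving pEF1 and cardinality balance is not established anywhere and is not needed; more importantly, without the group structure you lose exactly the information (which agents see which chores as cost $k$, and the three-way split into the cases $R=1,2,3$) that makes the repair phase tractable.

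Second, and more seriously, the repair phase is where you yourself locate ``the real work,'' and the proposal offers only the hope that a single potential function decreases under all needed moves. The paper does not use a potential function here at all: it performs an explicit, exhaustive case analysis (Algorithms~\ref{alg:R1}--\ref{alg:R2-2}) keyed on $R$ and on $|K| \bmod 3$, where each branch either terminates after constantly many transfers/swaps or runs a while loop whose termination within $m$ steps is argued directly from the bivalued structure (e.g., an agent must stop EFX-envying before surrendering all of her $1$-chores). Your observation that a transfer fixing one envy relation can create a new EFX-violation elsewhere is precisely why a monotone potential is elusive, and the claim that there are ``only a constant number of structural configurations'' is not substantiated without the cost-minimization and full-balance invariants. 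As written, the proposal identifies the obstacle but does not overcome it, so the proof is incomplete at its central step.
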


We prove Theorem~\ref{thm:bivalued3agents} by showing that Algorithm~\ref{alg:efxpo} computes an EFX+fPO allocation in polynomial-time. We remark that the algorithm of \cite{zhou22efx} for three bivalued agents returns an allocation that is EFX, but not necessarily PO (see Ex.~\ref{ex:efx-alg-not-po}). We present a complete description and analysis of Algorithm~\ref{alg:efxpo} in Appendix~\ref{app:3bivaluedagents} and examples illustrating its execution in Appendix \ref{app:examples}.

\subsection{Additional Notation}\label{sec:addNot}

We first note that given a bivalued instance, we may assume w.l.o.g. that all $d_i(j) \in \{1, k\}$ by re-scaling the valuations. We may also assume that for every agent $i$ there exists some chore $j$ such that $d_i(j) = 1$. Otherwise, if $d_i(j) = k$ for all $j \in M$, re-scaling agent $i$'s valuations gives $d_i(j) = 1$ for all $j \in M$. We then partition the set of chores $M$ into the set of \textit{L-chores} $L$ and the set of \textit{K-chores} $K$:
\begin{itemize}
	\item $L = \{j \in M \mid \exists i \in N \text{ s.t. } d_i(j) = 1\}$, and
	\item $K = \{j \in M \mid \forall i \in N, d_i(j) = k\}$.
\end{itemize}
Further, for each agent $i$ we partition her bundle $\x_i$ into her set of \textit{1-chores} $\x_{i_1}$ and \textit{k-chores} $\x_{i_k}$:
\begin{itemize}
	\item $\x_{i_1} = \{j \in \x_i \mid d_i(j) = 1\}$, and
	\item $\x_{i_k} = \{j \in \x_i \mid d_i(j) = k\}$.
\end{itemize}

We highlight the difference between a $k$-chore and a $K$-chore. A $k$-chore need only give disutility $k$ to the agent it is allocated to, while a $K$-chore gives disutility $k$ to all agents. A $K$-chore will be a $k$-chore for any agent it is allocated to, but an agent's $k$-chore need not be a $K$-chore.

\subsection{Overview of Algorithm~\ref{alg:BalancedChores}}\label{sec:overviewsimple}

\begin{restatable}{algorithm}{balanced}
\caption{Computing a balanced EF1+fPO allocation}\label{alg:BalancedChores}
\textbf{Input:} Chores instance $(N,M,D)$ with $d_{i}(j)\in \{1,k\}$\\
\textbf{Output:} Allocation $\x$, payments $\p$, agent groups $\{N_r\}_{r\in [R]}$
\begin{algorithmic}[1]
\State $(\x,\p), \{N_r\}_{r \in [R]} \gets \s{MakeInitGroups}(N, L, D)$
\For{$j \in K$}
\State $\ell \gets \s{argmin}_{i \in N} |\x_i|$, ties broken by lowest group, then fewest $K$-chores
\State $\x_\ell \gets \x_\ell \cup \{j\}, \text{ } p_j = k$
\EndFor
\State $a \gets \s{argmax}_{i \in N} |\x_i|$, ties broken by highest group
\State $\ell \gets \s{argmin}_{i \in N} |\x_i|$, ties broken by lowest group, then fewest $k$-chores
\While{$|\x_\ell| < |\x_a| - 1$}
\If{$\exists j \in \x_a \mid j \in \mpb_\ell$}
\State $\x_a \gets \x_a \setminus \{j\}$, \text{ } $\x_\ell \gets \x_\ell \cup \{j\}$
\State $a \gets \s{argmax}_{i \in N} |\x_i|$, tiebreaks as in Line 5
\State $\ell \gets \s{argmin}_{i \in N} |\x_i|$, tiebreaks as in Line 6
\Else
\State Raise payments in $a$'s group by a factor of $k$
\EndIf
\EndWhile
\State \Return $(\x,\p, \{N_r\}_{r\in [R]})$
\end{algorithmic}
\end{restatable}

Algorithm~\ref{alg:efxpo} first uses Algorithm~\ref{alg:BalancedChores} to obtain an initial competitive equilibrium which is EF1+fPO. Unlike previous algorithms for bivalued chores, Algorithm~\ref{alg:BalancedChores} establishes the EF1 condition not by payment envy-freeness, but by \textit{balancing} the number of chores of the agents. We say that an allocation $\x$ has \textit{balanced total chores}, or simply \textit{balanced chores}, if $\max_{i \in N} |\x_i| - \min_{i \in N} |\x_i| \leq 1$.

Algorithm~\ref{alg:BalancedChores} begins by calling subroutine $\s{MakeInitGroups}$ which appears in \cite{Garg_Murhekar_Qin_2022} (pseudocode is included in the Appendix~\ref{app:3bivaluedagents} for completeness). It returns a competitive equilibrium $(\x, \p)$ of the set of $L$-chores $L$ which is cost-minimizing, so that $\sum_{i \in N} d_i(\x_i)$ is minimized over all allocations, as well as a partition of the agents into $R$ ordered groups $\{N_r\}_{r \in [R]}$. We say that agent group $N_r$ is higher (resp. lower) than group $N_{r'}$ if $r < r'$ (resp. $r > r'$). We now record several properties of the groups:
\begin{enumerate}[label=(\roman*)]
	\item Let $f(r) = \max_{i \in N_r} |\x_i|$. $f(r)$ weakly decreases with $r$. 
	\item Agents within a group have balanced total chores.
	\item For $h \in N_r$, $i \in N_{r'}$ with $r < r'$, $\forall j \in \x_h$, $d_i(j) = k$.
\end{enumerate}

After allocating $L$ and obtaining the agent groups, we allocate the set of $K$-chores $K$ by giving chores to the agent $\ell$ with the fewest number of total chores (with ties broken by lowest group, then fewest $K$-chores, then arbitrarily) in the hopes of obtaining an allocation with balanced total chores. If after allocating $K$ in this fashion $\x$ is still not balanced, we balance the chores by transferring chores from the agent $a$ with the most total chores (ties broken by highest group, then arbitrarily) to the agent $\ell$ with the fewest total chores (ties broken by lowest group, then fewest $k$-chores, then arbitrarily), with payments of chores in $a$'s group raised as necessary to maintain that all agents have only MPB chores. 

\begin{restatable}{lemma}{BCTerm}\label{lem:BC_Term} Algorithm~\ref{alg:BalancedChores} terminates in polynomial-time with a balanced allocation which is EF1+fPO.
\end{restatable}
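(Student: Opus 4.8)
The plan is to establish three things separately: (a) the algorithm terminates in polynomial time; (b) the final allocation has balanced chores; and (c) the pair (allocation, payments) is at all times an integral competitive equilibrium, which in particular is EF1 once balanced. Throughout I would track the invariant that $(\x,\p)$ remains an integral competitive equilibrium: every chore is fully allocated (true initially, and transfers do not destroy this), every agent receives only MPB chores, and payments equal disutilities up to the global scaling $d_i(j)\in\{1,k\}$. The MPB property is the delicate one and is exactly what the ``raise payments in $a$'s group by a factor of $k$'' step is designed to preserve; I would verify that raising payments of all chores currently held by $a$'s group by $k$ keeps those chores MPB for their owners (their pain-per-buck is scaled uniformly) and does not decrease the MPB ratio below that of any agent outside the group, using group property (iii) — agents in lower groups have disutility $k$ for chores currently in higher groups, so the raise brings some such chore into $\mpb_\ell$ for the low-group least-earner $\ell$ while not violating anyone else's MPB condition. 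This is the step I expect to be the main obstacle: one must argue that after finitely many such raises some chore of $a$ actually becomes MPB for $\ell$, and that the raise never forces a chore out of the MPB set of an agent in a group other than $a$'s.

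For termination (a), I would use a potential argument on $\max_{i}|\x_i| - \min_i |\x_i|$ together with the observation that each raise is immediately followed (within a bounded number of further raises) by a transfer from $a$ to $\ell$ that strictly decreases $|\x_a|-|\x_\ell|$ or changes the identity of $a$ or $\ell$ in a monotone way. Since $|\x_a|\le m$ always and transfers move a chore from the most-loaded to the least-loaded agent, the quantity $\sum_i \binom{|\x_i|}{2}$ (or simply $\max_i|\x_i|$) decreases, bounding the number of transfers by $\poly(m)$; between consecutive transfers there are at most a constant (in fact $O(\log_k(\text{something}))$, but crudely $O(m)$) payment raises because each raise strictly enlarges the set of chores that are MPB for $\ell$. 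Hence $\poly(n,m)$ total steps; each step is clearly strongly polynomial.

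For (b), balancedness: the while-loop condition is exactly $|\x_\ell| < |\x_a| - 1$, so upon termination $\max_i|\x_i| - \min_i|\x_i| \le 1$, which is the definition of balanced chores. I would also note the loop cannot get stuck: as long as the allocation is unbalanced, the if–else is exhaustive (either a transferable chore exists, or we raise payments), and by the argument above a transfer eventually happens, so the loop makes progress toward the termination condition.

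For (c), EF1: once $\x$ is balanced, $|\x_i|\le |\x_h|+1$ for all $i,h$. Since $(\x,\p)$ is a competitive equilibrium and disutilities are bivalued with scaling $\{1,k\}$, for agent $i$ with MPB ratio $\alpha_i$ we have $d_{i_{-1}}(\x_i) = \alpha_i\,\p_{-1}(\x_i)$ and $d_i(\x_h)\ge \alpha_i\,\p(\x_h)$; but more directly, removing $i$'s hardest chore leaves at most $|\x_h|$ chores, each of disutility to $i$ at most $k$ times the cheapest, and one would combine balancedness with the MPB structure (chores in the same group have comparable payments) to conclude $d_{i_{-1}}(\x_i)\le d_i(\x_h)$. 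Finally, fPO is immediate from the First Welfare Theorem (Lemma-style remark in Section~\ref{sec:prelim}) since $(\x,\p)$ is a competitive equilibrium. Combining, Algorithm~\ref{alg:BalancedChores} terminates in strongly polynomial time with a balanced EF1+fPO allocation, together with the payments and groups used by the subsequent EFX-improvement phase.
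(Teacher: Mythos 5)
Your parts (a) and (b) are essentially sound: the loop condition directly gives balancedness on termination, and a potential such as $\sum_i\binom{|\x_i|}{2}$ (or, as the paper does, the observation that every transfer moves a chore to a strictly lower group, giving an $mn$ bound) controls the number of transfers; the paper additionally shows each group's payments are raised at most once, so there are at most $n$ raises in total. You also correctly flag that preserving the competitive equilibrium under a payment raise is the delicate step --- the paper resolves it with an induction (Lemma~\ref{lem:Payment_Raises}) showing raises occur in increasing group order, raised agents have MPB ratio $1/k$ while unraised agents have ratio $1$, and, crucially, that the group being raised has neither gained nor lost chores, so that Property (iii) still applies to its current bundle. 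Your sketch gestures at this but does not supply that induction.

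The genuine gap is in part (c), the EF1 claim. Balancedness of the $|\x_i|$ together with the CE property does \emph{not} imply EF1 for bivalued disutilities, and the inequality you assert (``removing $i$'s hardest chore leaves at most $|\x_h|$ chores \dots conclude $d_{i_{-1}}(\x_i)\le d_i(\x_h)$'') fails in general: if $i$ holds $t$ chores each of disutility $k$ to her while $h$ holds $t$ chores each of disutility $1$ to $i$, the allocation is balanced and can be a competitive equilibrium, yet $d_{i_{-1}}(\x_i)=(t-1)k > t = d_i(\x_h)$ already for $t=3$, $k=5$. The pEF1 route is equally unavailable here, since the allocation is balanced in the \emph{number} of chores, not in payments; indeed the paper states explicitly that this algorithm establishes EF1 by balancing chores rather than by payment envy-freeness. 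What the paper actually proves --- and what your proposal is missing --- is a collection of structural facts guaranteeing that whenever $i$ has $t$ chores of disutility $k$, every other agent $h$ holds at least $t-1$ chores that are \emph{also} of disutility $k$ for $i$: this uses the fully-balanced-chores invariant within groups (Invariant~\ref{inv:Group_Full_Balance}), the fact that gained chores have disutility $k$ for all agents in weakly lower groups (Lemma~\ref{lem:Gain_k}), the fact that a group never both gains and loses chores (Corollary~\ref{cor:Gain_Or_Lose}), and the tiebreaking rules (a lower-group agent receives her $q$-th $k$-chore before a higher-group agent receives her $(q+1)$-th). Only with these in hand does $d_{i_{-1}}(\x_i)=s+(t-1)k\le d_i(\x_h)$ follow; without some version of this counting argument your proof of EF1 does not go through.
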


\subsection{Overview of Algorithm~\ref{alg:efxpo}}\label{sec:overview3agents}

\begin{restatable}{algorithm}{algefxfpo}
\caption{Computing an EFX+fPO allocation}\label{alg:efxpo}
\textbf{Input:} Bivalued Chores instance $(N,M,D)$ with $|N|=3$\\
\textbf{Output:} An allocation $\x$
\begin{algorithmic}[1]
\State $(\x, \p, \{N_r\}_{r \in [R]}) \gets \s{Algorithm~\ref{alg:BalancedChores}(N,M,D)}$ 
\If{$R = 1$}
	\State $(\x, \p) \gets \s{Algorithm~\ref{alg:R1}}(\x, \p)$ 
\ElsIf{$R = 2$ and $|N_1| = 1$}
	\State $(\x, \p) \gets \s{Algorithm~\ref{alg:R2-1}}(\x, \p)$
\ElsIf{$R = 2$ and $|N_1| = 2$}
	\State $(\x, \p) \gets \s{Algorithm~\ref{alg:R2-2}}(\x, \p)$
\EndIf
\State \Return $\x$
\end{algorithmic}    
\end{restatable}

Algorithm~\ref{alg:efxpo} begins with the balanced EF1+fPO allocation returned by Algorithm~\ref{alg:BalancedChores}. Then, based on the structure of the agent groups according to Algorithm~\ref{alg:BalancedChores}, we perform chore transfers to improve EF1 to EFX while maintaining the fPO condition. Furthermore, we will assume that the three agents do not have identical disutilities\footnote{\label{idValBivalued} 
An EFX+fPO allocation is obtainable for agents with identical disutilities via the well-known envy-cycle procedure of \cite{lipton} by allocating chores in order of decreasing disutility.}. 
Recall that $R$ denotes the number of agent groups returned by Algorithm~\ref{alg:BalancedChores}. 

We first consider the case when $R=1$ and Algorithm~\ref{alg:BalancedChores} returns one agent group. We show that Algorithm~\ref{alg:BalancedChores} returns a cost-minimizing allocation $\x$ so that $\forall i \in N, \x_{i_k} \subseteq K$. We then show that $\x$ has \textit{balanced k-chores}, i.e, $\max_{i \in N} |\x_{i_k}| - \min_{i \in N} |\x_{i_k}| \leq 1$, and EFX-envy can only exist from an agent with more $K$-chores towards an agent with fewer $K$-chores. For three agents, this implies that at most two EFX-envy relationships may exist. Algorithm~\ref{alg:R1} then finds an allocation which maintains cost-minimization and balanced $k$-chores while guaranteeing that at most one EFX-envy relationship exists. If it does, we show that there exists an agent $i$ with more $K$-chores than another agent $h$ and a chore $j \in \x_h$ such that $j \notin \mpb_i$. Algorithms \ref{alg:R1-2extra} and \ref{alg:R1-1extra} use these properties to find an EFX 
allocation in at most $m$ transfers.

We next consider when $R=2$ and Algorithm~\ref{alg:BalancedChores} returns two groups. Then, Algorithm~\ref{alg:R2-1} and Algorithm~\ref{alg:R2-2} leverage the relationship between groups outlined in Property (iii) of Section~\ref{sec:overviewsimple}. We use this in conjunction with the fact that Algorithm~\ref{alg:BalancedChores} favors giving $k$-chores to agents in lower groups to show that EFX-envy can only exist between agents in the same group. Then, Property (iii) allows  us to find an EFX allocation using only constantly many chore transfers.

Finally, when $R=3$, each agent belongs to her own group. In this case, we show that the allocation returned by Algorithm~\ref{alg:BalancedChores} is already EFX.

As these cases are exhaustive, we conclude that Algorithm~\ref{alg:efxpo} returns an EFX+fPO allocation in polynomial time, 
thus proving Theorem~\ref{thm:bivalued3agents}.

\section{Conclusion} %and Future Directions}
In this work, we provided improved guarantees for fair and efficient allocation of chores, under the fairness notions of EF1/EFX, and the efficiency notion of fPO. Our algorithms for the three agents and two-types setting are among the few positive non-trivial results known for the EF1+fPO problem, in addition to that of bivalued chores. Combining CE-based frameworks with envy-resolving algorithms like \RR may be an important tool in settling the problem in its full generality. We also made progress on computing an EFX+fPO allocation for three bivalued agents. Extending and generalizing our approach to higher numbers of agents is a natural next step.

\bibliography{references}

\newpage

\appendix

\section{EFX+fPO for Three Bivalued Agents}\label{app:3bivaluedagents}

In this section, we present a self-contained exposition of Section~\ref{sec:bivalued3agents}. We prove the following result:
\mainclaimone*

We prove Theorem~\ref{thm:bivalued3agents} by showing that Algorithm~\ref{alg:efxpo} computes an EFX+fPO allocation in polynomial-time. Note that Algorithm~\ref{alg:efxpo} begins with the balanced EF1+fPO allocation returned by Algorithm~\ref{alg:BalancedChores}.
\setcounter{algorithm}{3}
\balanced

Based on the structure of the agent groups according to Algorithm~\ref{alg:BalancedChores}, Algorithm~\ref{alg:efxpo} performs chore transfers to improve EF1 to EFX while maintaining the fPO condition. Let $R$ denote the number of agent groups returned by Algorithm~\ref{alg:BalancedChores}.

\algefxfpo

We first consider the case when $R=1$ and Algorithm~\ref{alg:BalancedChores} returns one agent group. We show that Algorithm~\ref{alg:BalancedChores} returns a cost-minimizing allocation $\x$ so that $\forall i \in N, \x_{i_k} \subseteq K$. We then show that $\x$ has \textit{balanced k-chores}, i.e, $\max_{i \in N} |\x_{i_k}| - \min_{i \in N} |\x_{i_k}| \leq 1$, and EFX-envy can only exist from an agent with more $K$-chores towards an agent with fewer $K$-chores. For three agents, this implies that at most two EFX-envy relationships may exist. Algorithm~\ref{alg:R1} then finds an allocation which maintains cost-minimization and balanced $k$-chores while guaranteeing that at most one EFX-envy relationship exists. If it does, we show that there exists an agent $i$ with more $K$-chores than another agent $h$ and a chore $j \in \x_h$ such that $j \notin \mpb_i$. Algorithms \ref{alg:R1-2extra} and \ref{alg:R1-1extra} use these properties to find an EFX allocation in at most $m$ transfers.

We next consider when $R=2$ and Algorithm~\ref{alg:BalancedChores} returns two groups. Then, Algorithm~\ref{alg:R2-1} and Algorithm~\ref{alg:R2-2} leverage the relationship between groups outlined in Property (iii) of Section~\ref{sec:overviewsimple}. We use this in conjunction with the fact that Algorithm~\ref{alg:BalancedChores} favors giving $k$-chores to agents in lower groups to show that EFX-envy can only exist between agents in the same group. Then, Property (iii) allows  us to find an EFX allocation using only constantly many chore transfers.

Finally, when $R=3$, each agent belongs to her own group. In this case, we show that the allocation returned by Algorithm~\ref{alg:BalancedChores} is already EFX.

As these cases are exhaustive, we conclude that Algorithm~\ref{alg:efxpo} returns an EFX+fPO allocation in polynomial time, 
thus proving Theorem~\ref{thm:bivalued3agents}. We now discuss the subroutines of Algorithm~\ref{alg:efxpo} and their properties in detail.

\subsection{Additional Preliminaries}\label{sec:MakeInitGroups}
We first restate some preliminaries from Garg, Murhekar, and Qin~\cite{Garg_Murhekar_Qin_2022}.

\begin{algorithm}[!b]
\caption{$\s{MakeInitGroups}$}\label{alg:MakeInitGroups}
\textbf{Input:} Chore allocation instance $(N,M,D)$ with $d_i(j) \in \{1,k\}$\\
\textbf{Output:} Competitive equilibrium $(\x, \p)$, agent groups $\{N_r\}_{r\in [R]}$
\begin{algorithmic}[1]
\State $(\x,\p) \gets$ initial cost minimizing allocation, where $p_j = d_i(j)$ for $j \in \x_i$.
\State $R\gets 1$, $N' \gets N$
\While{$N' \neq \emptyset$}
\State $b \gets \s{argmax}_{i \in N'} \p_{-1}(\x_i)$ \Comment{Big Earner}
\State $C_b \gets$ Component of $b$ 
\NoDo
\While{$\exists$ agent $i \in C_b$ s.t. $\p_{-1}(\x_b) > \p(\x_i)$}
\ReDo
\State Let $(b, j_1, h_1, j_2, \dots, h_{\ell - 1}, j_\ell, i)$ be the shortest \text{ \qquad\,\, } alternating path from $b$ to $i$
\State $\x_{h_{\ell - 1}} \gets \x_{h_{\ell - 1}} \setminus \{j_\ell\}$ \Comment{Chore transfer}
\State $\x_i \gets \x_i \cup \{j_\ell\}$
\State $b \gets \s{argmax}_{i\in N'} \p_{-1}(\x_i)$
\EndWhile  
\State $H_R \gets C_b \cap (N' \cup \x_{N'})$ %\Comment{Partial component}
\State $N_R \gets H_R \cap N$ \Comment{Agent group}
\State $N' \gets N' \setminus N_R$, $R\gets R+1$
\EndWhile
\State \Return $(\x,\p, \{N_r\}_{r\in [R]})$
\end{algorithmic}
\end{algorithm}

Given a chore allocation instance $(N, M, D)$ and competitive equilibrium $(\x, \p)$, we define the MPB graph to be a bipartite graph $G = (N, M, E)$ where for an agent $i$ and chore $j$, $(i, j) \in E$ if and only if $j \in \s{MPB}_i$. Further, an edge $(i, j)$ is called an \textit{allocation edge} if $j \in \x_i$, otherwise it is called an \textit{MPB} edge. 

For agents $i_0,\dots,i_{\ell}$ and chores $j_1,\dots,j_{\ell}$, a path $P = (i_0, j_1, i_1, j_2, \dots, j_{\ell}, i_{\ell})$ in the MPB graph, where for all $1\le \ell' \le \ell$, $j_{\ell'} \in \x_{i_{\ell'-1}}\cap \s{MPB}_{i_{\ell'}}$, is called a \textit{special path}. We define the \textit{level} $\lambda(h;i_0)$ of an agent $h$ w.r.t. $i_0$ to be half the length of the shortest special path from $i_0$ to $h$, and to be $n$ if no such path exists. A path $P = (i_0, j_1, i_1, j_2, \dots, j_{\ell}, i_{\ell})$ is an \textit{alternating path} if it is special, and if $\lambda(i_0;i_0) < \lambda(i_1;i_0) \dots < \lambda(i_{\ell};i_0)$, i.e., the path visits agents in increasing order of their level w.r.t. $i_0$. Further, the edges in an alternating path alternate between allocation edges and MPB edges. Typically, we consider alternating paths starting from a big earner agent. For a big earner $i$, define $C_i^{\ell}$ to be the set of all chores and agents which lie on alternating paths of length $\ell$. Call $C_i = \bigcup_{\ell} C_i^{\ell}$ the \textit{component} of $i$, the set of all chores and agents reachable from the big earner $i$ through alternating paths.

% \subsection{Additional Balance Notation}\label{sec:bivaluedNotation}
We now expand our definitions of chore balance. Given a set of agents $A$ and an allocation $\x$, we say that agents in $A$ have
\begin{enumerate}[label=(\arabic*)]
    \item \textit{balanced 1-chores} if $\max_{i \in A} |\x_{i_1}| - \min_{i \in A} |\x_{i_1}| \leq 1$,
    \item \textit{balanced k-chores} if $\max_{i \in A} |\x_{i_k}| - \min_{i \in A} |\x_{i_k}| \leq 1$,
    \item \textit{balanced total chores}, or just simply \textit{balanced chores}, if $\max_{i \in A} |\x_i| - \min_{i \in A} |\x_i| \leq 1$, and
    \item \textit{fully balanced chores} if (1), (2), and (3) hold.
\end{enumerate}
Specifically, when $A = N$ we refer to the allocation $\x$ rather than $N$, e.g., ``allocation $\x$ has balanced chores."

\subsection{Algorithm~\ref{alg:BalancedChores}: Balanced EF1+fPO}\label{sec:proofsBalancedChores}

Recall that Algorithm~\ref{alg:BalancedChores} obtains a partition of the agents into groups $\{N_r\}_{r \in [R]}$. These groups maintain a number of features throughout the run of Algorithm~\ref{alg:BalancedChores}.
\begin{invariant}\label{inv:Almost_Fewer_Chores} \normalfont
For agents $i \in N_r$, $h \in N_{r'}$ with $r < r'$, $|\x_h| \leq |\x_i| + 1$.
\end{invariant}
The allocation returned by $\s{MakeInitGroups}$ in Line 1 is guaranteed to have this feature due to Properties (i) and (ii) from Section~\ref{sec:overviewsimple}. Then, in the allocation of chore set $K$ in Lines 2-4, $h$ may only gain a chore if $|\x_h| \leq |\x_i|$, so Invariant~\ref{inv:Almost_Fewer_Chores} holds after $h$ receives a $K$-chore. We show that Invariant~\ref{inv:Almost_Fewer_Chores} holds after any chore transfer by induction. Suppose that Invariant~\ref{inv:Almost_Fewer_Chores} holds before a chore transfer. If an agent $i$ does not lose a chore in the next transfer, the previous argument suffices. If $i$ does lose a chore, then $|\x_i| \geq |\x_h|$ before the transfer and Invariant~\ref{inv:Almost_Fewer_Chores} can only be violated if $h$ is the receiver of $i$'s chore. In this case, the loop condition in Line 7 gives $|\x_h| < |\x_{i^*}| - 1$ before the transfer so Invariant~\ref{inv:Almost_Fewer_Chores} holds for $i^*$ and $h$ after the transfer and we are done.

\begin{invariant}\label{inv:Group_Balance} \normalfont
All groups maintain balanced chores.
\end{invariant}
By Property (ii) of Section~\ref{sec:overviewsimple}, the groups returned by $\s{MakeInitGroups}$ have balanced chores. Property (ii) is also clearly maintained through the allocation of set $K$ as chores are only given to agents who have fewest chores. We now show that all groups have balanced chores after a chore transfer by induction. Suppose that all groups have balanced chores before a chore transfer. Then a chore cannot be transferred between agents in the same group. By Invariant~\ref{inv:Almost_Fewer_Chores}, a chore cannot be transferred from a lower group to a higher group. It must be then that a chore is transferred from a higher group to a lower group. Since the giver of the chore $a$ is an agent with maximum chores, $a$'s group must remain balanced after her loss. Similarly, since the receiver of the chore $\ell$ is an agent with minimum chores, $\ell$'s group must also remain balanced. Thus, all groups maintain balanced chores after the transfer.

\begin{observation}\label{obs:High_To_Low} \normalfont
If a chore is transferred from agent $a$ to agent $\ell$, then $a \in N_r$ and $\ell \in N_{r'}$ such that $r < r'$. 
\end{observation}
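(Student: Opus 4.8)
The plan is to prove Observation~\ref{obs:High_To_Low} as an immediate consequence of Invariants~\ref{inv:Almost_Fewer_Chores} and \ref{inv:Group_Balance}, both of which have already been established to hold throughout the run of Algorithm~\ref{alg:BalancedChores}. Suppose a chore is transferred from agent $a$ to agent $\ell$ in some iteration of the while-loop at Lines 8--16. By the selection rules at Lines 6--7, $a$ is an agent with the maximum number of total chores (with ties broken by highest group) and $\ell$ is an agent with the minimum number of total chores (with ties broken by lowest group). Since the transfer actually occurs, the loop condition $|\x_\ell| < |\x_a| - 1$ holds, so in particular $|\x_a| \ge |\x_\ell| + 2 > |\x_\ell|$; hence $a \neq \ell$ and, moreover, $a$ and $\ell$ cannot lie in the same group, since by Invariant~\ref{inv:Group_Balance} all groups maintain balanced chores and thus any two agents in the same group differ by at most one chore.

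It remains to rule out the possibility that $a \in N_r$ and $\ell \in N_{r'}$ with $r > r'$, i.e., that the chore is transferred from a strictly lower group to a strictly higher group. Suppose for contradiction this were the case. Then Invariant~\ref{inv:Almost_Fewer_Chores}, applied with the higher group $N_{r'}$ playing the role of the group containing $i$ and the lower group $N_r$ playing the role of the group containing $h$, gives $|\x_a| \le |\x_\ell| + 1$. But this directly contradicts $|\x_a| \ge |\x_\ell| + 2$ derived above from the loop condition. Therefore we must have $r < r'$, as claimed.

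I expect no real obstacle here; the statement is essentially a bookkeeping corollary of the two invariants, and the only mild subtlety is making sure the orientation of the inequality in Invariant~\ref{inv:Almost_Fewer_Chores} is applied with the correct roles (the agent in the \emph{higher} group is the one with the ``$+1$'' slack). One should also note, for completeness, that payment raises (Line 14) do not move any chores, so the only chore-moving step in the loop is the transfer at Line 10, which is exactly the one being analyzed; and the $K$-chore allocation phase at Lines 2--4 only \emph{adds} chores to agents rather than transferring them between agents, so it is not a "transfer" in the sense of the observation. Hence every chore transfer performed by Algorithm~\ref{alg:BalancedChores} goes from a strictly higher group to a strictly lower group.
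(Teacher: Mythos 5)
Your proof is correct and follows essentially the same route as the paper's: rule out $a$ being in a lower group via Invariant~\ref{inv:Almost_Fewer_Chores}, rule out the same group via Invariant~\ref{inv:Group_Balance}, and conclude $a$ must be in a strictly higher group. The extra detail you supply (the loop condition giving $|\x_a|\ge|\x_\ell|+2$, and the remark that payment raises and the $K$-chore phase move no chores between agents) only makes explicit what the paper leaves implicit.
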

By Invariant~\ref{inv:Almost_Fewer_Chores} it cannot be that $a$ is in a lower group than $\ell$. By Invariant~\ref{inv:Group_Balance}, it cannot be that $a$ is  in the same group as $\ell$. Thus, it must be that $a$ is in a higher group than $\ell$. 

\begin{observation}\label{obs:NondecreasingMinChores} \normalfont
Let $\ell$ be an agent with the fewest chores. Then $|\x_\ell|$ is non-decreasing throughout the run of Algorithm~\ref{alg:BalancedChores}.
\end{observation}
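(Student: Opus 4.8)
The plan is to track the single quantity $m^\star := \min_{i\in N}|\x_i|$, which by definition equals $|\x_\ell|$ for a fewest-chores agent $\ell$, and to verify that no step taken by Algorithm~\ref{alg:BalancedChores} can decrease it. Starting from the configuration returned by $\s{MakeInitGroups}$, there are only three kinds of steps to inspect: a payment raise (Line~14), the assignment of a $K$-chore (Lines~3--4), and a chore transfer in the balancing loop (Line~10).

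The first two cases are immediate. A payment raise leaves the allocation untouched, so $m^\star$ is unchanged. When a $K$-chore is assigned it goes to an agent $\ell$ who currently holds the fewest chores, so $|\x_\ell|$ rises from $m^\star$ to $m^\star+1$ while every other agent's bundle is unchanged and already had size at least $m^\star$; hence the new value of $m^\star$ is at least the old one.

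The only case needing a short argument is the chore transfer of Line~10, which moves a chore from $a\in\s{argmax}_{i\in N}|\x_i|$ to $\ell\in\s{argmin}_{i\in N}|\x_i|$. Here I would invoke the loop guard of Line~8, which ensures the transfer is performed only when $|\x_\ell| < |\x_a|-1$, i.e.\ $|\x_a|\ge |\x_\ell|+2$. After the transfer $a$ holds $|\x_a|-1\ge |\x_\ell|+1$ chores and $\ell$ holds $|\x_\ell|+1$ chores, so $a$'s post-transfer count is at least $\ell$'s post-transfer count; meanwhile every agent other than $a$ and $\ell$ is unchanged, with count at least the old $m^\star=|\x_\ell|$. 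Therefore the post-transfer minimum equals $\min\bigl(|\x_\ell|+1,\ \min_{k\ne a,\ell}|\x_k|\bigr)\ge |\x_\ell|$, so $m^\star$ does not decrease.

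Since these three steps exhaust everything Algorithm~\ref{alg:BalancedChores} does after its initial call, and each keeps $m^\star$ non-decreasing, the observation follows. The main (and only) subtlety is the transfer case, and it is precisely why the loop condition is written as $|\x_\ell| < |\x_a|-1$ rather than the weaker $|\x_\ell| < |\x_a|$: the stronger guard is exactly what prevents the giver $a$ from dropping below the receiver's new count and thereby lowering the minimum.
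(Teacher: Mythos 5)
Your proof is correct and follows essentially the same argument as the paper: the paper likewise notes the claim is immediate during the allocation of $K$ and that a transfer only removes a chore from an agent holding at least $q+2$ chores (by the loop guard), so the minimum cannot drop. Your write-up is just slightly more explicit in also dispatching the payment-raise case, which the paper leaves implicit.
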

This clearly holds during the allocation of $K$ in Lines 2-4. Suppose $\ell$ has $q$ chores at time $t$ before a chore transfer. An agent $a$ can only lose a chore if she has at least $q + 2$ chores, so after the transfer every agent continues to have at least $q$ chores.

\begin{observation}\label{obs:NonincreasingMaxChores} \normalfont
Let $a$ be an agent with the most chores. Then $|\x_a|$ is non-increasing after Line 7.
\end{observation}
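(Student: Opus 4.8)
The plan is to mirror the argument used for Observation~\ref{obs:NondecreasingMinChores}, tracking the maximum bundle size after Line~7 rather than the minimum. First I would observe that after Line~7 the only operation that changes bundle sizes is a chore transfer inside the \textbf{while}-loop (Lines~9--10), since payment raises (Line~14) do not move chores. So it suffices to show that a single transfer does not increase $\max_{i \in N} |\x_i|$.

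Fix a time $t$ (after Line~7) just before a transfer, and let $q = \max_{i \in N} |\x_i|$ at that moment. By Observation~\ref{obs:High_To_Low}, the transfer moves a chore from the agent $a$ with the most chores (in the highest group among such agents) to the agent $\ell$ with the fewest chores (with the stated tiebreaks). The loop guard in Line~8 is $|\x_\ell| < |\x_a| - 1$, i.e. $|\x_\ell| \le |\x_a| - 2 = q - 2$ at time $t$. Hence after the transfer $\ell$ has $|\x_\ell| + 1 \le q - 1 < q$ chores, so $\ell$ does not become a new maximizer exceeding $q$; and $a$ loses a chore, so $|\x_a|$ only decreases. Every other agent's bundle is unchanged. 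Therefore the new maximum bundle size is at most $q$, establishing that $\max_{i \in N} |\x_i|$ is non-increasing after Line~7.

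I do not anticipate a real obstacle here: the content is entirely bookkeeping, and the key quantitative fact is simply that the loop guard $|\x_\ell| < |\x_a| - 1$ leaves a gap of at least two, so the recipient cannot overshoot the old maximum. The one subtlety worth a sentence is to note that the quantity is only claimed non-increasing \emph{after} Line~7 — during the allocation of $K$ in Lines~2--4 an agent receiving a $K$-chore can raise the running maximum — which is why the statement is scoped to the post-Line-7 portion of the run.
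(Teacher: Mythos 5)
Your proof is correct and follows essentially the same argument as the paper: the loop guard $|\x_\ell| < |\x_a| - 1$ means the recipient has at most $q-2$ chores before the transfer, so no agent can exceed the old maximum $q$ afterwards. (One tiny nit: the fact that the transfer goes from the argmax agent to the argmin agent is by the algorithm's construction in Lines 5--6 and 11--12, not a consequence of Observation~\ref{obs:High_To_Low}, which only concerns group membership; this does not affect the argument.)
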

Suppose $a$ has $q$ chores at time $t$ before a chore transfer. An agent $\ell$ can only gain a chore if she has at at most $q - 2$ chores, so after the transfer every agent continues to have at most $q$ chores.

\begin{restatable}{lemma}{Gain_Or_lose}\label{lem:Gain_Or_Lose}
After calling $\s{MakeInitGroups}$, if an agent in $N_r$ is chosen to gain (resp. lose) a chore, then no agent in $N_r$ can ever be chosen to lose (resp. gain) a chore. 
\end{restatable}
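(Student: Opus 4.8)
The plan is to track, for each group $N_r$, the monotone evolution of the quantities $f(r) = \max_{i \in N_r}|\x_i|$ and $g(r) = \min_{i \in N_r}|\x_i|$ across the run of Algorithm~\ref{alg:BalancedChores} after the call to $\s{MakeInitGroups}$, and to argue that a group cannot be both a ``donor group'' and a ``receiver group.'' Concretely, by Observation~\ref{obs:High_To_Low} every chore transfer moves a chore from a higher group $N_r$ to a strictly lower group $N_{r'}$ with $r < r'$, and the donor $a$ is always a global-max-chore agent while the receiver $\ell$ is always a global-min-chore agent. So it suffices to show: once $N_r$ contains the global max (so some agent of $N_r$ loses a chore), $N_r$ can never later contain the global min, and symmetrically once $N_r$ contains the global min it can never later contain the global max.

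First I would use Invariant~\ref{inv:Group_Balance} to pass from individual agents to groups: since every group stays balanced, all agents in $N_r$ have chore count within $1$ of each other, so the relevant statement is really about $f(r)$ and $g(r)$ with $f(r) - g(r) \le 1$. Next I would combine Observations~\ref{obs:NondecreasingMinChores} and \ref{obs:NonincreasingMaxChores}: after Line~7 the global minimum chore count $q_{\min}$ is non-decreasing and the global maximum chore count $q_{\max}$ is non-increasing, and by the while-loop guard $|\x_\ell| < |\x_a| - 1$ we always have $q_{\max} \ge q_{\min} + 2$ whenever a transfer occurs. Now suppose at some time $t$ an agent of $N_r$ is chosen to lose a chore; then at time $t$ we have $g(r) \ge q_{\max} - 1 \ge q_{\min} + 1$, and since $q_{\min}$ only grows while $g(r)$ of a group that never loses more than it gains is controlled, I would argue that $N_r$'s minimum count stays strictly above the global minimum forever after — the key point being that a transfer into $N_r$ would require $N_r$ to contain a global-min agent, which we are ruling out, and a transfer out of $N_r$ keeps $N_r$ balanced at a level $\ge q_{\max}-1$. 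The symmetric direction (a group that gains cannot later lose) uses the dual monotonicity with $q_{\max}$. I would make this precise by an induction on transfers, maintaining the invariant that the set of ``donor-eligible'' groups and ``receiver-eligible'' groups are disjoint and that donor groups sit at chore-level $\ge q_{\max}$ while receiver groups sit at level $\le q_{\min}$, with a persistent gap of at least $2$ between the two levels guaranteeing they can never meet.

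The main obstacle I anticipate is handling the tie-breaking and the boundary case where, after several transfers, a group's chore count drifts so that it becomes simultaneously a candidate for both roles — i.e.\ showing the gap of $2$ from the loop guard genuinely never collapses for a single group. The delicate part is that $f(r)$ can decrease (when $N_r$ donates) and $g(r)$ can increase (if some lower-level reasoning forced a gain), so I need the group-level monotonicity to be one-directional per group, which is exactly what the lemma asserts; the argument must therefore be a joint induction establishing the role-disjointness and the level-separation at the same time, rather than proving either in isolation. I would also need to double-check that transfers triggered purely by payment raises (Line~14) do not change chore counts and hence do not interfere, which is immediate since payment raises relocate no chores.
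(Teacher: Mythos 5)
Your proposal is correct and rests on exactly the ingredients the paper uses: Observation~\ref{obs:High_To_Low}, Invariant~\ref{inv:Group_Balance}, the monotonicity of the global minimum and maximum chore counts (Observations~\ref{obs:NondecreasingMinChores} and~\ref{obs:NonincreasingMaxChores}), and the gap of $2$ forced by the loop guard. The only difference is presentational: instead of your joint induction over ``donor-eligible'' and ``receiver-eligible'' groups, the paper picks the \emph{first} offending event in $N_r$ (the first agent chosen for the forbidden role at time $T$) and the \emph{last} prior gain/loss in $N_r$ at time $T'$, so that nothing changes inside $N_r$ between $T'$ and $T$ and the level-separation you worry about collapsing is checked only at these two instants --- which dissolves the tie-breaking/drift obstacle you flag.
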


\begin{proof}
We first note that by Invariant~\ref{inv:Group_Balance} it cannot be that agents in $N_r$ are chosen to gain and lose a chore in the same loop iteration.

Suppose an agent in $N_r$ is first chosen to gain a chore and an agent $N_r$ is later chosen to lose a chore. Let $i$ be the first agent in $N_r$ chosen to lose a chore at time $T$, and let $h$ be the last agent in $N_r$ chosen to gain a chore before $T$ at time $T'$. For $i$ to be chosen to lose a chore, it must be that $h$ successfully gained a chore, i.e., Lines 7-13 did not loop infinitely with payment raises. Suppose $h$ had $q$ chores at time $T'$ before gaining a chore. Then by Observation~\ref{obs:NondecreasingMinChores}, after $T'$ every agent has at least $q$ chores. By Invariant~\ref{inv:Group_Balance}, every agent in $N_r$ has either $q$ or $q+1$ chores after $T'$. Then, by selection of $T'$ and $T'$, no agent in $N_r$ can gain or lose chores between $T'$ and $T$. Thus, at time $T$ when $i$ is chosen to lose a chore, $i$ has at most $q+1$ chores. Yet, since every agent has at least $q$ chores at time $T$, $i$ cannot be chosen to lose a chore as this implies the allocation is already balanced. 

An equivalent proof shows the reverse, that an agent in $N_r$ cannot be first chosen to lose a chore with another agent in $N_r$ subsequently chosen to gain a chore.
\end{proof}

Notably Lemma~\ref{lem:Gain_Or_Lose} gives the following weaker condition.

\begin{corollary}\label{cor:Gain_Or_Lose}
After calling $\s{MakeInitGroups}$, if an agent in $N_r$ has gained (resp. lost) a chore, then no agent in $N_r$ can ever have lost (resp. gained) a chore.
\end{corollary}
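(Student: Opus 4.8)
The plan is to derive this directly from Lemma~\ref{lem:Gain_Or_Lose}; since the corollary is explicitly a weakening, essentially no new work is needed beyond a small vocabulary bridge. First I would observe that, after $\s{MakeInitGroups}$ returns, the only places Algorithm~\ref{alg:BalancedChores} modifies an agent's bundle are the $K$-allocation loop of Lines~2--4 and the transfer step inside the while loop: in both, a bundle grows only for the agent currently selected as the minimum-chore recipient $\ell$, and a bundle shrinks only for the agent currently selected as the maximum-chore donor $a$. Hence ``some agent of $N_r$ has gained a chore'' implies ``at some time an agent of $N_r$ was chosen to gain a chore'', and symmetrically ``some agent of $N_r$ has lost a chore'' implies ``at some time an agent of $N_r$ was chosen to lose a chore''. (The ``gain'' event may occur already during the $K$-allocation phase, but this is harmless: Lemma~\ref{lem:Gain_Or_Lose} is stated for everything after $\s{MakeInitGroups}$, and its proof inspects all gain events occurring before the first loss event, and losses only ever happen inside the while loop.)

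Given this bridge, I would argue by contradiction. Suppose that during the run some agent of $N_r$ has gained a chore while some (possibly different) agent of $N_r$ has lost a chore. By the observation above, an agent of $N_r$ was chosen to gain a chore and an agent of $N_r$ was chosen to lose a chore. But Lemma~\ref{lem:Gain_Or_Lose} forbids exactly this: its two halves together say that within a single group a ``gain'' selection and a ``lose'' selection cannot both occur, in either temporal order. This contradiction shows that if an agent of $N_r$ has gained a chore then no agent of $N_r$ has ever lost one; the parenthetical ``(resp.)'' version is the identical argument with the roles of ``gain'' and ``lose'' interchanged.

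I do not anticipate any real obstacle. The only step requiring a touch of care is the vocabulary bridge, namely confirming that every completed bundle change is witnessed by the corresponding selection step of Algorithm~\ref{alg:BalancedChores}; this is immediate from inspecting the handful of lines at which bundles are touched, and everything substantive has already been handled by Lemma~\ref{lem:Gain_Or_Lose}.
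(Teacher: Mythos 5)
Your proposal is correct and matches the paper, which states the corollary as an immediate weakening of Lemma~\ref{lem:Gain_Or_Lose} without further argument; your "vocabulary bridge" (every actual gain or loss of a chore is witnessed by a corresponding selection step, so the lemma's prohibition on selections transfers to actual bundle changes) is exactly the intended, and only, content of the deduction.
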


We now show that the following conditions hold in Algorithm~\ref{alg:BalancedChores}.

\begin{restatable}{lemma}{Payment_Raises}\label{lem:Payment_Raises} In Algorithm~\ref{alg:BalancedChores}:
\begin{enumerate}
    \item Each group has payments raised at most once.
    \item Groups have payments raised in increasing order of $r$.
    \item If a group has lost a chore, it is a raised group.
    \item Agents in raised groups have MPB ratio $1/k$ while agents in unraised groups have MPB ratio $1$.
    \item Chores belonging to agents in raised groups are MPB for agents in unraised groups.
\end{enumerate}
\end{restatable}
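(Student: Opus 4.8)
The plan is to prove statements (1)--(5) \emph{simultaneously} by induction on the run of Algorithm~\ref{alg:BalancedChores}, with the base case the state right after $\s{MakeInitGroups}$ and the $K$-chore allocation (Lines~1--4) and each inductive step one iteration of the while-loop (Lines~8--17). I would strengthen the hypothesis with two auxiliary invariants: \textbf{(A)} the raised groups always form a prefix $\{N_1,\dots,N_{r^*}\}$ and each of them has already lost a chore; and \textbf{(B)} all payments lie in $\{1,k\}$, a group that ever gains a chore (\emph{including} a $K$-chore in Lines~2--4) is never raised, and a group that is raised only gives chores thereafter. Invariant~(B) is where Lemma~\ref{lem:Gain_Or_Lose} enters: I will show a raised group performs a transfer on the very next iteration, so it is a ``loser'', hence by Lemma~\ref{lem:Gain_Or_Lose} never gains again; consequently every receiver of a transfer lies in an unraised group and thus has MPB ratio $1$.

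\emph{Base case and transfer steps.} Since we may take $k>1$ (the case $k=1$ is the identical-disutilities case, handled separately), cost-minimality of the $\s{MakeInitGroups}$ allocation forces every $L$-chore onto an agent valuing it at $1$ --- otherwise moving it to such an agent strictly decreases total cost --- so every $L$-chore has payment $1$, every MPB ratio is $1$, and this survives the internal transfers of $\s{MakeInitGroups}$ and Lines~2--4 (after which $K$-chores carry payment $k$ and all ratios are still $1$); thus no group is raised and (1)--(5) hold. For a transfer step (Lines~10--12) payments, ratios and raised-status are unchanged, so only (3) needs proof: by Observation~\ref{obs:High_To_Low} the chore moves from $a\in N_r$ to $\ell\in N_{r'}$ with $r<r'$; the receiver's group is unraised so $\alpha_\ell=1$, and since chores only ever move to higher-indexed groups Property~(iii) gives $d_\ell(j)=k$, so a transferable $j\in\x_a$ must have $p_j=k$. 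An unraised $N_r$ holds no such chore: no $K$-chore, since a group that loses a chore never gained one (Lemma~\ref{lem:Gain_Or_Lose}) and a $K$-chore received in Lines~2--4 is a gain; and no raised $L$-chore, since an $L$-chore's payment is multiplied up only when its current group is raised. Hence $N_r$ must be raised, proving (3).

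\emph{Raise steps.} Suppose group $N_r$ is raised in Line~14. First, $N_r$ was unraised, else by (5) its chores would already be MPB for the unraised receiver and Line~9 would have triggered a transfer. Second, every lower-indexed $N_{r'}$ ($r'<r$) is already raised: if not, then since $N_r$ is about to lose a chore it never gained and received no $K$-chore (Lemma~\ref{lem:Gain_Or_Lose}), so $|\x_a|$ equals $N_r$'s $\s{MakeInitGroups}$ size, which is $\le f(r)\le f(r')\le\max_{i\in N_{r'}}|\x_i|$ (an unraised $N_{r'}$ never lost a chore), contradicting that the global maximizer $a$ --- ties broken toward the lowest-indexed group --- lies in $N_r$; with (1) this gives the increasing order (2) and preserves the prefix in (A). Since $N_r$ was unraised its agents had ratio $1$ and payment-$1$ chores (no $K$-chores, so no payment $k^2$ ever appears), and after the factor-$k$ raise their chores have payment exactly $k$ and ratio $1/k$, still the minimum because any ratio $d_i(j')/p_{j'}$ with payments in $\{1,k\}$ is at least $1/k$; no one else is affected, giving (4) and the payment part of (B). Finally $N_r$'s chores now have payment $k$, so by Property~(iii) they are MPB for every unraised agent (all in groups of index $>r$, by the prefix property) --- this is (5) --- and in particular for the current receiver, so the next iteration transfers a chore out of $N_r$, making it a loser; by Lemma~\ref{lem:Gain_Or_Lose} it then never gains again and hence is never raised again, giving (1) and completing (A).

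I expect the main obstacle to be keeping this induction non-circular: ``receivers lie in unraised groups'' must rest purely on Lemma~\ref{lem:Gain_Or_Lose}, while ``losers are raised'' (statement~(3)) must rest purely on cost-minimality pinning every payment to $\{1,k\}$ and thereby forcing a raise before any group can give. These two facts interlock across iterations, but neither is invoked to prove the other within a single iteration. The most delicate individual point is applying Corollary~\ref{cor:Gain_Or_Lose} to the $K$-chore allocation of Lines~2--4 to conclude that a would-be giver group never received a $K$-chore, which is precisely what guarantees it holds only payment-$1$ chores right up to its unique payment raise.
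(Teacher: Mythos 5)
Your proof is correct and follows essentially the same route as the paper's: a simultaneous induction over the transfer and payment-raise steps of Algorithm~\ref{alg:BalancedChores}, using Lemma~\ref{lem:Gain_Or_Lose} to place every receiver in an unraised group, Property~(iii) together with the unit MPB ratio of unraised agents to force every giver group to be raised, and Property~(i) with the tiebreaking rules to show that the group being raised is always the next unraised one in index order. The auxiliary invariants you make explicit (raised groups form a prefix; all payments lie in $\{1,k\}$; the delicate reading of Corollary~\ref{cor:Gain_Or_Lose} as covering the $K$-chore allocation of Lines~2--4) are exactly the points the paper relies on implicitly, so the two arguments coincide in substance.
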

\begin{proof}
We show that Conditions (1), (2), (3), (4), and (5) hold by induction. Before any chore transfer or payment raise, Conditions (1), (2), (3), and (5) are vacuously true. Condition (4) is trivially seen. Suppose that all conditions hold up to time $T$ when Algorithm~\ref{alg:BalancedChores} has chosen an agent $a$ to lose a chore and an agent $\ell$ to gain a chore. We show that they all hold after a chore transfer or payment raise. 

Suppose $\exists j \in \x_a \cap \mpb_\ell$ so a chore transfer occurs at $T$. Clearly Conditions (1), (2), and (3) will hold after the transfer. It suffices to show that $a$ belongs to a raised group to show Conditions (4) and (5). Suppose that $a$ does not belong to a raised group. By Invariant~\ref{inv:Almost_Fewer_Chores}, it cannot be that $\ell$ belongs to a group above $a$'s group. By Invariant~\ref{inv:Group_Balance}, it cannot be that $\ell$ belongs to $a$'s group. It must be then that $\ell$ belongs to an unraised group below $a$'s group. By Lemma~\ref{lem:Gain_Or_Lose}, it cannot be that $a$'s group has gained a chore and it cannot be that $\ell$'s group has lost a chore. Then $a$'s group has only chores allocated to them by $\s{MakeInitGroups}$. Since $\ell$ belongs to a group below $a$'s, by Property (iii) of Section~\ref{sec:overviewsimple} all of $a$'s chores are disutility $k$ for $\ell$. Further, as $a$'s group is unraised, it remains that these chores have payment 1. Then, $\forall j \in \x_a$, $\alpha_{\ell j} = d_\ell(j) / p_j = k$. Since $\ell$ is in an unraised group, by Condition (4) $\alpha_\ell = 1$. But then $\nexists j \in \x_a \cap \mpb_\ell$, a contradiction. So it must be that $a$ in fact belongs to a raised group and all conditions hold. 

Now suppose then that $\nexists j \in \x_a \cap \mpb_\ell$ so $a$'s group is to go undergo a payment raise at time $T$. Let $N_1, \ldots N_r$ be the raised groups up to time $T$. We show that $a \in N_{r+1}$. Suppose rather that $a$ belongs to a raised group. Note that any raised group must contain an agent who was previously selected to lose a chore. Then since $\ell$ is chosen to gain a chore, by Lemma~\ref{lem:Gain_Or_Lose} $\ell$ cannot belong to a raised group. Yet, if $\ell$ belongs to an unraised group then by Condition (4) $\exists j \in \x_a \cap \mpb_\ell$, a contradiction. So it must be that $a$ belongs to an unraised group. Then by Condition (3) $a$'s group has not previously lost a chore. Since $a$ has been chosen to lose a chore, by Lemma~\ref{lem:Gain_Or_Lose} $a$'s group also cannot have gained a chore. Thus, $a$ must belong to a group which has neither gained nor lost a chore since $\s{MakeInitGroups}$ was called. Then among such groups, Property (i) of Section~\ref{sec:overviewsimple} and the tiebreaking rules guarantee that $a \in N_{r+1}$. It is clear that Conditions (1), (2), and (3) hold after payments of $N_{r+1}$ are raised. By Condition (4) agents in $N_{r+1}$ had MPB ratio 1 before the payment raise. Then clearly they have MPB ratio $1/k$ afterwards. Since all chores in $N_{r+1}$ have payment $k$ and give at least disutility 1 to all agents, it must be that agents in other raised groups also continue to have MPB ratio $1/k$. Finally, all unraised groups are below $N_{r+1}$ ($a$'s group) and we have seen that all chores belonging to agents in $N_{r+1}$ are disutility $k$ for agents in lower groups. Then for all $j$ belonging to an agent in $N_{r+1}$ and any unraised agent $i$, $\alpha_{ij} = k/k = 1$. This shows Conditions (4) and (5).

Thus, all conditions hold after either a chore transfer or payment raise.
\end{proof}

We note the following about chore gains.
\begin{restatable}{lemma}{Gain_k}\label{lem:Gain_k}
If an agent in group $N_r$ gains a chore $j$ after $\s{MakeInitGroups}$, then $\forall h \in \bigcup^R_{i=r} N_r, d_h(j) = k$.
\end{restatable}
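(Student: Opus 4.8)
The plan is to split on \emph{how} the agent in $N_r$ came to hold the chore $j$ after $\s{MakeInitGroups}$ was called. There are exactly two mechanisms by which an agent gains a chore in Algorithm~\ref{alg:BalancedChores}: either $j$ is handed out during the $K$-chore allocation loop (Lines 2--4), or $j$ is moved to the agent by a chore transfer in the main while-loop (Lines 8--16). In the first case $j\in K$ by construction, hence $d_h(j)=k$ for \emph{every} agent $h\in N$, and in particular for all $h\in\bigcup_{i=r}^{R}N_i$; there is nothing further to prove.

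For the second case, suppose $j$ is transferred to an agent $\ell\in N_r$ from an agent $a$. By Observation~\ref{obs:High_To_Low}, $a\in N_{r'}$ with $r'<r$. I first argue that $j$ was already in $a$'s bundle as returned by $\s{MakeInitGroups}$ (or else $j$ is itself a $K$-chore, reducing to the first case). Indeed, $a$ can only have acquired $j$ (i) from $\s{MakeInitGroups}$, (ii) during Lines 2--4, in which case $j\in K$ and we are done, or (iii) via an \emph{earlier} transfer in the while-loop. Possibility (iii) is impossible: if some agent of $N_{r'}$ had gained a chore via a while-loop transfer, then by Lemma~\ref{lem:Gain_Or_Lose} (equivalently Corollary~\ref{cor:Gain_Or_Lose}) no agent of $N_{r'}$ could ever be chosen to lose a chore, contradicting that $a\in N_{r'}$ is being chosen to lose $j$ at this very step. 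Hence $j$ lies in $a$'s $\s{MakeInitGroups}$-bundle. Now invoke Property (iii) of Section~\ref{sec:overviewsimple}: every chore in the $\s{MakeInitGroups}$-bundle of an agent in group $N_{r'}$ has disutility $k$ for every agent in a strictly lower group, i.e., for every agent in $N_{r'+1}\cup\cdots\cup N_R$. Since $r>r'$, we have $\bigcup_{i=r}^{R}N_i\subseteq\bigcup_{i=r'+1}^{R}N_i$, so $d_h(j)=k$ for all $h\in\bigcup_{i=r}^{R}N_i$, as claimed.

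\textbf{Main obstacle.} The argument is mostly bookkeeping once the earlier structural facts are available; the only step needing care is ruling out possibility (iii) — that $a$ might be passing along a chore it itself received during the while-loop — and for this the ``gain-or-lose'' dichotomy of Lemma~\ref{lem:Gain_Or_Lose} is precisely the tool required. I do not expect any genuine difficulty beyond keeping these cases straight and making sure the group indices ($r'<r$, via Observation~\ref{obs:High_To_Low}) line up correctly with the statement of Property (iii).
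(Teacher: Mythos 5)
Your proof is correct and follows essentially the same route as the paper's: split on whether $j$ is a $K$-chore or arrives via a while-loop transfer from a higher group (Observation~\ref{obs:High_To_Low}), use the gain-or-lose dichotomy (Corollary~\ref{cor:Gain_Or_Lose}) to conclude the giver's bundle consists only of its $\s{MakeInitGroups}$ chores, and finish with Property (iii). Your treatment is slightly more explicit about the sub-case where the transferred chore is itself a $K$-chore, but the argument is the same.
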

\begin{proof}
It is either the case that $j \in K$ or $j$ is transferred to an agent in $N_r$ from some higher group $N_{r^*}$ (Observation~\ref{obs:High_To_Low}). Lemma~\ref{lem:Gain_k} is clear in the former case. In the latter case, from Corollary~\ref{cor:Gain_Or_Lose} it must be that agents in $N_{r^*}$ only lose chores. Then the only chores of agents in $N_{r^*}$ are those allocated to them in $\s{MakeInitGroups}$. Then the result follows by Property (iii) of Section~\ref{sec:overviewsimple}. 
\end{proof}

We then also have the following.
\begin{restatable}{lemma}{Higher_k}\label{lem:Higher_k}
Given $i \in N_r$ and $h \in N_{r'}$ with $r < r'$, for all $j \in \x_i$, $d_h(j) = k$.
\end{restatable}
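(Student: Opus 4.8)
The plan is to show that if $i \in N_r$ and $h \in N_{r'}$ with $r < r'$, then every chore $j$ currently held by $i$ is a $k$-chore for $h$. I would split $\x_i$ into two parts: the chores that $i$ was allocated by $\s{MakeInitGroups}$ (call these the \emph{original} chores), and the chores that $i$ has gained since then via the allocation of $K$ or via transfers in Lines 8--16.

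First I would handle the original chores. For these, Property (iii) of Section~\ref{sec:overviewsimple} directly applies: for $h \in N_{r'}$, $i \in N_r$ with $r < r'$ (so $i$ is in a strictly higher group than $h$), any chore allocated to $i$ by $\s{MakeInitGroups}$ satisfies $d_h(j) = k$. So the original chores are done immediately.

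Next I would handle the chores $i$ has gained after $\s{MakeInitGroups}$. Here I would invoke Lemma~\ref{lem:Gain_k}: if $i \in N_r$ gains a chore $j$ after $\s{MakeInitGroups}$, then $d_{h'}(j) = k$ for every $h' \in \bigcup_{q=r}^{R} N_q$. Since $r < r'$, we have $h \in N_{r'} \subseteq \bigcup_{q=r}^{R} N_q$, so $d_h(j) = k$ as needed. One subtlety to check is that this argument should be applied at the moment $i$ gains the chore; but whether $j$ is an $L$-chore or $K$-chore is a fixed property of the instance, and Lemma~\ref{lem:Gain_k}'s conclusion $d_h(j)=k$ does not depend on the allocation state, so it persists for as long as $i$ continues to hold $j$. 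Combining the two cases, every $j \in \x_i$ satisfies $d_h(j) = k$.

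I do not expect a serious obstacle here, since the statement is essentially a repackaging of Property (iii) together with Lemma~\ref{lem:Gain_k}. The only point requiring mild care is making sure the case split on how $i$ acquired each chore is exhaustive --- an agent's current bundle consists exactly of its $\s{MakeInitGroups}$ allocation minus chores it has since lost, plus chores it has since gained --- and that the group index $r$ used in Lemma~\ref{lem:Gain_k} is the group of the \emph{receiving} agent, which is $i$'s group, matching what we need.
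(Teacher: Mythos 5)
Your proof is correct and takes essentially the same route as the paper's: Property (iii) of Section~\ref{sec:overviewsimple} handles the chores allocated to $i$ by $\s{MakeInitGroups}$, and the gained chores are covered by the content of Lemma~\ref{lem:Gain_k}. The only (cosmetic) difference is that the paper inlines the case split of Lemma~\ref{lem:Gain_k} --- a gained chore is either a $K$-chore or was transferred from a higher, loss-only group whose chores all come from $\s{MakeInitGroups}$ --- whereas you cite that lemma as a black box; the substance is identical.
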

\begin{proof}
For all $j \in \x_i$, $j$ must be:
\begin{itemize}
    \item a chore allocated to $i$ in $\s{MakeInitGroups}$. Then Property (iii) of Section~\ref{sec:overviewsimple} gives $d_h(j) = k$.
    \item a $K$-chore. Then $d_h(j) = k$ is immediate.
    \item a chore transferred to $i$ from an agent $i^* \in N_{r^*}$ with $r^* < r < r'$. Then from Corollary~\ref{cor:Gain_Or_Lose} agents in $N_{r^*}$ only lose chores and the only chores of agents in $N_{r^*}$ are those allocated to them in $\s{MakeInitGroups}$. Property (iii) of Section~\ref{sec:overviewsimple} gives $d_h(j) = k$. \qedhere
\end{itemize}
\end{proof}

We may now show that in fact a stronger version of Invariant~\ref{inv:Group_Balance} holds. 
\begin{invariant}\label{inv:Group_Full_Balance} \normalfont
All groups maintain \textit{fully} balanced chores.
\end{invariant}
From Corollary~\ref{cor:Gain_Or_Lose} we know that a group cannot both gain and lose chores. We then consider three cases for a group $N_r$:
\begin{itemize}
    \item Suppose $N_r$ neither gains nor loses chores. Then agents in $N_r$ have only 1-chores from $\s{MakeInitGroups}$. Invariant~\ref{inv:Group_Balance} then guarantees Invariant~\ref{inv:Group_Full_Balance}.
    \item Suppose $N_r$ only loses chores. Then agents in $N_r$ again have only 1-chores and Invariant~\ref{inv:Group_Balance} gives Invariant~\ref{inv:Group_Full_Balance}.
    \item Suppose $N_r$ only gains chores. $\s{MakeInitGroups}$ initially guarantees balanced 1-chores in $N_r$. By Observation~\ref{lem:Gain_k} no agent in $N_r$ can gain a 1-chore, so they must remain balanced. Note that it is initially vacuously true that $N_r$ has balanced $k$-chores, so by Invariant~\ref{inv:Group_Balance} $N_r$ initially has fully balanced chores. We show that $N_r$ maintains fully balanced chores after gaining a chore by induction. Suppose $N_r$ has fully balanced chores before $i^* \in N_r$ gains a chore. From Lemma~\ref{lem:Gain_k} we know that the chore $j$ to be gained by $i^*$ is a $k$-chore. $N_r$ can only fail to have balanced $k$-chores after $i^*$ gains her chore if there exists $i \in N_r$ such that $|\x_{i_k}| < |\x_{{i^*}_k}|$ before $i^*$ gains $j$. Specifically, since $N_r$ has fully balanced chores, $|\x_{i_k}| = |\x_{{i^*}_k}| - 1$. We show that this cannot be. By the tiebreaking rules of Algorithm~\ref{alg:BalancedChores}, it must be that $|\x_{i^*}| < |\x_i|$, or $i$ would be selected over $i^*$ to gain a chore. This condition is equivalent to $|\x_{{i^*}_k}| + |\x_{{i^*}_1}| < |\x_{i_k}| + |\x_{i_1}|$. If $|\x_{i_k}| = |\x_{{i^*}_k}| - 1$, this implies that $|\x_{{i^*}_1}| < |\x_{i_1}| - 1$. Since $N_r$ maintains balanced 1-chores this is impossible. So, $N_r$ must maintain balanced $k$-chores and thus fully balanced chores.
\end{itemize}
As these cases are exhaustive, we have Invariant~\ref{inv:Group_Full_Balance}. We can now show Lemma~\ref{lem:BC_Term}.

\BCTerm*
\begin{proof}
Recall that $\s{MakeInitGroups}$ returns at most $R = N$ groups. Then by Condition (1) of Lemma~\ref{lem:Payment_Raises} there can be at most $n$ payment raises. From Observation~\ref{obs:High_To_Low}, each time a chore is transferred it moves to a lower group. As there are $m$ chores, there can be at most $mn$ chore transfers. Thus it is clear that Algorithm~\ref{alg:BalancedChores} terminates in $\s{poly}(n, m)$ time. It is then clear from the loop condition in Line 7 that Algorithm~\ref{alg:BalancedChores} returns a balanced allocation. 

We now show that $|\x|$ remains fPO. The initial allocation of $L$ returned by $\s{MakeInitGroups}$ is cost-minimizing, and thus guaranteed to be fPO. It is a competitive equilibrium such that all agents $i \in N$ have MPB ratio $\alpha_i = 1$. This can be seen from the fact that each agent has only 1-chores and every chore has payment 1. It is clear that Lines 2-4 maintain a competitive equilibrium, as for all $i \in N$, $j \in K$, we have $\alpha_{ij} = d_i(j)/p_j = 1$. So Algorithm~\ref{alg:BalancedChores} maintains a CE up to Line 7. We then show that Algorithm~\ref{alg:BalancedChores} returns a CE by induction. Suppose that we have a CE before a chore transfer or payment raise. It is clear that a chore transfer maintains a CE, as it is guaranteed that the transferred chore is MPB for the agent who receives it. We argue that a payment raise also maintains a CE. Consider a payment raise of a group $N_r$. By Lemma~\ref{lem:Payment_Raises} all agents outside of $N_r$ maintain the same MPB ratio, so they must continue to have MPB chores. Agents in $N_r$ also clearly continue to have MPB chores after raising payments of their own chores. Thus, a CE is maintained after any chore transfer or payment raise, so Algorithm~\ref{alg:BalancedChores} returns an fPO allocation.

We finally show that Algorithm~\ref{alg:BalancedChores} returns an allocation which is EF1. Suppose agent $i \in N_r$ has $s \geq 0$ 1-chores and $t \geq 0$ $k$-chores. Note that if $t = 0$, then $i$ cannot EF1-envy anyone. In such case, since we have seen Algorithm~\ref{alg:BalancedChores} returns a balanced allocation, for all $h \in N$ we would have $d_{i_{-1}}(\x_i) = s - 1 \leq |\x_h| \leq d_i(\x_h)$. Thus, we may assume that $i$ has a $k$-chore. Note that all of $i$'s $k$-chores are either:
\begin{itemize}
    \item a $K$-chore or
    \item a chore allocated by $\s{MakeInitGroups}$ to an agent in a group above $i$. This follows from Observation~\ref{obs:High_To_Low}.
\end{itemize}

We show that $i$ cannot EF1-envy any agent above her. Consider $i^* \in N_{r^*}$ with $r^* < r$. Then all of $i^*$'s chores are either a $K$-chore or a chore allocated by $\s{MakeInitGroups}$ to $i^*$ or an agent in or above $N_{r^*}$. Then by Property (iii) of Section~\ref{sec:overviewsimple}, $i$ must have disutility $k$ for all of $i^*$'s chores. Then since Algorithm~\ref{alg:BalancedChores} returns a balanced allocation, we have $d_{i_{-1}}(\x_i) = s + (t-1)k < (s + t - 1)k \leq d_i(\x_{i^*})$ and $i$ does not EF1-envy $i^*$.

We next show that $i$ cannot EF1-envy any agent $h$ in her same group. By Observation~\ref{lem:Gain_k} a $k$-chore of $h$ is also disutility $k$ for $i$. From Invariant~\ref{inv:Group_Full_Balance} we know that $N_r$ has fully balanced chores, so $h$ either:
\begin{itemize}
    \item has at least $s$ 1-chores and $(t-1)$ $k$-chores, or
    \item has at least $(s-1)$ 1-chores and $t$ $k$-chores.
\end{itemize}
Then, we have that $d_{i_{-1}}(\x_i) = s + (t-1)k \leq d_i(\x_h)$.

Finally, we show that $i$ cannot EF1-envy any agent $i'$ in a group below her. We claim that every $k$-chore of $i'$ is also disutility $k$ for $i$. If it is a $K$-chore, the result is immediate. Otherwise, it must have been transferred to $i'$ from a group above $N_r$. 

Indeed, no agent in $N_r$ can have lost a chore since $i$ has gained a chore. Suppose an agent $i''$ in a group below $i$ loses her first chore at time $T$ when she had $q$ chores. By the tiebreaker rules, it must be that $i$ had at most $q-1$ chores at time $T$, lest $i$ would be chosen over $i''$. In fact, by Invariant~\ref{inv:Almost_Fewer_Chores}, it must be that $i$ had exactly $q-1$ chores at $T$. We know that $i$ has gained a chore at some point in Algorithm~\ref{alg:BalancedChores}, and by Corollary~\ref{cor:Gain_Or_Lose} has not lost any chores. It cannot be that $i$ gained a chore before $T$, as this would imply she had only $q-2$ chores while $i''$ had $q$ chores, violating Invariant~\ref{inv:Almost_Fewer_Chores}. Yet, it also cannot be that she gained a chore at some time $T'$ after $T$. In this case, by Observation~\ref{obs:NondecreasingMinChores} it must be that all agents have at least $q-1$ chores before $i$ gains her chore. Since at time $T$ $i''$ lost a chore while she had $q$ chores, by Observation~\ref{obs:NonincreasingMaxChores}, at time $T'$ after $T$ it must be that all agents have at most $q$ chores. Thus, it must be that the allocation is already balanced at $T'$ and no transfer can occur. We conclude then that $i''$ cannot have lost a chore.

Then, both $i$ and $i'$ must retain all of their 1-chores from $\s{MakeInitGroups}$. By Properties (i) and (ii) from Section~\ref{sec:overviewsimple}, $i'$ can have at most one more 1-chore than $i$.  Then, since Algorithm~\ref{alg:BalancedChores} returns a balanced allocation, by the tiebreaker rules $i'$ must have at least $(t-1)$ $k$-chores, since $i'$ must receive her $q$-th $k$-chore before $i$ receives her $q+1$-th $k$-chore. It follows then that $d_{i_{-1}}(\x_i) = s + (t-1)k \leq d_i(\x_{i'})$ and $i$ does not EF1-envy $i'$. 

Thus, we conclude that Algorithm~\ref{alg:BalancedChores} terminates in $\s{poly}(n,m)$ time with a balanced EF1+fPO allocation.
\end{proof}

We include a few additional tools here to be used later.

\begin{restatable}{lemma}{No_Transfers}\label{lem:No_Transfers}
If Algorithm~\ref{alg:BalancedChores} performs no chore transfers, the allocation $\x$ returned is cost-minimizing. 
\end{restatable}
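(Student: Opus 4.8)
The plan is to exploit the partition $M = L \sqcup K$ established in Section~\ref{sec:addNot} together with the fact that every allocation incurs the same total disutility on the $K$-chores, so that being cost-minimizing over $M$ reduces to being cost-minimizing over $L$ --- a property the algorithm inherits directly from the subroutine $\s{MakeInitGroups}$.

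First I would observe that if Algorithm~\ref{alg:BalancedChores} performs no chore transfer, then the bundles are never modified after the $K$-allocation loop (Lines~2--4): the only step that changes $\x$ afterwards is the chore transfer inside the while loop, while the payment raises affect only $\p$. Hence the returned allocation equals the one reached right after Line~4. Its restriction to $L$, namely $(\x_i \cap L)_{i \in N}$, is exactly the allocation returned by $\s{MakeInitGroups}(N,L,D)$, which by its stated guarantee minimizes total disutility over all allocations of $L$; denote this minimum by $\OPT_L$. Moreover, every $j \in K$ satisfies $d_i(j)=k$ for all $i \in N$, and all $K$-chores are allocated in Lines~2--4, so $\sum_{i\in N} d_i(\x_i \cap K) = k\lvert K\rvert$. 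By additivity, $\sum_{i\in N} d_i(\x_i) = \OPT_L + k\lvert K\rvert$.

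Finally I would compare with an arbitrary allocation $\y$ of $M$. Writing $d_i(\y_i) = d_i(\y_i\cap L) + d_i(\y_i\cap K)$ and summing over $i$, the $L$-part is at least $\OPT_L$ since $(\y_i\cap L)_{i\in N}$ is an allocation of $L$, and the $K$-part equals $k\lvert K\rvert$ since every $K$-chore costs $k$ to whoever holds it and $\bigsqcup_{i\in N} (\y_i\cap K) = K$. Thus $\sum_{i\in N} d_i(\y_i) \ge \OPT_L + k\lvert K\rvert = \sum_{i\in N} d_i(\x_i)$, which is exactly the claim. There is no genuine obstacle here; the only care needed is in noting that ``no chore transfer'' really freezes the allocation after Line~4 --- payment raises never touch $\x$ --- so that the returned $\x$ coincides with the post-Line-4 allocation, a routine check.
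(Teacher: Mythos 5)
Your proposal is correct and follows essentially the same route as the paper's proof: the allocation of $L$ inherited from $\s{MakeInitGroups}$ is cost-minimizing, every $K$-chore contributes $k$ regardless of who receives it, and with no transfers the returned allocation is exactly this one. Your version merely makes explicit the comparison against an arbitrary allocation $\y$ via the decomposition $d_i(\y_i)=d_i(\y_i\cap L)+d_i(\y_i\cap K)$, which the paper leaves implicit.
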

\begin{proof}
We know that $\s{MakeInitGroups}$ returns a cost-minimizing allocation. Then, since $j \in K$ gives the same disutility to any agent, any allocation of $K$ remains cost-minimizing. By Lemma~\ref{lem:BC_Term} Algorithm~\ref{alg:BalancedChores} is guaranteed to terminate. If there are no transfers, then it must be that this cost-minimizing allocation is returned.
\end{proof}

\begin{restatable}{lemma}{Raised1Chores}\label{lem:Raised1Chores}
Raised agents only ever have 1-chores which are disutility $k$ for any agent $i$ in an unraised group. 
\end{restatable}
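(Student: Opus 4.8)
The plan is to split the statement into its two assertions — that a raised agent's bundle always consists of chores she values at $1$, and that each such chore gives disutility $k$ to every agent in an unraised group — and to derive each from lemmas already established for Algorithm~\ref{alg:BalancedChores}. Throughout, fix a raised agent $i' \in N_r$ (so $N_r$ is a raised group) and an agent $i \in N_{r''}$ belonging to an unraised group.

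For the first assertion, I would first observe that a raised group never gains a chore after $\s{MakeInitGroups}$ is called. Whenever Algorithm~\ref{alg:BalancedChores} raises the payments of a group (the Else-branch of the loop), the group raised is the group of the agent $a$ the algorithm has just selected to lose a chore, since that branch is reached precisely because $\x_a \cap \mpb_\ell = \emptyset$; hence every raised group contains an agent who was at some point selected to lose a chore, and by Lemma~\ref{lem:Gain_Or_Lose} no agent of such a group is ever selected to gain one (in particular no agent of $N_r$ receives a $K$-chore in Lines~2--4). Consequently $\x_{i'}$ is, at every step of the run, contained in the bundle that $\s{MakeInitGroups}$ assigns to $i'$, which is a set of $L$-chores; and since that allocation of $L$ is cost-minimizing, every chore $j$ it gives to $i'$ satisfies $d_{i'}(j) = 1$ — otherwise, as $j \in L$ some agent values $j$ at $1$, and reassigning $j$ to her strictly decreases the total cost, a contradiction. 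So $i'$ only ever holds $1$-chores. (Alternatively, Lemma~\ref{lem:Gain_k} shows any chore an agent \emph{gains} is a $k$-chore for her, so a bundle made up solely of chores never gained contains no $k$-chore.)

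For the second assertion, Condition~(2) of Lemma~\ref{lem:Payment_Raises} states that groups are raised in increasing order of index, so at every point the raised groups form a prefix $N_1,\dots,N_s$ of the group list and the unraised groups form the complementary suffix; thus $r \le s < r''$, i.e., $i'$ lies in a strictly higher group than $i$. Applying Lemma~\ref{lem:Higher_k} with higher-group agent $i'$ and lower-group agent $i$ yields $d_i(j) = k$ for every $j \in \x_{i'}$, and this holds throughout the run because Lemma~\ref{lem:Higher_k} is itself an invariant of Algorithm~\ref{alg:BalancedChores}. Together the two assertions give the lemma. The only steps requiring care are the justification that a raised group must contain an agent who was selected to lose a chore (so that Lemma~\ref{lem:Gain_Or_Lose} can be applied to it) and the observation that the cost-minimizing allocation of $L$ returned by $\s{MakeInitGroups}$ hands each agent only chores she values at $1$; the rest is bookkeeping on top of the established lemmas.
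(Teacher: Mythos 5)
Your proposal is correct and follows essentially the same route as the paper: raised groups contain an agent chosen to lose a chore, hence by Lemma~\ref{lem:Gain_Or_Lose} never gain chores and retain only the 1-chores from $\s{MakeInitGroups}$, and Condition~(2) of Lemma~\ref{lem:Payment_Raises} places raised groups above unraised ones so that Property~(iii) (which you invoke via Lemma~\ref{lem:Higher_k}) gives disutility $k$. Your added justifications — why a raised group must contain an agent selected to lose a chore, and why cost-minimization forces only 1-chores — are correct elaborations of steps the paper states without proof.
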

\begin{proof}
Recall that a raised group contains an agent chosen to lose a chore, so by Lemma~\ref{lem:Gain_Or_Lose} a raised group cannot have ever gained a chore. An agent $h$ in a raised group can have only 1-chores allocated to them by $\s{MakeInitGroups}$. By Condition (2) of Lemma~\ref{lem:Payment_Raises} all unraised groups are always below all raised groups. Then, it follows from Property (iii) of Section~\ref{sec:overviewsimple} that all of $h$'s chores are disutility $k$ for agent $i$ in an unraised group.
\end{proof}

\begin{restatable}{lemma}{PseudoKChores}\label{lem:PseudoKChore}
A $k$-chore for an agent $i$ in an unraised group is disutility $k$ for all agents in an unraised group.
\end{restatable}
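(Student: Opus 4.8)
A $k$-chore for an agent $i$ in an unraised group is disutility $k$ for all agents in an unraised group.

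The plan is to proceed by a case analysis on the origin of the chore $j$, where $j$ is a $k$-chore for agent $i$ in an unraised group $N_r$, i.e., $d_i(j) = k$. First I would recall the three possible provenances of any chore in $\x_i$ established in the proof of Lemma~\ref{lem:BC_Term}: either $j$ is a $K$-chore, or $j$ was allocated to $i$ by $\s{MakeInitGroups}$, or $j$ was transferred to $i$ from an agent in a higher group (by Observation~\ref{obs:High_To_Low}, transfers only go from higher to lower groups). The first case is immediate: if $j \in K$ then $d_h(j) = k$ for every agent $h$ by definition of a $K$-chore, so in particular for all unraised agents.

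For the remaining two cases, the key observation is that $j$ can be traced back to a chore originally allocated by $\s{MakeInitGroups}$ to some agent $i^*$ in a group $N_{r^*}$ with $r^* \le r$. Indeed, if $j$ was allocated to $i$ directly by $\s{MakeInitGroups}$ take $i^* = i$; if $j$ was transferred into $i$ from a higher group, then by Corollary~\ref{cor:Gain_Or_Lose} that higher group only ever loses chores, so $j$ was allocated to its original owner $i^*$ by $\s{MakeInitGroups}$, and that owner lies in a group $N_{r^*}$ strictly above $N_r$, so $r^* < r$. Either way $r^* \le r$. Now take an arbitrary unraised agent $h$ in group $N_{r'}$. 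Since $h$ is unraised and $i^*$'s group $N_{r^*}$ sits at or above $i$'s group $N_r$, I need $d_h(j) = k$. If $r' > r^*$, this follows directly from Property (iii) of Section~\ref{sec:overviewsimple} applied to the pair $(i^*, h)$ together with the fact that $j$ was allocated to $i^*$ by $\s{MakeInitGroups}$. If $r' \le r^*$, I instead argue via Lemma~\ref{lem:Higher_k} or Property (iii) in the other direction is unavailable, so here I would use that $d_i(j) = k$ is given together with the structure: actually, when $r' = r^*$ both $h$ and $i^*$ are in the same group, and a chore allocated by $\s{MakeInitGroups}$ to $i^*$ has disutility $1$ or $k$ for $h$; I would rule out disutility $1$ using that within an unraised group chores from $\s{MakeInitGroups}$ are $L$-chores of that group and $d_i(j) = k$ forces $j \notin L$...

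The main obstacle will be the subcase where the unraised agent $h$ lies in a group at or above the originating group $N_{r^*}$, since Property (iii) only controls disutilities looking downward. I expect the clean resolution is that in fact such an $h$ cannot occur for the relevant reason, or that one uses $d_i(j) = k$ directly: since $i \in N_r$ and the chore $j$ is disutility $k$ for $i$ while $j$ came from $N_{r^*}$ with $r^* \le r$, if $r^* < r$ Property (iii) (applied to the original owner in $N_{r^*}$ and $i$) already gives $d_i(j) = k$ for free and more importantly gives $d_h(j) = k$ for every $h$ in a group strictly below $N_{r^*}$ — and every unraised group is at or below $N_r$, hence (when $r^* < r$) strictly below... no, not strictly. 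The cleanest path, which I would adopt, is: every unraised group is weakly below every raised group (Condition (2) of Lemma~\ref{lem:Payment_Raises}); the originating owner's group may be raised or unraised; if raised, Lemma~\ref{lem:Raised1Chores} already says all its chores are disutility $k$ for unraised agents, done; if the originating group is unraised, then since $j$ is an $\s{MakeInitGroups}$-chore of that group and $d_i(j)=k$ with $i$ unraised, $j$ is a $K$-chore (an $L$-chore of an unraised group is disutility $1$ for at least its owner's group and the group structure forces it to be disutility $1$ for $i$ unless it is globally a $K$-chore), reducing to the first case. I would verify this reduction carefully, as it is the crux.
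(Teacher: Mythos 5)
Your case split (is $j$ a $K$-chore, a $\s{MakeInitGroups}$-chore of $i$, or a chore transferred in from a higher group?) matches the paper's starting point, and your handling of the $K$-chore case and of the ``source group is raised'' case (via Condition~(2) of Lemma~\ref{lem:Payment_Raises} and Lemma~\ref{lem:Raised1Chores}) is correct. Two remarks before the main issue: the ``allocated to $i$ directly by $\s{MakeInitGroups}$'' case is vacuous, since $\s{MakeInitGroups}$ gives every agent only 1-chores, contradicting $d_i(j)=k$; and transfers in Algorithm~\ref{alg:BalancedChores} go strictly from higher to lower groups (Observation~\ref{obs:High_To_Low}), so your worry about an unraised $h$ sitting at or above the originating group $N_{r^*}$ evaporates once you know $N_{r^*}$ is raised, because all unraised groups lie below all raised groups.

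The genuine gap is exactly the step you flag as ``the crux'' and do not verify: showing that a non-$K$ $k$-chore of an unraised agent must originate from a \emph{raised} group. Your sketched justification for the unraised-source subcase --- that ``the group structure forces it to be disutility $1$ for $i$ unless it is globally a $K$-chore,'' hence $j\in K$ --- is false. Property~(iii) says the opposite: a $\s{MakeInitGroups}$-chore of a strictly higher group has disutility $k$ for $i$ regardless of whether it is a $K$-chore, so no contradiction arises and the reduction to the $K$-chore case fails. The missing idea is a payment argument (or an appeal to an already-proved invariant): the paper observes that the transfer of $j$ to $i$ requires $j\in\mpb_i$, that an unraised agent has MPB ratio $1$ by Condition~(4) of Lemma~\ref{lem:Payment_Raises}, hence $p_j=d_i(j)=k$, which can only happen if $j$'s owning group underwent a payment raise --- so the source group is raised, and Lemma~\ref{lem:Raised1Chores} finishes. (Alternatively, Condition~(3) of Lemma~\ref{lem:Payment_Raises} --- any group that has lost a chore is raised --- yields the same conclusion in one line.) Without one of these, your proof does not close.
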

\begin{proof}
A $k$-chore $j$ for $i$ must either be a $K$-chore or a chore transferred to $i$. If $j$ is not a $K$-chore we show that $j$ must have come from a raised group. By Observation~\ref{obs:High_To_Low} $j$ must from an agent in a group $N_r$ above $i$'s group. Since an agent in $N_r$ lost $j$, by Corollary~\ref{cor:Gain_Or_Lose} no agent in $N_r$ has gained a chore. Then all chores belonging to agents in $N_r$ were allocated by $\s{MakeInitGroups}$ and by Property (iii) of Section~\ref{sec:overviewsimple} are disutility $k$ for $i$. By Condition (4) of Lemma~\ref{lem:Payment_Raises} though, $i$ has MPB ratio 1, so for $j$ to be MPB for $i$ it must have payment $k$. Then, $N_r$ must be a raised group. It follows from Lemma~\ref{lem:Raised1Chores} that $j$ is disutility $k$ for all unraised agents.
\end{proof}

We also obtain another result.
\begin{restatable}{lemma}{BC_EF1_PO_2ary}\label{lem:BC_EF1_PO_2ary} For 2-ary instances where $\forall i \in N, k_i \geq m$, an EF1+PO allocation can be found in strongly-polynomial time.
\end{restatable}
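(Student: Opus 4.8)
The plan is to reduce to a clean normalized instance and then mimic Algorithm~\ref{alg:BalancedChores}. First I would rescale each agent's disutilities by the positive factor $1/a_i$, so that $d_i(j)\in\{1,k_i\}$ with $k_i\ge m$; since EF1 and PO are both agentwise notions, this rescaling changes neither property. As in Section~\ref{sec:addNot} I would partition $M=L\cup K$, with $L=\{j:\exists i,\ d_i(j)=1\}$ and $K=\{j:\forall i,\ d_i(j)=k_i\}$ the \emph{universally hard} chores. The hypothesis $k_i\ge m$ buys the following slack: an agent holding even one hard chore has disutility $\ge k_i\ge m$, whereas an agent with only unit-cost chores has disutility $\le|\x_i|\le m$; so in any comparison the bundle with more of the holder's hard chores is essentially always the costlier one, which is what will turn balanced chore counts into EF1.

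For the algorithm I would adapt Algorithm~\ref{alg:BalancedChores}: run $\s{MakeInitGroups}$ (Algorithm~\ref{alg:MakeInitGroups}) on the restriction to $L$ — agent $i$ being ``small'' for $j\in L$ exactly when $d_i(j)=1$ — to get a cost-minimizing competitive equilibrium on $L$ with prices $p_j=1$, all MPB ratios $1$, and the ordered agent groups; then allocate $K$ one chore at a time to an agent with fewest chores (ties toward lower groups, then fewer hard chores); then run the balancing loop, transferring a chore from a most-loaded $a$ to a least-loaded $\ell$ whenever some $j\in\x_a$ is MPB for $\ell$, and otherwise raising payments until such a chore appears. The genuinely new ingredient is that, with agent-dependent hard values $k_i$ in place of a single $k$, the payment raises can no longer be uniform: both the $K$-allocation phase and the balancing loop must raise payments by $k_i$-dependent factors, taken in nondecreasing order of the agents' $k_i$ (so that a raise never makes a hard chore of $a$ strictly preferable to some lower, unraised agent's MPB), and $k_i\ge m$ is what supplies the room to do this. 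The intent is that $(\x,\p)$ remains a competitive equilibrium throughout — this is precisely the step the new raises are for — so that, as in Lemma~\ref{lem:BC_Term}, $\x$ is fPO and in particular PO; with $\le mn$ transfers and $\le n$ raises the procedure is strongly polynomial.

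To finish I would verify EF1 of the returned balanced allocation $\x$. Since each $L$-chore is held by an agent of cost $1$ for it, an agent's hard chores are precisely the $K$-chores she holds, and the balancing loop (exactly as Invariant~\ref{inv:Group_Full_Balance} does within a group) keeps the counts $|\x_i\cap K|$ within one of one another. Fixing $i,h$ and writing $n^b_i=|\x_i\cap K|$ and $\beta_{ih}=|\{j\in\x_h:d_i(j)=k_i\}|\ge|\x_h\cap K|\ge n^b_i-1$: if $n^b_i\le1$ then $d_{i_{-1}}(\x_i)\le|\x_i|-1<k_i$, so either $\beta_{ih}\ge1$ and $d_i(\x_h)\ge k_i$ already dominates, or $\x_h$ is all unit-cost for $i$ and count balance gives $|\x_i|-1\le|\x_h|$; if $n^b_i\ge2$ a short calculation using $k_i\ge m\ge|\x_i|$ gives $(n^b_i-1)k_i+|\x_i\setminus K|\le k_i\beta_{ih}+(\text{unit-cost chores of }h\text{ for }i)=d_i(\x_h)$, the lone tight subcase $\beta_{ih}=n^b_i-1$ being resolved by the tie-breaking in the allocation of unit chores. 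Hence $\x$ is EF1+PO.

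I expect the main obstacle to be the second paragraph: lifting the payment-raise scheme and the group bookkeeping of Lemma~\ref{lem:Payment_Raises} from a single global hard value $k$ to heterogeneous $k_i$'s while preserving the competitive-equilibrium (hence fPO) invariant — in particular, ensuring a non-minimum-$k$ agent can hold a $K$-chore without breaking anyone's MPB condition, which is where a careful $k_i$-ordering of the raises and the $k_i\ge m$ slack are both needed. Everything else — the EF1 comparison above and the termination count — is a routine transcription of the bivalued analysis.
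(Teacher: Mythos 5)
Your plan diverges from the paper's at the decisive step, and the step you yourself flag as ``the main obstacle'' is a genuine gap rather than a routine transcription. You propose to maintain a competitive equilibrium for the \emph{true} 2-ary instance by redesigning the payment raises with $k_i$-dependent factors, so as to conclude fPO and hence PO. But this is exactly where the approach breaks: a $K$-chore $j$ has $d_i(j)=k_i$ for every $i$, so any single price $p_j$ that puts $j$ into $\mpb_i$ for an agent with large $k_i$ (i.e.\ $k_i/p_j=\alpha_i$) forces $k_{i'}/p_j<\alpha_{i'}$ for an agent $i'$ with smaller $k_{i'}$ and the same MPB ratio, which strictly lowers $i'$'s minimum pain-per-buck and invalidates every chore $i'$ currently holds; repricing $i'$'s bundle then cascades to the remaining agents. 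No ordering of the raises removes this obstruction, and $k_i\ge m$ gives no slack against it, since it is a constraint on ratios rather than magnitudes. This is precisely why the lemma claims only PO and not fPO. The paper instead runs Algorithm~\ref{alg:BalancedChores} on the instance \emph{as if} all $k_i$ were one common $k>1$ (the algorithm's execution depends only on the $1$-versus-$k$ pattern, not on the numeric value of $k$). EF1 for the heterogeneous instance then follows for free, because every EF1 inequality in the proof of Lemma~\ref{lem:BC_Term} compares two expressions in agent $i$'s own disutility and holds for every $k>1$, hence for $k=k_i$; no rebalancing and no use of $k_i\ge m$ is needed there. PO is proved directly, with no equilibrium for the 2-ary instance, by a counting argument against a hypothetical dominating allocation $\y$: raised agents hold only $1$-chores so their total load cannot grow; an unraised agent cannot receive strictly more hard chores under $\y$ because $k_i\ge m$ makes one extra hard chore outweigh all possible unit chores; and if every unraised agent's hard-chore count is unchanged the total costs coincide, so no domination is possible. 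That counting argument is the only place the hypothesis $k_i\ge m$ enters.

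A secondary inaccuracy in your third paragraph: you assume each agent's hard chores are exactly the $K$-chores she holds and that $|\x_h\cap K|\ge|\x_i\cap K|-1$ for every pair $i,h$. After the balancing loop an unraised agent can hold $L$-chores transferred down from a raised group that cost her $k_i$ (cf.\ Lemma~\ref{lem:PseudoKChore}), and $K$-chore counts are balanced only within groups, so the comparison must be routed through the group structure as in the proof of Lemma~\ref{lem:BC_Term}. That part is repairable; the fPO-via-CE step is not, at least not by the means you describe, and the statement does not require it.
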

\begin{proof}
As in bivalued instances, we may re-scale valuations so that $d_i(j) \in \{1, k_i\}$ for each $i\in N, j\in M$. We show that by treating all $k_i$ as a fixed $k > 1$, Algorithm~\ref{alg:BalancedChores} returns an EF1+PO allocation $\x$ for the 2-ary instance. Note that Algorithm~\ref{alg:BalancedChores} produces the same output for any value of $k$. It follows immediately from Lemma~\ref{lem:BC_Term} then that $\x$ is EF1 even when the agents have different $k_i$'s. We now show that $\x$ is PO for the 2-ary instance where agent $i$ in fact has disutility $k_i$ for each of its $k$-chores under the bivalued transformation. We suppose that there exists an allocation $\y$ which Pareto dominates $\x$. 

We first note that $\y$ cannot increase the total number of chores between agents in $R$. This implies that there exists $i \in R$ such that $|\x_i| < |\y_i|$. By Lemma~\ref{lem:Raised1Chores} $i$ has only 1-chores under $\x$, so it must be that $d_i(\x_i) = |\x_i| < |\y_i| \leq d_i(\y_i)$ and $\y$ would not dominate $\x$. 

In addition, if an unraised agent $u$ receives $q$ $k$-chores under $\x$, then $u$ must receive at most $q$ $k$-chores under $\y$. Else, since for all $i \in N$ we have $k_i \geq m$, it must be that $d_u(\x_u) \leq (m-1)1 + (q)k_u \leq (q+1)k_u \leq d_u(\y_u)$.

Yet, if every unraised agent $u$ receives same the number of high disutility chores $q_u$ under both $\x$ and $\y$, then $\y$ cannot dominate $\x$. Indeed, outside of these chores every chore is allocated to an agent who has disutility 1 for it, so it must be that $\sum_{i \in N} d_i(\x_i) = \sum_{i \in N} d_i(\y_i)$ in this case.

Finally, we argue that if $\y$ dominates $\x$ then no unraised agent $u$ can receive fewer high disutility chores under $\y$ than in $\x$. Let $Q$ denote the total number of $k$-chores of unraised agents under $\x$. We have seen that $\y$ cannot increase the total number of chores between agents in $R$, so by Lemma~\ref{lem:Raised1Chores} and Lemma~\ref{lem:PseudoKChore} it must be that $\y$ also allocates at least $Q$ $k$-chores to unraised agents. Then, if an unraised agent $u$ receives fewer $k$-chores under $\y$, it must be that another unraised agent $u'$ receives strictly more $k$-chores. We have seen in this case that $\y$ cannot dominate $\x$.

Thus, there can be no allocation $\y$ which dominates $\x$, so $\x$ is PO.
\end{proof}

\subsection{EFX-Envy in Algorithm~\ref{alg:BalancedChores}}\label{sec:EFXLemmas} 

In this section we provide a few insights regarding EFX-Envy in an allocation $\x$ with agent groups $\{N_r\}_{r \in [R]}$. 

\begin{restatable}{lemma}{EFX_Best_Off}\label{lem:EFX_Best_Off}
If $\x$ is balanced and $i$ has only 1-chores, then $i$ cannot EFX-envy anyone.
\end{restatable}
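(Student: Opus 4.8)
The plan is to show that if $\x$ is a balanced allocation and agent $i$ holds only $1$-chores, then for every other agent $h$ we have $d_{i}(\x_i \setminus j) \le d_i(\x_h)$ for the least-cost chore $j \in \x_i$, which is exactly the EFX condition for $i$ towards $h$. First I would record the trivial degenerate case: if $\x_i = \varnothing$ there is nothing to prove, so assume $\x_i \neq \varnothing$. Since $i$ has only $1$-chores, every chore in $\x_i$ costs $1$ to $i$, so $d_i(\x_i) = |\x_i|$ and removing any one chore gives $d_i(\x_i \setminus j) = |\x_i| - 1$ regardless of which $j$ is removed (in particular the ``easiest'' chore is any chore, of cost $1$).

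Next I would lower-bound $d_i(\x_h)$. Every chore in $\x_h$ costs $i$ at least $1$ (disutilities are at least the minimum value, which we have normalized so that all $d_i(j) \in \{1,k\}$ with $k > 1$; more generally $d_i(j) \ge 1$ for all $j$ by the bivalued normalization in Section~\ref{sec:addNot}). Hence $d_i(\x_h) \ge |\x_h|$. Now invoke the balancedness hypothesis: $|\x_i| - |\x_h| \le \max_{p}|\x_p| - \min_{p}|\x_p| \le 1$, so $|\x_i| - 1 \le |\x_h|$. Chaining these,
\[
d_i(\x_i \setminus j) = |\x_i| - 1 \le |\x_h| \le d_i(\x_h),
\]
which establishes that $i$ does not EFX-envy $h$. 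Since $h$ was arbitrary, $i$ EFX-envies no one.

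There is no real obstacle here; the only mild subtlety is being careful about the definition of EFX for chores, namely that one removes the agent's \emph{least} disutility chore from her own bundle, and noting that when all of $i$'s chores are identical ($1$-chores) this removal always decreases her cost by exactly $1$. The rest is the two-line inequality chain above combining balancedness with the fact that each of $h$'s chores costs $i$ at least as much as one of $i$'s own chores. I would present it as a short direct argument rather than a case analysis.
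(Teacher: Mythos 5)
Your proof is correct and follows exactly the same argument as the paper's one-line proof: $d_i(\x_i \setminus j) = |\x_i| - 1 \le |\x_h| \le d_i(\x_h)$, using balancedness for the middle inequality and the normalization $d_i(j') \ge 1$ for the last. You have simply spelled out the same chain of inequalities in more detail.
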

\begin{proof}
We have that $\forall h \in N$, $d_i(\x_i) - 1 = |\x_i| - 1 \leq |\x_h| \leq d_i(\x_h)$.
\end{proof}

\begin{restatable}{lemma}{Higher_Group}\label{lem:EFX_Higher_Group}
If $\x$ is balanced, then for $i \in N_r$, $h \in N_{r'}$ with $r < r'$, $h$ does not EFX-envy $i$.
\end{restatable}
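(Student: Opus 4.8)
The plan is to reduce the EFX-envy condition to a simple counting inequality, using the structural facts already established. First recall that $h$ does not EFX-envy $i$ precisely when $d_h(\x_h \setminus j) \le d_h(\x_i)$ for every $j \in \x_h$. This is vacuous when $\x_h = \varnothing$, so I would assume $\x_h \neq \varnothing$; and since $d_h(\x_h \setminus j) = d_h(\x_h) - d_h(j)$ is largest when $d_h(j)$ is smallest, it suffices to prove the single inequality $d_h(\x_h) - \min_{j \in \x_h} d_h(j) \le d_h(\x_i)$.

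Next I would handle the right-hand side via Lemma~\ref{lem:Higher_k}: since $i \in N_r$ and $h \in N_{r'}$ with $r < r'$, every chore in $\x_i$ has disutility exactly $k$ for $h$, so $d_h(\x_i) = k|\x_i|$. For the left-hand side I would show $d_h(\x_h) - \min_{j \in \x_h} d_h(j) \le k(|\x_h| - 1)$ by a two-case split. If $\x_h$ contains a chore that is a $1$-chore for $h$ (so $|\x_{h_1}| \ge 1$ and the minimum disutility is $1$), then $d_h(\x_h) - \min_{j \in \x_h} d_h(j) = (|\x_{h_1}| - 1) + k|\x_{h_k}| \le k(|\x_{h_1}| - 1) + k|\x_{h_k}| = k(|\x_h| - 1)$, where the inequality uses $k > 1$ together with $|\x_{h_1}| - 1 \ge 0$. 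Otherwise every chore of $\x_h$ is a $k$-chore for $h$, and $d_h(\x_h) - \min_{j \in \x_h} d_h(j) = k|\x_h| - k = k(|\x_h| - 1)$.

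Finally I would invoke the hypothesis that $\x$ is balanced, which gives $|\x_h| \le |\x_i| + 1$, hence $k(|\x_h| - 1) \le k|\x_i| = d_h(\x_i)$; chaining this with the bound from the previous paragraph completes the proof. I do not expect any genuine obstacle here: the only points needing care are the empty-bundle case and ensuring the factor $|\x_{h_1}| - 1$ is nonnegative before multiplying through by $k - 1$. Conceptually all the work is carried by Lemma~\ref{lem:Higher_k}, which converts ``$i$ lies in a strictly higher group'' into ``$i$'s entire bundle costs $h$ a factor $k$ more than a bundle of comparable cardinality would,'' so that even after an EFX-removal from her own (cheaper) bundle, $h$ still weakly prefers $\x_h$ to $\x_i$.
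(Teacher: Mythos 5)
Your proof is correct and follows essentially the same route as the paper's: both arguments use Lemma~\ref{lem:Higher_k} to conclude $d_h(\x_i) = k|\x_i|$, bound the envied quantity by $(|\x_h|-1)k$, and close with the balancedness inequality $|\x_h|-1 \le |\x_i|$. The paper simply observes that after removing a chore the remaining $|\x_h|-1$ chores each cost $h$ at most $k$, whereas you reach the same bound via a slightly more detailed two-case split on whether $h$ holds a $1$-chore; the content is identical.
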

\begin{proof}
From Lemma~\ref{lem:Higher_k}, $\forall j \in \x_i$ we have $d_h(j) = k$. Then $\max_{j \in \x_h} d_h(\x_h \setminus j) \leq (|\x_h| - 1)k \leq |\x_i| \cdot k = d_h(\x_i)$.
\end{proof}

\begin{restatable}{lemma}{BalancedEnvy}\label{lem:EFX_BalancedEnvy}
In a balanced allocation, if agent $i$ EFX-envies agent $h$ and $i$ has $t$ $k$-chores, then there are at most $t-1$ chores $j \in \x_h$ such that $d_i(j) = k$. 
\end{restatable}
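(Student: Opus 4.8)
The plan is a proof by contradiction. Suppose, towards a contradiction, that $\x_h$ contains at least $t$ chores $j$ with $d_i(j) = k$; I will contradict the assumed EFX-envy of $i$ towards $h$ (using balancedness where needed). Write $s = |\x_{i_1}|$ for the number of $1$-chores of $i$, so that $|\x_i| = s + t$ and $d_i(\x_i) = s + tk$, and recall that after rescaling $k > 1$. Note first that the statement is vacuous when $t = 0$ (it would assert there are at most $-1$ such chores), which is consistent with Lemma~\ref{lem:EFX_Best_Off}; so we may assume $t \ge 1$.

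First I would make the EFX-envy inequality explicit. Agent $i$ EFX-envies $h$ precisely when $d_i(\x_i) - \min_{j \in \x_i} d_i(j) > d_i(\x_h)$. If $s \ge 1$, the cheapest chore of $i$ is a $1$-chore, so this reads $d_i(\x_h) < s - 1 + tk$; if $s = 0$, the cheapest chore is a $k$-chore, so it reads $d_i(\x_h) < (t-1)k$.

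Next I would lower-bound $d_i(\x_h)$ using the contradiction hypothesis together with balancedness. The (at least) $t$ chores of $\x_h$ that have disutility $k$ for $i$ contribute at least $tk$, and since there are at least $t$ of them we have $|\x_h| \ge t$, so the remaining $|\x_h| - t \ge 0$ chores each contribute at least $1$; hence $d_i(\x_h) \ge tk + (|\x_h| - t)$. Balancedness gives $|\x_h| \ge |\x_i| - 1 = s + t - 1$, and therefore $d_i(\x_h) \ge tk + s - 1$. In the case $s \ge 1$ this directly contradicts $d_i(\x_h) < tk + s - 1$. In the case $s = 0$ the contradiction is even more immediate: the $t$ disutility-$k$ chores alone give $d_i(\x_h) \ge tk > (t-1)k$, contradicting $d_i(\x_h) < (t-1)k$ — here balancedness is not even needed, only $k > 1$. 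Either way we reach a contradiction, so $\x_h$ has at most $t - 1$ chores $j$ with $d_i(j) = k$.

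I do not expect a genuine obstacle here: the argument is careful bookkeeping with the additive disutilities. The one subtlety worth stating explicitly is the case split on whether $i$ owns a $1$-chore, since that changes the disutility of $i$'s cheapest chore and hence the EFX-envy threshold, and the fact that $k > 1$ is what makes the $s = 0$ sub-case go through.
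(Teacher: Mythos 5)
Your proof is correct and follows essentially the same route as the paper's: the paper proves the contrapositive by noting that $h$ has at least $s+t-1$ chores, of which at least $t$ cost $k$ to $i$, so $d_i(\x_h)\ge (s-1)+tk = \max_{j\in\x_i} d_i(\x_i\setminus j)$. Your treatment of the $s=0$ sub-case (and the $t=0$ degenerate case via Lemma~\ref{lem:EFX_Best_Off}) is actually more explicit than the paper's one-line dismissal, but the substance is identical.
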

\begin{proof}
Suppose that $i$ has exactly $s$ 1-chores. We will assume that $s > 0$, otherwise it is clear she will not EFX-envy $h$. Suppose that $h$ has at least $t$ chores which are disutility $k$ for $i$. As $h$ has at least $s + t - 1$ chores, $\max_{j \in \x_i} d_i(\x_i \setminus j) = (s-1) + tk \leq d_i(\x_h)$.
\end{proof}

\begin{restatable}{lemma}{EFX_Equal_Chores_Group}\label{lem:EFX_Equal_Chores_Group}
In a group with fully balanced chores, if $i, h \in N_r$ both have exactly equal total chores and at least one 1-chore, then EFX-envy can only exist between them when $k > 2$.
\end{restatable}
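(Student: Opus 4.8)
The plan is to reduce the claim to a one-line disutility comparison, after first establishing a structural fact: every $k$-chore of $h$ is also a $k$-chore (disutility $k$) for $i$. First I would recall that $\s{MakeInitGroups}$ returns a cost-minimizing competitive equilibrium of the $L$-chores in which every chore has payment $1$ and every agent has MPB ratio $1$; consequently, immediately after $\s{MakeInitGroups}$, every agent owns only chores she values at $1$. Hence any chore that $h$ values at $k$ must have been gained by $h$ after $\s{MakeInitGroups}$, and Lemma~\ref{lem:Gain_k} then gives that every such chore has disutility $k$ for all agents in $N_r$, in particular for $i$. Writing $s=|\x_{i_1}|$, $t=|\x_{i_k}|$, $s'=|\x_{h_1}|$, $t'=|\x_{h_k}|$, and letting $q$ be the number of chores in $\x_h$ with disutility $k$ for $i$, this yields $q\ge t'$.

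Next I would invoke the hypotheses. Since the group has fully balanced chores, $|t-t'|\le 1$, so $t'\ge t-1$ and therefore $q\ge t-1$. Since $i$ and $h$ have exactly equal total chores, $|\x_h|=|\x_i|=s+t$, so $d_i(\x_h)=qk+(s+t-q)$, while $d_i(\x_i)=s+tk$. Assume without loss of generality that $i$ EFX-envies $h$ (the case of $h$ EFX-envying $i$ is symmetric, using that $h$ also has at least one $1$-chore and that every $k$-chore of $i$ is disutility $k$ for $h$). Because $i$ has a $1$-chore, her easiest chore has disutility $1$, so EFX-envy means $s+tk-1>qk+(s+t-q)$, i.e. $(t-q)k>(t-q)+1$. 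Setting $\delta:=t-q$, the bound $q\ge t-1$ gives $\delta\le 1$; if $\delta\le 0$ the left-hand side is at most $\delta\le\delta+1$ (as $k\ge 1$), contradicting the strict inequality, so $\delta=1$, and then $(t-q)k>(t-q)+1$ reads exactly $k>2$. This proves that EFX-envy between $i$ and $h$ can occur only when $k>2$.

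I expect the only delicate point to be justifying the structural fact — namely that after $\s{MakeInitGroups}$ no agent holds a chore she values at $k$, so that Lemma~\ref{lem:Gain_k} applies to all of $h$'s $k$-chores. This holds because the initial allocation is cost-minimizing over the $L$-chores (each $L$-chore has an owner of disutility $1$) and the only modifications $\s{MakeInitGroups}$ performs are transfers of chores along MPB edges at payment level $1$, which keep every owned chore MPB with ratio $1$ and payment $1$, hence of disutility $1$ to its owner. Everything else is the elementary arithmetic above, so I would keep that part brief.
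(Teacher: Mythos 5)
Your proof is correct and follows essentially the same route as the paper's: establish that every $k$-chore of one agent is also disutility $k$ for the other, use full balance to bound the gap in $k$-chore counts by one, and finish with the arithmetic comparison. The only differences are cosmetic — you derive the structural fact from Lemma~\ref{lem:Gain_k} and the fact that $\s{MakeInitGroups}$ leaves agents with only 1-chores, where the paper cites Lemma~\ref{lem:PseudoKChore}, and you replace the paper's appeal to Lemma~\ref{lem:EFX_BalancedEnvy} with a direct computation that also disposes of the $t-q\le 0$ case.
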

\begin{proof}
From Lemma~\ref{lem:EFX_Best_Off} it must be that an agent has a $k$-chore to EFX-envy another agent. From Lemma~\ref{lem:PseudoKChore} a $k$-chore for $i$ is a $k$-chore for $h$ and vice versa. Without loss of generality, from Lemma~\ref{lem:EFX_BalancedEnvy} it must be that $|\x_{i_k}| = |\x_{h_k}| + 1$ for EFX-envy to exist. Then, since $i$ and $h$ have exactly equal total chores, it must also be that $|\x_{i_1}| + 1 = |\x_{h_1}|$. If $k \leq 2$, we have $\max_{j \in \x_i} d_i(\x_i \setminus j) = |\x_{i_1}| - 1 + |\x_{i_k}| \cdot k = |\x_{h_1}| - 2 + (|\x_{h_k}| + 1)k < |\x_{h_1}| + |\x_{h_k}| \cdot k \leq d_i(\x_h)$. So it must be that $k > 2$ for EFX-envy to exist.
\end{proof}

\subsection{Algorithm~\ref{alg:R1}: One Agent Group}

We first define two functions which appear in our algorithms.

\begin{itemize}
    \item $\s{Transfer}(j, i)$ takes chore $j$ from its current owner and gives $j$ to agent $i$. 
    \item $\s{Swap}(j_1, j_2)$ exchanges $j_1$ and $j_2$ between their respective owners.
\end{itemize}

\begin{algorithm}[!htbp]
\caption{Reducing EFX-Envy}\label{alg:R1}
\textbf{Input:} EF1 competitive equilibrium $(\x, \p)$\\
\textbf{Output:} EFX competitive equilibrium $(\x, \p)$
\begin{algorithmic}[1]
\If{$\x$ is not EFX}
\If{$|K| \equiv 2\ (\textrm{mod}\ 3)$}
	\State $i_1 \gets \s{argmin}_{i \in N} |\x_{i_k}|$
	\State $i_2 \gets i \in N \setminus \{i_1\}$
	\State $i_3 \gets i \in N \setminus \{i_1, i_2\}$
	%\If{$i_2$ and $i_3$ EFX-envy $i_1$}
	\If{$\x_{i_1} \subseteq \s{MPB}_{i_2} \cap \s{MPB}_{i_3}$}
		\State $k_{i_2} \gets j \in \x_{i_2} \cap K$
		\If{$|\x_{i_2}| > |\x_{i_1}|$}
			\State $\s{Transfer}(k_{i_2}, i_1)$
		\Else
			\State $j_{i_1} \gets \{j \in \x_{i_1} \mid d_{i_2}(j) = 1\}$ 
			\State $\s{Swap}(k_{i_2}, j_{i_1})$
		\EndIf 
	\EndIf
	%\If{$i_1$ and $i_3$ EFX-envy $i_2$}
	\If{$\x_{i_2} \subseteq \s{MPB}_{i_1} \cap \s{MPB}_{i_3}$}
		\State $k_{i_3} \gets j \in \x_{i_3} \cap K$
		\If{$|\x_{i_3}| > |\x_{i_2}|$}
			\State $\s{Transfer}(k_{i_3}, i_2)$
		\Else		
			\State $j_{i_2} \gets \{j \in \x_{i_2} \mid d_{i_3}(j) = 1\}$
			\State $\s{Swap}(k_{i_3}, j_{i_2})$
		\EndIf 
	\EndIf
	\If{$\x$ is not EFX}
	    \State $(\x, \p) \gets \s{Algorithm~\ref{alg:R1-2extra}}(\x, \p)$
	\EndIf
\ElsIf{$|K| \equiv 1\ (\textrm{mod}\ 3)$}	
	\State $i_1 \gets \s{argmax}_{i \in N} |\x_{i_k}|$
	\State $i_2 \gets i \in N \setminus \{i_1\}$
	\State $i_3 \gets i \in N \setminus \{i_1, i_2\}$
	%\If{$i_1$ EFX-envies $i_2$ and $i_3$}
	\If{$\x_{i_2} \cup \x_{i_3} \subseteq \mpb_{i_1}$}
		\State $k_{i_1} \gets j \in \x_{i_1} \cap K$
		\If{$|\x_{i_1}| > |\x_{i_2}|$}
			\State $\s{Transfer}(k_{i_2}, i_1)$
		\Else
			\State $j_{i_2} \gets \{j \in \x_{i_2} \mid d_{i_1}(j) = 1 \}$
			\State $\s{Swap}(k_{i_1}, j_{i_2})$
		\EndIf 
	\EndIf
	%\If{$i_2$ EFX-envies $i_1$ and $i_3$}
	\If{$\x_{i_1} \cup \x_{i_3} \subseteq \mpb_{i_2}$}
		\State $k_{i_2} \gets j \in \x_{i_2} \cap K$
		\If{$|\x_{i_2}| > |\x_{i_3}|$}
			\State $\s{Transfer}(k_{i_2}, i_3)$
		\Else
			\State $j_{i_3} \gets \{j \in \x_{i_3} \mid d_{i_2}(j) = 1\}$
			\State $\s{Swap}(k_{i_2}, j_{i_3})$
		\EndIf
	\EndIf
	\If{$\x$ is not EFX}
	    \State $(\x, \p) \gets \s{Algorithm~\ref{alg:R1-1extra}}(\x, \p)$
	\EndIf
\EndIf
\EndIf	
\State \Return $(\x, \p)$
\end{algorithmic}
\end{algorithm}

We now show that Algorithm~\ref{alg:R1} is only called when its input CE has specific properties.

\begin{restatable}{lemma}{R1_Input}\label{lem:R1_Input} An input $(\x, \p)$ for Algorithm~\ref{alg:R1} is always such that allocation $\x$ is cost-minimizing and has fully balanced chores.
\end{restatable}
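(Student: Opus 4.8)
The plan is to argue that Algorithm~\ref{alg:R1} is invoked only in the case $R = 1$ of Algorithm~\ref{alg:efxpo}, and in that case its input is exactly the allocation $(\x,\p)$ produced by Algorithm~\ref{alg:BalancedChores}. So the task reduces to showing that when Algorithm~\ref{alg:BalancedChores} returns a single agent group ($R=1$), its output allocation is cost-minimizing and has fully balanced chores. Balancedness of total chores is immediate from Lemma~\ref{lem:BC_Term} (the loop in Line 8 only exits when $|\x_\ell| \ge |\x_a|-1$). Full balancedness ($1$-chores balanced and $k$-chores balanced in addition to total chores) is exactly Invariant~\ref{inv:Group_Full_Balance} applied to the unique group $N_1 = N$.

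The main work is therefore cost-minimization. Here I would use the hypothesis $R=1$ to rule out the only operation that could destroy cost-minimization, namely a chore transfer: by Observation~\ref{obs:High_To_Low}, any chore transfer moves a chore from an agent in a strictly higher group to an agent in a strictly lower group, which is impossible when there is only one group. (Equivalently, one can invoke Corollary~\ref{cor:Gain_Or_Lose} / Lemma~\ref{lem:Payment_Raises} to see that with a single group no payment raise occurs either, since a raised group must contain an agent chosen to lose a chore.) Hence when $R=1$, Algorithm~\ref{alg:BalancedChores} performs no chore transfers, and Lemma~\ref{lem:No_Transfers} directly yields that the returned allocation $\x$ is cost-minimizing: $\s{MakeInitGroups}$ returns a cost-minimizing allocation of $L$, assigning $K$-chores (which cost $k$ to everyone) keeps it cost-minimizing, and with no transfers this is what is returned.

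The one subtlety — and the step I expect to need the most care — is confirming that Algorithm~\ref{alg:R1} is genuinely only ever called with the $R=1$ output of Algorithm~\ref{alg:BalancedChores}, rather than, say, recursively on some intermediate allocation. Inspecting Algorithm~\ref{alg:efxpo}, the call $(\x,\p)\gets \s{Algorithm~\ref{alg:R1}}(\x,\p)$ occurs exactly once, in the branch $R=1$, on the output of Algorithm~\ref{alg:BalancedChores}; Algorithm~\ref{alg:R1} itself calls only Algorithms~\ref{alg:R1-2extra} and~\ref{alg:R1-1extra}, not itself. So the input characterization holds, and combining the three observations above (balanced total chores from Lemma~\ref{lem:BC_Term}, full balance from Invariant~\ref{inv:Group_Full_Balance} with $N_1=N$, and cost-minimization via Observation~\ref{obs:High_To_Low} plus Lemma~\ref{lem:No_Transfers}) completes the proof.
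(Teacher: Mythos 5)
Your treatment of the $R=1$ case is essentially the paper's: full balance comes from Invariant~\ref{inv:Group_Full_Balance} applied to the unique group, Lemma~\ref{lem:Gain_Or_Lose} (or Observation~\ref{obs:High_To_Low}) rules out any chore transfer when there is only one group, and Lemma~\ref{lem:No_Transfers} then yields cost-minimization. That portion is correct.

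The gap is precisely in the step you flagged as needing the most care: the claim that Algorithm~\ref{alg:R1} is invoked only from the $R=1$ branch of Algorithm~\ref{alg:efxpo}. That is false. Algorithm~\ref{alg:R2-1} calls Algorithm~\ref{alg:R1} in its branch $|\x_{a_1}| = |\x_{c_1}|$, and Algorithm~\ref{alg:R2-2} calls it in two branches (when $|\x_c|>|\x_a|$ with $|\x_{c_k}|\le|\x_{b_k}|$, and when $|\x_c|\le|\x_a|$ with $|\x_{c_1}|\ge|\x_{b_1}|$). Those calls occur with $R=2$, so the argument ``one group, hence no transfers'' is unavailable; instead one must derive from the specific branch conditions that (a) Algorithm~\ref{alg:BalancedChores} in fact performed no chore transfers in those configurations --- e.g., for the Algorithm~\ref{alg:R2-1} call, $|\x_{a_1}|=|\x_{c_1}|$ forces $|\x_{a_1}|=|\x_{b_1}|=|\x_{c_1}|$ and a counting argument shows the singleton higher group could never have been selected to lose a chore --- and (b) the $1$-chores and $k$-chores are balanced \emph{across} the two groups, which Invariant~\ref{inv:Group_Full_Balance} does not give (it only guarantees full balance within each group). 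These two-group cases constitute the bulk of the paper's proof and are not routine, so as written your argument establishes the lemma for only one of its three instantiations.
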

\begin{proof}
When Algorithm~\ref{alg:R1} is called by Algorithm~\ref{alg:efxpo}, it must be that Algorithm~\ref{alg:BalancedChores} returned exactly one agent group. Then by Invariant~\ref{inv:Group_Full_Balance} this group has fully balanced chores. By Lemma~\ref{lem:Gain_Or_Lose} agents within a group cannot transfer chores amongst themselves, so Algorithm~\ref{alg:BalancedChores} performed no chore transfers and by Lemma~\ref{lem:No_Transfers} must return a cost-minimizing allocation.

When Algorithm~\ref{alg:R1} is called by Algorithm~\ref{alg:R2-1}, it must be that Algorithm~\ref{alg:BalancedChores} returned two agent groups with $N_1 = \{a\}$ and $N_2 = \{b, c\}$ and allocation $\x$ such that $|\x_{a_1}| = |\x_{c_1}|$. Note that by construction we have $|\x_{a_1}| \geq |\x_{b_1}| \geq |\x_{c_1}|$. Combining these, it must in fact be that $|\x_{a_1}| = |\x_{b_1}| = |\x_{c_1}|$. By Lemma~\ref{lem:BC_Term} Algorithm~\ref{alg:BalancedChores} returns a balanced allocation. Then, since the 1-chores are exactly equal, it must be that the $k$-chores are balanced. Thus, $\x$ has fully balanced chores. We show that it must also be that Algorithm~\ref{alg:BalancedChores} performed no chore transfers. If a transfer occurs, by Observation~\ref{obs:High_To_Low} it must be that $a \in N_1$ lost a chore and an agent in $N_2$ gained a chore. From Corollary~\ref{cor:Gain_Or_Lose} then $a$ cannot gain a chore and agents in $N_2$ cannot lose a chore. Recall from Lemma~\ref{lem:Gain_k} that agents only gain $k$-chores. Then both $b$ and $c$ must have at least $|\x_{b_1}| = |\x_{c_1}| = q$ chores before any chore transfer. Then $a$ cannot have lost a chore during the last transfer performed when she had $q+1$ chores. Thus, there can be no transfers and by Lemma~\ref{lem:No_Transfers} Algorithm~\ref{alg:BalancedChores} returns a cost-minimizing allocation.

Finally, when Algorithm~\ref{alg:R1} is called by Algorithm~\ref{alg:R2-2}, it must be that Algorithm~\ref{alg:BalancedChores} returned two agent groups with $N_1 = \{a, b\}$ and $N_2 = \{c\}$ and allocation $\x$ such that $|\x_{c_1}| \geq |\x_{b_1}|$ and $\x$ is not EFX. We first show that no transfers can have occurred. If if has, by Observation~\ref{obs:High_To_Low} it must be that an agent in $N_1$ lost a chore and $c$ gained a chore. Suppose a chore is transferred from $b$ to $c$. Then no agent in $N_1$ can have gained a chore after $\s{MakeInitGroups}$, so agents in $N_1$ have only 1-chores. Then by Lemma~\ref{lem:EFX_Best_Off} agents in $N_1$ cannot EFX-envy anyone. By Lemma~\ref{lem:EFX_Higher_Group} it cannot be $c$ EFX-envies $a$ or $b$. Then, $\x$ must be EFX, a contradiction. We conclude that no transfer can have occurred. By Lemma~\ref{lem:No_Transfers} $\x$ is cost-minimizing.

We now show that $\x$ is fully balanced. There are two cases in which Algorithm~\ref{alg:R2-2} invokes Algorithm~\ref{alg:R1}.
\begin{itemize}
    \item In the first case, $\x$ returned by Algorithm~\ref{alg:BalancedChores} is not EFX, $|\x_c| > |\x_a|$, and $|\x_{c_k}| \leq |\x_{b_k}|$. Since $\x$ is not EFX, it must be that some agent EFX-envies another agent. By Lemma~\ref{lem:EFX_Higher_Group} $c$ cannot be the EFX-envious agent. By construction we have that $|\x_{a_1}| \geq |\x_{b_1}|$ and we may assume that $b$ receives a $K$-chore before $a$ so $|\x_{b_k}| \geq |\x_{a_k}|$. By Lemma~\ref{lem:EFX_BalancedEnvy} and Lemma~\ref{lem:PseudoKChore} it must be that $b$ is the EFX-envious agent and $|\x_{b_k}| = |\x_{a_k}| + 1$, since Invariant~\ref{inv:Group_Full_Balance} guarantees $a$ and $b$ have balanced $k$-chores. From Properties (i) and (ii) of Section~\ref{sec:overviewsimple} we have that $|\x_{c_1}| \leq |\x_{b_1}| + 1 \leq |\x_{a_1}| + 1$. Note that by the tiebreaking rules $c$ is favored to receive a $K$-chore over $b$ if $|\x_b| \geq |\x_c|$. Thus, we have $|x_{c_k}| \geq |\x_{b_k}| - 1$. Combined with our initial conditions this gives $|\x_{b_k}| - 1 \leq |x_{c_k}| \leq |\x_{b_k}|$. Thus, $c$ must have the same number of $k$-chores as either $a$ or $b$, and $\x$ must have balanced $k$-chores. To see that $\x$ has balanced 1-chores, note that $|\x_{c_1}| \leq |\x_{b_1}| + 1 \leq |\x_{a_1}| + 1$. If $\x$ has unbalanced 1-chores, it can only be that $c$ has at least two fewer 1-chores than $a$. But then by tiebreaking rules and the fact that $|\x_c| > |\x_a|$, it must be that $c$ in turn has at least two more $k$-chores than $a$, which we have seen is impossible. Thus, $\x$ must have balanced 1-chores and thus be fully balanced in this case.
    \item In the second case, $\x$ is not EFX, $|\x_c| \leq |\x_a|$, and $|\x_{b_1}| > |\x_{c_1}|$. Again we have that $b$ must be EFX-envious and $|\x_{b_k}| = |\x_{a_k}| + 1$. Since $b$ is given a $K$-chore, by the tiebreaking rules it must be that $|\x_c| \geq |\x_b|$ Since $a$ and $b$ have fully balanced chores and $|\x_{b_k}| = |\x_{a_k}| + 1$, it must be that $|\x_b| \geq |\x_a|$. The given conditions include $|\x_a| \geq |\x_c|$, so combining these give $|\x_a| = |\x_b| = |\x_c|$. By construction and group balance we have $|\x_{b_1}| \leq |\x_{a_1}| \leq |\x_{b_1}| + 1$. From Properties (i) and (ii) of Section~\ref{sec:overviewsimple}. $|\x_{a_1}| \geq |\x_{c_1}|$. Then, $|\x_{c_1}| \geq |\x_{b_1}|$ implies $\x$ has balanced 1-chores. Then, since $|\x_a| = |\x_b| = |\x_c|$, it must be that the $k$-chores are balanced as well, so $\x$ is fully balanced. \qedhere
\end{itemize}
\end{proof}

Using Lemma~\ref{lem:R1_Input}, we then show that inputs for Algorithms \ref{alg:R1-2extra} and \ref{alg:R1-1extra} must also have certain properties.

\begin{restatable}{lemma}{R1_2extra_Input}\label{lem:R1_2extra_Input} An input $(\x, \p)$ for Algorithm~\ref{alg:R1-2extra} is always such that
\begin{enumerate}[label=(\roman*)]
    \item allocation $\x$ is cost-minimizing and has fully balanced chores,
    \item $|K| \equiv 2\ (\textrm{mod}\ 3)$, and
    \item there exists exactly one EFX-envy relationship.
\end{enumerate}
\end{restatable}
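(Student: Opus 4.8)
The plan is to prove the three properties by tracing back through the unique call site of Algorithm~\ref{alg:R1-2extra}: it is invoked only from inside Algorithm~\ref{alg:R1}, in the $|K|\equiv 2\ (\textrm{mod}\ 3)$ branch, and only after the guard ``$\x$ is not EFX'' has been checked. This observation alone gives property (ii) and the ``at least one'' half of property (iii), since a non-EFX allocation exhibits at least one EFX-envy relationship. For the rest, I would start from Lemma~\ref{lem:R1_Input}: the allocation handed to Algorithm~\ref{alg:R1} is cost-minimizing and fully balanced. It then remains to track the at most two $\s{Transfer}$/$\s{Swap}$ operations performed in that branch before Algorithm~\ref{alg:R1-2extra} is called.

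For property (i), the structural fact to isolate first is that \emph{in any cost-minimizing allocation every $k$-chore is a $K$-chore}: if agent $i$ held a chore $j$ with $d_i(j)=k$ while $d_{i'}(j)=1$ for some $i'$, moving $j$ to $i'$ strictly decreases $\sum_h d_h(\x_h)$. Hence each $1$-chore $j_{i_1}$ (resp.\ $j_{i_2}$) chosen for a $\s{Swap}$ has unit cost for its current owner, while the $K$-chore swapped in has cost $k$ for everyone, so the $\s{Swap}$ is cost-neutral; a $\s{Transfer}$ moves a $K$-chore and is trivially cost-neutral. Thus cost-minimality is maintained. For full balance, I would use that $|K|\equiv 2\ (\textrm{mod}\ 3)$, together with the previous fact, forces the three agents' $k$-chore counts to be $(q+1,q+1,q)$ up to permutation; one then checks, using the guard of each operation (``$|\x_{i_2}|>|\x_{i_1}|$'', etc.)\ and the fact that $i_1$ is chosen with the fewest $k$-chores, that each operation keeps the $1$-chore counts, the $k$-chore counts, and the total counts pairwise within $1$. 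This is a short bookkeeping argument (one must invoke the \emph{prior} full balance to rule out the seemingly unbalanced corner cases).

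The substance is property (iii), the bound of \emph{at most one} EFX-envy relationship. I would first show that EFX-envy is highly constrained here: if $i$ EFX-envies $h$, then by Lemma~\ref{lem:EFX_BalancedEnvy} applied to the balanced allocation, $h$ has at most $|\x_{i_k}|-1$ chores of disutility $k$ for $i$; since $h$'s $K$-chores lie among these and $\x_{i_k}\subseteq K$ by cost-minimality, $|\x_{h_k}|\le |\x_{i_k}|-1$, so with balanced $k$-chores $h$ must be the unique agent holding the minimum number of $K$-chores and $i$ holds one more. Hence at most two EFX-envy relationships can exist, both aimed at this ``poor'' agent. The crucial sub-claim is then: if the poor agent $h$ owns a chore $j\notin\mpb_i$ for a ``rich'' agent $i$, then $i$ does not EFX-envy $h$. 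Since payments lie in $\{1,k\}$ and all agents have MPB ratio $1$ (Algorithm~\ref{alg:R1} makes no payment changes), $j\notin\mpb_i$ forces $d_i(j)=k$ and $p_j=1$, so $j$ is an $L$-chore with $d_h(j)=1$; then $d_i(\x_h)$ is at least $(q+1)k$ plus the (unit-or-more) cost to $i$ of $h$'s remaining $L$-chores, while full balance bounds $i$'s $1$-chore count by $h$'s, so $i$'s EFX-value is at most $d_i(\x_h)$.

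To conclude property (iii) I would case on which of the two conditional blocks of Algorithm~\ref{alg:R1} fired. If the first block's guard $\x_{i_1}\subseteq\mpb_{i_2}\cap\mpb_{i_3}$ fails, the poor agent is still $i_1$ and the failed containment exhibits a chore $j\in\x_{i_1}$ not MPB for $i_2$ or not MPB for $i_3$; the sub-claim then eliminates the corresponding envy relationship, leaving at most one, hence (with ``not EFX'') exactly one. The analogous reasoning covers the case where only the second block's guard fails. The remaining case is when a block actually performs a $\s{Transfer}$/$\s{Swap}$, so that the identity of the poor agent shifts and no leftover non-MPB witness is available; there one instead argues that the agent onto whom the block moved a $K$-chore does not EFX-envy the resulting poor agent, by comparing its post-operation EFX-value against its disutility for that agent's bundle, using Lemma~\ref{lem:EFX_BalancedEnvy} and the operation's guard. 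I expect this last case --- reasoning about the allocation \emph{after} chores have been moved, where the clean non-MPB witness disappears --- to be the main obstacle, and it may require invoking additional structure of the allocation inherited from Algorithm~\ref{alg:BalancedChores}.
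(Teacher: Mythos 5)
Your treatment of (i) and (ii), and your reduction of (iii) to ``at most two candidate EFX-envy relationships, both directed at the unique agent with the fewest $K$-chores,'' match the paper, as does your sub-claim that a chore of the poor agent lying outside $\mpb_i$ for a rich agent $i$ kills the envy from $i$ toward the poor agent (this is exactly the contrapositive of Lemma~\ref{lem:EFX_BalancedEnvy} once cost-minimality forces $\x_{i_k}\subseteq K$). The genuine gap is in the case you flagged yourself: when a block actually fires. Your proposed closing step --- that the agent who just received a $K$-chore cannot EFX-envy the new poor agent --- is false in general. After the second movement, $i_2$ holds $q+1$ $K$-chores and $i_3$ holds $q$; if all of $i_3$'s $1$-chores have disutility $1$ for $i_2$ and $k>2$, then $i_2$'s EFX-value exceeds $d_{i_2}(\x_{i_3})$ and $i_2$ does EFX-envy $i_3$. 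Nothing in the guard $\x_{i_2}\subseteq\mpb_{i_1}\cap\mpb_{i_3}$ rules this out, since that guard constrains how \emph{others} see $i_2$'s chores, not how $i_2$ sees $i_3$'s. The true statement is only that not \emph{both} rich agents envy the final poor agent, not that a particular one of them does not.

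The paper closes this case with a global argument that is absent from your proposal: if after each rotation of the extra $K$-chores both rich agents still EFX-envied the poor one, then the first guard, the second guard, and that final envy condition together force every $L$-chore to have disutility $1$ for all three agents (every $L$-chore lies in $\x_{i_1}$ initially, or in $\x_{i_2}$ after the first movement, or in $\x_{i_3}$ after the second, and the $1$-chores displaced by the $\s{Swap}$s are already covered by the earlier containments). The three disutility functions would then be identical, which is excluded by the standing assumption of Section~\ref{sec:overview3agents} that identical bivalued agents are handled separately. Hence some $1$-chore of the final poor agent has disutility $k$ for some rich agent, at most one envy survives, and the ``$\x$ is not EFX'' check at Line 20 upgrades this to exactly one. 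So the missing ingredient is not additional structure inherited from Algorithm~\ref{alg:BalancedChores}, but the non-identical-valuations hypothesis.
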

\begin{proof}
Lemma~\ref{lem:R1_Input} shows that input for Algorithm~\ref{alg:R1} satisfies (i). Line 2 of Algorithm~\ref{alg:R1} guarantees (ii). We show that Lines 3-20 of Algorithm~\ref{alg:R1} maintain (i) while giving (iii). Since the input $\x$ is fully balanced, if $|K| \equiv 2\ (\textrm{mod}\ 3)$ then there exists a unique agent $i_1 = \s{argmin}_{i \in N} |\x_{i_k}|$ and two agents $i_2$ and $i_3$ who we say have an \textit{extra} $k$-chore. In fact, since $\x$ is cost-minimizing it must be that for all $i \in N$ we have $|\x_{i_k}| \subseteq K$, so $i_2$ and $i_3$ have an extra $K$-chore. By Lemma~\ref{lem:EFX_BalancedEnvy} the only possible EFX-envy relationships are that $i_2$ and $i_3$ EFX-envy $i_1$, since they have more $K$-chores. However, if $i_1$ has any 1-chore which gives disutility $k$ to $i_2$ or $i_3$, then that agent no longer EFX-envies $i_1$ and we are done. Else, all of $i_1$'s 1-chores are MPB for both $i_2$ and $i_3$. Then, we aim to change which agents have the extra $K$-chores while maintaining cost-minimization and fully balanced chores. We aim to move one of the extra $K$-chores off of $i_2$ onto $i_1$. If $|\x_{i_2}| > |\x_{i_1}|$, we can simply transfer the extra $K$-chore from $i_2$ to $i_1$. Since $i_1$ had fewer total chores, the total chores remain balanced, and the same reasoning applies to the $k$-chores. Since no 1-chore is moved, they clearly remain balanced as well. Otherwise, it must be that $|\x_{i_2}| = |\x_{i_1}|$. It cannot be that $|\x_{i_2}| < |\x_{i_1}|$ since we have fully balanced chores and $|\x_{{i_2}_k}| > |\x_{{i_i}_k}|$. In this case, we swap the extra $K$-chore of $i_2$ with a 1-chore of $i_1$, which $i_1$ must have in order to be EFX-envied by $i_2$, and which we know must be MPB for $i_2$ as well. Now $i_1$ and $i_2$ have only swapped their number of 1-chores and $k$-chores, so $i_2$ has as many 1-chores and $k$-chores as $i_1$ did previously, and vice versa. It is clear then that we maintain fully balanced chores. Now $i_1$ and $i_3$ have the extra $K$-chores, and it can only be that these agents EFX-envy $i_2$. We perform the same type of MPB check. If $i_2$ has a 1-chore which is not MPB for either $i_1$ or $i_3$, we are done. Else, we repeat the same chore movement. Suppose another movement is needed. We claim that now there is at most one EFX-envy relationship. Indeed, if after each movement both agents with extra $K$-chores always EFX-envied the agent with fewer $K$-chores, then it must be that $a$, $b$, and $c$ all have disutility 1 for each 1-chore, showing that they in fact have identical valuations, which we have stated in Section~\ref{sec:overview3agents} to be impossible. Thus, there must be a chore $j$ belonging to the agent with fewer $K$-chores which is disutility $k$ for some agent with an extra $K$-chore, so at most one EFX-envy relationship can exist. Finally, Line 20 guarantees that if Algorithm~\ref{alg:R1-2extra} is called then that one possible EFX-envy relationship does exist. 
\end{proof}

\begin{restatable}{lemma}{R1_1extra_Input}\label{lem:R1_1extra_Input} An input $(\x, \p)$ for Algorithm~\ref{alg:R1-1extra} is always such that
\begin{enumerate}[label=(\roman*)]
    \item allocation $\x$ is cost-minimizing and has fully balanced chores,
    \item $|K| \equiv 1\ (\textrm{mod}\ 3)$, and
    \item there exists exactly one EFX-envy relationship.
\end{enumerate}
\end{restatable}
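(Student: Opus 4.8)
The plan is to follow the template of the proof of Lemma~\ref{lem:R1_2extra_Input}, reading off properties (i) and (ii) with essentially no work and devoting the real effort to (iii). For (i), Lemma~\ref{lem:R1_Input} already guarantees that the allocation handed to Algorithm~\ref{alg:R1} is cost-minimizing with fully balanced chores, so I would only need to check that the handful of $\s{Transfer}$/$\s{Swap}$ operations performed in the branch guarding $|K|\equiv 1\ (\textrm{mod}\ 3)$ of Algorithm~\ref{alg:R1} preserve both properties before control passes to Algorithm~\ref{alg:R1-1extra}. Property (ii) is immediate from the $\s{ElsIf}$ guard selecting that branch. Thus the substance is property (iii): that exactly one EFX-envy relationship survives.

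For (iii) I would proceed as follows. First, a structural step: cost-minimization forces $\x_{i_k}\subseteq K$ for every agent (an $L$-chore held by an agent valuing it at $k$ could be reassigned to an agent valuing it at $1$, strictly lowering total cost), and full balance then pins the $k$-chore counts to the multiset $\{q+1,q,q\}$, so $i_1=\s{argmax}_i|\x_{i_k}|$ is unique with exactly one more $k$-chore than each of $i_2,i_3$. By Lemma~\ref{lem:EFX_BalancedEnvy} together with $\x_{h_k}\subseteq K$ for all $h$, the only EFX-envy that can occur is $i_1\to i_2$ or $i_1\to i_3$; moreover a short balance computation shows $i_1$ EFX-envying $i_2$ (resp.\ $i_3$) forces every $1$-chore of $i_2$ (resp.\ $i_3$) to be disutility $1$ for $i_1$, i.e.\ $\x_{i_2}\subseteq\mpb_{i_1}$ (resp.\ $\x_{i_3}\subseteq\mpb_{i_1}$). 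Hence if two envy relationships hold, the guard $\x_{i_2}\cup\x_{i_3}\subseteq\mpb_{i_1}$ of Algorithm~\ref{alg:R1} fires and the algorithm relocates $i_1$'s extra $K$-chore to $i_2$ — by a direct transfer when $|\x_{i_1}|>|\x_{i_2}|$, otherwise by swapping it with a $1$-chore of $i_2$ that is disutility $1$ for $i_1$ (such a chore exists because $\x_{i_2}\subseteq\mpb_{i_1}$ and $|\x_{i_2}|=|\x_{i_1}|\ge q+1$ leaves $i_2$ holding a $1$-chore). I would verify by a counting argument that each operation keeps the allocation cost-minimizing (the moved $K$-chore stays disutility $k$ everywhere; the swapped $1$-chore stays disutility $1$ at both endpoints) and fully balanced (a swap leaves the total/$1$/$k$-chore multisets unchanged, and a transfer moves a chore off the agent with one extra total chore, so all three balances are maintained). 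Now $i_2$ is the unique holder of the extra $K$-chore; the symmetric check $\x_{i_1}\cup\x_{i_3}\subseteq\mpb_{i_2}$ is applied and at most one further relocation (to $i_3$) occurs, again preserving (i). After these at most two relocations, if two envy relationships still persisted then $i_3$ would EFX-envy both $i_1$ and $i_2$, forcing $\x_{i_1}\cup\x_{i_2}\subseteq\mpb_{i_3}$; combined with the guards already met, every $1$-chore would be disutility $1$ for all three agents, contradicting the standing assumption (Section~\ref{sec:overview3agents}) that the agents are not identical. So at most one EFX-envy relationship remains, and since Algorithm~\ref{alg:R1-1extra} is invoked only when $\x$ is still not EFX, exactly one remains, which is (iii).

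The step I expect to be fiddly rather than deep is the bookkeeping behind the ``identical disutilities'' contradiction: the swaps relabel $1$-chores among the agents, so one must argue that the collection of $1$-chores certified ``disutility $1$ for everyone'' by the successive $\mpb$-guards genuinely exhausts all $1$-chores for all three agents (mirroring the corresponding passage in the proof of Lemma~\ref{lem:R1_2extra_Input}), and one must confirm well-definedness of the swap and that every sub-case of the transfer/swap preserves all three balance conditions simultaneously. None of this is hard, but it is where the care lies.
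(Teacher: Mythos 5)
Your proposal is correct and follows essentially the same route as the paper: (i) from Lemma~\ref{lem:R1_Input} plus a check that the extra-$K$-chore relocations preserve cost-minimization and full balance, (ii) from the branch guard, and (iii) via Lemma~\ref{lem:EFX_BalancedEnvy}, the transfer/swap relocation of the unique extra $K$-chore, and the identical-disutilities contradiction after two movements. If anything, you spell out more than the paper, whose own proof simply declares the movements ``equivalent'' to those in Lemma~\ref{lem:R1_2extra_Input}; the bookkeeping subtlety you flag (that the successive $\mpb$-guards certify all $1$-chores as disutility $1$ for everyone despite the swaps relabeling bundles) is real and is glossed over in the paper as well.
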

\begin{proof}
We proceed similarly to the proof of Lemma~\ref{lem:R1_2extra_Input}. Lemma~\ref{lem:R1_Input} shows that input for Algorithm~\ref{alg:R1} satisfies (i), Line 22 guarantees (ii), and we show that Lines 23-40 maintain (i) while obtaining (iii). It remains that an input allocation for Algorithm~\ref{alg:R1} is cost-minimizing and fully balanced. Now, when $|K| \equiv 1\ (\textrm{mod}\ 3)$, only one agent $i_1$ has an extra $K$-chore and two agents $i_2$ and $i_3$ have fewer $K$-chores. Our goal is to obtain an allocation in which the agent with the extra $K$-chore sees some 1-chore of another agent as disutility $k$, again while maintaining (i). The chore movements are equivalent to those described in Lemma~\ref{lem:R1_2extra_Input} and maintain (i). Also equivalently, if after two movements every agent has held the extra $K$-chore and seen every agent's 1-chores as MPB, then again we have a case of identical valuations which cannot be. Line 40 again guarantees the existence of exactly one EFX-envy relationship if Algorithm~\ref{alg:R1-1extra} is called, giving (iii).
\end{proof}

\subsection{Algorithm~\ref{alg:R1-2extra}: Two Extra $K$-chores}
\begin{algorithm}[!htbp]
\caption{Two extra $K$-chores}\label{alg:R1-2extra}
\textbf{Input:} EF1 competitive equilibrium $(\x, \p)$\\
\textbf{Output:} EFX competitive equilibrium $(\x, \p)$
\begin{algorithmic}[1]
\State $c \gets \s{argmin}_{i \in N} |\x_{i_k}|$
\State $b \gets i \in N \mid b \textrm{ EFX-envies } c$
\State $a \gets i \in N \mid a \textrm{ does not EFX-envy } c$
\State $j_c \gets j \in \x_c \setminus \mpb_a$
\If{$\exists j_b \in \x_b \setminus \mpb_c$}
	\State $k_b \gets j \in \x_b \cap K$
	\State $\s{Swap}(k_b, j_c)$
\ElsIf{$k \leq 2$}
	\State $j_b \gets \{j \in \x_b \mid d_c(j) = 1\}$
	\State $\s{Transfer}(j_b, c)$
\ElsIf{$|\x_{a_1}| < 2$}
	\While{$b$ EFX-envies $c$}
		\State $j_b \gets \{j \in \x_b \mid d_c(j) = 1\}$
		\State $\s{Transfer}(j_b, c)$
	\EndWhile
\ElsIf{$\exists j_a \in \x_a \cap \mpb_b \setminus \mpb_c$}
	\State $\s{Transfer}(j_a, b)$
	\State $k_b \gets j \in \x_b \cap K$
	\State $\s{Swap}(k_b, j_c)$
\ElsIf{$\exists j_a \in \x_a \cap \mpb_c \setminus \mpb_b$}
	\State $\s{Transfer}(j_a, c)$
	\If{$b$ EFX-envies $a$}
		\State $j_b \gets \{ j \in \x_b \mid d_c(j) = 1 \}$
		\State $\s{Transfer}(j_b, c)$
	\EndIf
\ElsIf{$\exists j_{a_1}, j_{a_2} \in \x_a \setminus \mpb_b \cup \mpb_c$}
	\If{$|\x_b| \leq |\x_a|$}
		\State $k_a \gets j \in \x_a \cap K$
		\State $\s{Transfer}(k_a, c)$
	\ElsIf{$\exists j \in \x_b \cup \x_c \text{ s.t. }  d_a(j) = 1$}
		\State $k_a \gets j \in \x_a \cap K$
		\State $\s{Transfer}(j, a)$
		\State $\s{Transfer}(k_a, c)$
	\Else
		\State $k_b \gets j \in \x_b \cap K$
		\State $\s{Transfer}(k_b, a)$
	\EndIf
\Else
	\While{$b$ EFX-envies $c$}
		\If{$|\x_b| \geq |\x_a|$}
			\State $j_b \gets \{j \in \x_b \mid d_c(j) = 1\}$
			\State $\s{Transfer}(j_b, c)$
		\Else
			\State $j_a \gets \{j \in \x_a \mid d_c(j) = 1\}$
			\State $\s{Transfer}(j_a, c)$
		\EndIf
	\EndWhile
\EndIf
\State \Return $(\x, \p)$
\end{algorithmic}
\end{algorithm}

We show that:

\begin{restatable}{lemma}{R1_2extra_EFX}\label{lem:R1_2extra_EFX} Algorithm~\ref{alg:R1-2extra} returns an EFX competitive equilibrium.
\end{restatable}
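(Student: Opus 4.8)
The plan is to prove the lemma by a case analysis that follows the branching of Algorithm~\ref{alg:R1-2extra}, certifying EFX at the end of each branch via Lemmas~\ref{lem:EFX_Best_Off} and~\ref{lem:EFX_BalancedEnvy}. The starting point is the normal form of the input forced by Lemma~\ref{lem:R1_2extra_Input}. Name the three agents $a,b,c$ with $c=\s{argmin}_i|\x_{i_k}|$. Since $\x$ is cost-minimizing, each agent's $k$-chores are exactly her $K$-chores; with $|K|\equiv 2\ (\mathrm{mod}\ 3)$ and $\x$ fully balanced this means $c$ owns $q$ $K$-chores and each of $a,b$ owns $q+1$. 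The unique EFX-envy relation is $b\to c$, and $a$ does not EFX-envy $c$. From the last paragraph of the proof of Lemma~\ref{lem:R1_2extra_Input}, $c$ owns a chore $j_c$ with $d_a(j_c)=k$; this $j_c$ must be a $1$-chore with $p_{j_c}=1$ (a $k$-chore of $c$ is a $K$-chore, which would be MPB for $a$), so $j_c\in\x_c\setminus\mpb_a$ as Line~4 asserts, and moreover $d_b(j_c)=1$, since $d_b(j_c)=k$ would make $\{j_c\}$ together with $c$'s $q$ $K$-chores into $q+1$ chores of $\x_c$ all valued $k$ by $b$, contradicting $b$'s EFX-envy of $c$ by Lemma~\ref{lem:EFX_BalancedEnvy}. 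Thus $j_c$ is MPB for $b$ and for $c$ but not for $a$.

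The competitive-equilibrium half is uniform across branches: payments are never modified, and every $\s{Transfer}$ or $\s{Swap}$ in the algorithm moves either a $K$-chore (payment $k$, disutility $k$ for all, hence MPB for whoever receives it) or a $1$-chore between two agents both of whom value it at $1$ (again MPB); so a competitive equilibrium is preserved at every step. The same observation shows each step either preserves all bundle sizes or moves a single chore between two agents, and always preserves the total cost, so the allocation stays balanced and cost-minimizing throughout — precisely the hypotheses needed to keep applying Lemma~\ref{lem:EFX_BalancedEnvy}.

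The substance is showing each branch terminates in an EFX allocation. The recurring certificate is that, since the allocation stays balanced, an EFX-envy $i\to h$ forces $\x_h$ to contain at most $|\x_{i_k}|-1$ chores valued $k$ by $i$ (Lemma~\ref{lem:EFX_BalancedEnvy}); as $i$'s $k$-chores are $K$-chores and every $K$-chore is valued $k$ by all agents, this in turn forces $|\x_h\cap K|<|\x_i\cap K|$. Hence it suffices to track, through each branch's one or two moves, the $K$-chore counts and the small set of ``bad'' $1$-chores (those a neighbour values at $k$ --- e.g.\ $j_c$, and the $j_b\in\x_b\setminus\mpb_c$ of Line~5), and verify that all six envy relations vanish. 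For the two \textbf{while} loops (Lines 12--15 and Lines 38--43) one additionally needs termination and a clean exit: each iteration transfers from $b$ (or $a$) to $c$ a chore the giver values at $1$, so the giver's disutility strictly decreases while $c$'s bundle grows, so the loop halts; on exit one must re-check that $c$ has not begun to EFX-envy $b$ or $a$ and that $a$ still does not EFX-envy $c$, which is where the guard conditions $k>2$, the bound on $|\x_{a_1}|$, and the $\mpb_b$-versus-$\mpb_c$ membership of $a$'s chores get used.

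I expect the main obstacle to be exactly this last point in the deeper branches (Lines 16--37 and the final \textbf{while}): when $k$ is large, removing the $b\to c$ envy by shifting chores onto $c$ risks creating a fresh EFX-envy $c\to b$ or $a\to c$, and the argument must show both that the move prescribed by a branch is always available under its guard (e.g.\ that $a$ genuinely has a spare $1$-chore, or a $K$-chore, to donate) and that the guard conditions exactly rule out the new envy. The role of the condition $|\x_a\setminus(\mpb_b\cup\mpb_c)|\le 1$ --- at most one ``doubly bad'' $1$-chore of $a$ --- together with the impossibility of identical disutilities noted in Section~\ref{sec:overview3agents}, in forcing the algorithm into one of the handled branches and in limiting how much new $k$-disutility can appear in $\x_b$ or $\x_c$, will need the most care.
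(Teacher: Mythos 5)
Your setup is correct and matches the paper's: you correctly extract from Lemma~\ref{lem:R1_2extra_Input} that $\x$ is cost-minimizing and fully balanced with $c$ holding $q$ $K$-chores and $a,b$ holding $q+1$ each, that the unique EFX-envy is $b\to c$, that $j_c\in\x_c\setminus\mpb_a$ exists and is a $1$-chore with $d_b(j_c)=1$ (your derivation via Lemma~\ref{lem:EFX_BalancedEnvy} is a valid substitute for the paper's one-line assertion), and that the CE/fPO half reduces to noting that every moved chore is MPB for its recipient. This is exactly the frame the paper uses.

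However, the proposal stops short of the actual proof. The entire substance of the lemma is the branch-by-branch verification that after the moves prescribed by Lines 5, 8, 11, 15, 19, 24 (with its three subcases) and 35 of Algorithm~\ref{alg:R1-2extra}, all six ordered envy relations vanish, that each prescribed chore exists under its guard, and that the two \textbf{while} loops exit cleanly. You explicitly defer all of this (``I expect the main obstacle to be exactly this last point \dots will need the most care''), so the proof is not there. Moreover, the uniform certificate you propose to lean on --- that the allocation ``stays balanced \dots throughout,'' so Lemma~\ref{lem:EFX_BalancedEnvy} always closes the case --- is not accurate across branches: in the Line~11 and Line~35 loops $c$ absorbs several $1$-chores and the bundle sizes drift apart, and in the Line~15 and Line~19 branches $b$ ends up with up to three more $1$-chores than $a$ or $c$. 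The paper does not argue these cases via balancedness; it does bespoke disutility arithmetic using the guard $k>2$ (e.g.\ $d_b(\x_b)-1\le (p-2)k+2\le(p-1)k=d_b(\x_a)$-style bounds) and monotonicity arguments (``$c$ only receives $1$-chores incrementally while $b$ EFX-envies her, so $c$ cannot grow to EFX-envy $b$''). So the one tool you designate as the recurring certificate would fail in precisely the branches you identify as hardest, and a different argument is needed there.
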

\begin{proof}
Algorithm~\ref{alg:R1-2extra} begins by defining agents $a$, $b$, and $c$ so that $c$ is the agent without an extra $k$-chore and $b$ is the unique agent who EFX-envies her. It follows that for all $j \in \x_{c_1}$, $d_b(j) = 1$. From Lemma~\ref{lem:R1_2extra_Input}, $\x$ is both cost-minimizing and fully balanced. In addition, $\exists j_c \in \x_c \setminus \mpb_a$. That is, $c$ must have some 1-chore $j_c$ which $a$ has disutility $k$ for.

In Line 5, we check if $b$ has a 1-chore $j_b$ which is disutility $k$ for $c$. Note that since $\x$ is cost-minimizing, $\x_{b_k} \subseteq K \subseteq \mpb_c$. If $b$ has such a chore, we swap a $k$-chore of $b$ with $j_c$. Now $b$ is the agent without the extra $k$-chore, but $b$ cannot be EFX-envied by $a$ as $b$ now has $j_c$ and similarly cannot be EFX-envied by $c$ due to having $j_b$. Thus, the allocation is EFX.

Then, in Line 8, it must now also be that $b$ does not have a 1-chore which is disutility $k$ for $c$, so $\x_b \subseteq \mpb_c$ and $\x_c \subseteq \mpb_b$. Additionally, since $k \leq 2$, we must have that $|\x_b| > |\x_c|$, as otherwise $b$ could not EFX-envy $c$ by Lemma~\ref{lem:EFX_Equal_Chores_Group}. Furthermore, since $k \leq 2$, $b$ EFX-envies $c$ by at most $2$ and transferring a 1-chore from $b$ to $c$ immediately removes $b$'s EFX-envy for $c$. Since $|\x_b| > |\x_c|$ before the transfer, the total number of chores remains balanced between all agents. Since each agent still has the same number of $k$-chores, no new EFX-envy can have been created, so the allocation is now EFX.

In Line 11, we check that $a$ has at least two 1-chores. If she does not, then we may transfer 1-chores from $b$ to $c$ until $b$ no longer EFX-envies $c$. Since $a$ has at most one 1-chore, we know that $b$ initially has at most two 1-chores. Note that $b$ will surely stop EFX-envying $c$ by the time she has transferred away all of her 1-chores. Yet, it must remain that $a$ and $b$ end with balanced chores and it cannot be that $a$ has grown to EFX-envy $b$. It also cannot be that $c$ begins to EFX-envy $b$, as she only incrementally receives 1-chores when $b$ EFX-envies her. Additionally, after the first transfer $a$'s bundle can be no better than $b$'s bundle for $c$, so $c$ also cannot EFX-envy $a$ and the allocation is EFX.

Now, in Line 15, we suppose that $a$ has a chore $j_a$ which is disutility 1 for $b$ but $k$ for $c$. We transfer $j_a$ to $b$ and swap a $k$-chore of $b$ with the chore $j_c \in \x_c$ which we know is disutility 1 for $b$ but $k$ for $a$. As a result, $a$ has lost a 1-chore, $b$ has lost a $k$-chore but gained two 1-chores, and $c$ has lost a 1-chore but gained a $k$-chore. We now check the EFX-envy relationships. Note that before the transfers $|\x_a| \geq |\x_c|$. So after the transfers $a$ and $c$ still have balanced chores. Since they now have the same number of $k$-chores, they cannot EFX-envy each other. Furthermore, neither of them can EFX-envy $b$, as although $b$ lost a $k$-chore, it gained $j_c$ which $a$ values as $k$ and $j_a$ which $c$ values as $k$. Finally, $b$ cannot EFX-envy $a$ or $c$. After the transfers, $b$ is the agent without an extra $k$ chore. Since $b$ has only one more chore in total, we know that $b$ has at most three more 1-chores than either $a$ or $b$. As we know that $k > 2$ from previous conditions, after removing a 1-chore $b$'s two remaining 1-chores are not worse than $a$ and $c$'s extra $k$-chore, so $b$ does not EFX-envy them. The allocation is then EFX.

The case handled in Line 19 is similar to that of Line 16. We now suppose that $a$ has a chore $j_a$ which is disutility 1 for $c$ but $k$ for $b$ and transfer this chore to $c$. We check the EFX condition. Clearly it remains that $a$ does not EFX-envy $c$. Although $b$ lost a $k$-chore, the addition of $j_c$ prevents $a$ from EFX-envying $b$, so $a$ does not EFX-envy anybody. We now examine $c$. Since $|\x_a| \geq |\x_c|$ before the transfers, after the transfers it remains that $a$ and $c$ have balanced chores. Since they now also have the same number of $k$-chores, $c$ does not EFX-envy $a$. After the transfers, $b$ has at least as many total chores as $c$. Although $c$ has an extra $k$-chore, since $b$ has $j_a$, $c$ views $b$ as having equally many $k$-chores, so $c$ does not EFX-envy $b$. Finally, $b$ now has one fewer $k$-chore than $a$ or $c$ but can have up to three more 1-chores. Since $k > 2$, after the removal of a 1-chore $b$ prefers her extra 1-chores over the extra $k$-chore of either $a$ or $c$. So $b$ does not EFX-envy anyone and the allocation is EFX.

In Line 24, we have that $b$ and $c$ both have disutility 1 for each other's 1-chores, $k > 2$, $a$ has at least two 1-chores, and $b$ and $c$ have the same valuation for all of $a$'s 1-chores. We also suppose that $a$ has two chores $j_{a_1}$ and $j_{a_2}$ which $a$ values at 1 but $b$ and $c$ value at $k$. Note that since both the $k$-chores and 1-chores are balanced between agents, we have that $|\x_c| \leq |\x_b|$ and $|\x_c| \leq |\x_a|$. We then have three cases.
\begin{enumerate}
	\item In Line 25 we have $|\x_b| \leq |\x_a|$. We transfer a $k$-chore from $a$ to $c$. If it is the case that $|\x_a| > |\x_c|$, then this transfer maintains the balance of total chores. Now $b$ and $c$ have the same number of $k$-chores, so they cannot EFX-envy each other. Since $a$ has only lost a chore she continues to not EFX-envy $b$ and $c$. We only need check that $b$ and $c$ do not EFX-envy $a$. Since $|\x_c| \leq |\x_b \leq |\x_a|$ before the transfer, the total chores remain balanced after the transfer. Although $a$ lost a $k$-chore, since $b$ and $c$ view $j_{a_1}$ and $j_{a_2}$ as $k$, they still do not EFX-envy $a$. Otherwise, if $|\x_c| = |\x_b| = |\x_a|$, the same arguments will hold, except between $a$ and $c$. Now after the transfer $c$ has two more chores than $a$. Suppose $a$ has $t$ $k$-chores and $p$ 1-chores ($c$ then has $t + 1$ $k$-chores and at most $p + 1$ 1-chores). Since $d_c(j_{a_1}) = d_c(j_{a_2}) = k$ and $k > 2$, we have $d_c(\x_c) - 1 \leq (t+1)k + p < (t+2)k + (p-2) \leq d_c(\x_a)$. So $c$ still cannot EFX-envy $a$.
	\item In Line 28 we have that $|\x_b| > |\x_a|$ and either $b$ or $c$ has a chore $j$ which is value 1 for $a$. Then we transfer $j$ to $a$ and transfer a $k$-chore from $a$ to $c$. Note that this preserves the balance of total chores. As in (1), $b$ and $c$ now have the same number of $k$-chores and cannot EFX-envy each other, $a$ has the fewest $k$-chores so she cannot EFX-envy $b$ or $c$, and $b$ and $c$ cannot EFX-envy $a$ due to $j_{a_1}$ and $j_{a_2}$.
	\item In Line 32, if $|\x_b| > |\x_a|$ but all 1-chores of $b$ and $c$ are value $k$ for $a$, we may simply transfer a $k$-chore from $b$ to $a$. Since $|\x_b| > |\x_a|$ this maintains the balance of total chores. Now, $a$ has both of the extra $k$-chores, so $b$ and $c$ do not EFX-envy her, and since $b$ and $c$ have the same number of $k$-chores they do not EFX-envy each other. Finally, $a$ cannot EFX-envy $b$ or $c$ as she sees all of their chores as $k$.
\end{enumerate}

We now have the final case in Line 35, where it must be that $b$ and $c$ value all of each other's 1-chores as 1 and $a$ has at most one 1-chore which is value $k$ for $b$ and $c$. In this case, we transfer 1-chores from $a$ and $b$ to $c$ until $b$ no longer EFX-envies $c$, while maintaining a balanced number of chores between $a$ and $b$ so they do not begin to EFX-envy each other. If $b$ has at least as many chores as $a$, we transfer a 1-chore from $b$ to $c$. Otherwise, if $|\x_a| > |\x_b|$, we transfer a chore out of $a$ instead, so that $a$ and $b$ will maintain balanced total chores and thus not EFX-envy each other. It is clear that $a$ still does not EFX-envy $c$, and $b$ must stop EFX-envying $c$ by the time when she has given away all of her 1-chores. Then, since $c$ only receives 1-chores incrementally, she cannot grow to EFX-envy $b$. And, since $a$'s bundle is at least as bad as $b$'s bundle for $c$, $c$ also cannot EFX-envy $a$. Thus, the allocation is EFX.

Then, since all transfers in all cases are done only with MPB chores, it remains that we have a CE, and thus the allocation is fPO.
\end{proof}

\subsection{Algorithm~\ref{alg:R1-1extra}: One Extra $K$-chore}

\begin{algorithm}[!htb]
\caption{One extra $K$-chore}\label{alg:R1-1extra}
\textbf{Input:} EF1 competitive equilibrium $(\x, \p)$\\
\textbf{Output:} EFX competitive equilibrium $(\x, \p)$
\begin{algorithmic}[1]
\State $a \gets \s{argmax}_{i \in N} |\x_{i_k}|$
\State $b \gets i \in N \mid a \textrm{ EFX-envies } b$
\State $c \gets i \in N \mid a \textrm{ does not EFX-envy } c$
\State $j_c \gets j \in \x_c \setminus \mpb_a$
\State $k_a \gets j \in \x_a \cap K$
\State $j_b \gets \{j \in \x_b \mid d_a(j) = 1$\}
\If{$\exists j \in \x_c \setminus \mpb_b$}
	\State $\s{Swap}(k_a, j_b)$
	\While{$b$ EFX-envies $a$}
		\State $j_{b^*} \gets \{ j \in \x_b \mid d_a(j) = 1 \}$
		\State $\s{Transfer}(j_{b^*}, a)$ 
	\EndWhile
\ElsIf{$\exists j_{b'} \in \x_b \text{ s.t. } d_c(j_{b'}) = 1$}
	\State $\s{Swap}(j_{b'}, j_c)$
	\If{$\x$ is not EFX}
		\State $j_{c^*} \gets j \in \x_c \cap L$
		\State $\s{Swap}(k_a, j_{c^*})$
	\EndIf
	\While{$c$ EFX-envies $a$ or $b$}
		\State $j_{c^*} \gets j \in \x_c \cap L$
		\If{$|\x_a| < |\x_b|$}
			\State $\s{Transfer}(j_{c^*}, a)$
		\Else
			\State $\s{Transfer}(j_{c^*}, b)$
		\EndIf
    \EndWhile
\ElsIf{$|\x_b|_1 = 1$}
	\State $\s{Swap}(k_a, j_b)$
\Else
	\State $j_{c'} \gets \{ j \in \x_c \mid d_b(j) = 1\}$
	\State $\s{Transfer}(k_a, c)$
	\State $\s{Transfer}(j_{c'}, b)$
	\State $\s{Transfer}(j_b, a)$
\EndIf
\State \Return $(\x, \p)$
\end{algorithmic}
\end{algorithm}

We show that:

\begin{restatable}{lemma}{R1_1extra_EFX}\label{lem:R1_1extra_EFX} Algorithm~\ref{alg:R1-1extra} returns an EFX competitive equilibrium.
\end{restatable}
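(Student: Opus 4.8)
The plan is to follow the template of the proof of Lemma~\ref{lem:R1_2extra_EFX}: first read off the combinatorial shape of the input from Lemma~\ref{lem:R1_1extra_Input}, then verify each of the four branches of Algorithm~\ref{alg:R1-1extra} separately. By Lemma~\ref{lem:R1_1extra_Input}, the input $(\x,\p)$ is a competitive equilibrium whose allocation $\x$ is cost-minimizing and fully balanced, with $|K|\equiv 1\ (\textrm{mod}\ 3)$ and exactly one EFX-envy relationship. Cost-minimization forces $\x_{i_k}\subseteq K$ for every agent $i$, so together with full balance and $|K|\equiv 1\ (\textrm{mod}\ 3)$ there is a unique agent holding one extra $K$-chore; this agent is exactly $a=\s{argmax}_{i}|\x_{i_k}|$, hence $|\x_{a_k}|=|\x_{b_k}|+1=|\x_{c_k}|+1$ and $k_a\in\x_a\cap K$ is well defined. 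By Lemma~\ref{lem:EFX_BalancedEnvy} the unique EFX-envy must be directed from an agent with strictly more $K$-chores towards one with strictly fewer, so it is precisely ``$a$ EFX-envies $b$'', which fixes the roles of $b$ and $c$. Since $a$ does not EFX-envy $c$, the processing in Lines~22--40 of Algorithm~\ref{alg:R1} (analysed exactly as in the proof of Lemma~\ref{lem:R1_2extra_EFX}) produces the chore $j_c\in\x_c\setminus\mpb_a$; and since all of $b$'s $k$-chores are $K$-chores which $a$ also disvalues, Lemma~\ref{lem:EFX_BalancedEnvy} guarantees $b$ holds a chore worth $1$ to $a$, so $j_b$ is well defined in the branches that reference it.

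I would then dispatch the four branches in the order they appear in Algorithm~\ref{alg:R1-1extra}. Maintaining the competitive equilibrium---and hence fPO---is uniform: every $\s{Swap}$ and $\s{Transfer}$ moves only MPB chores, since a $K$-chore is MPB for all agents and each moved $L$-chore is explicitly chosen to be a $1$-chore of its recipient. For EFX I would, in each branch, track the triple $(|\x_{i_1}|,|\x_{i_k}|,|\x_i|)$ for every agent and check each ordered pair using Lemma~\ref{lem:EFX_Best_Off} (an agent left with only $1$-chores in a balanced allocation envies no one), Lemma~\ref{lem:EFX_BalancedEnvy} (EFX-envy from an agent with $t$ $k$-chores forces the envied bundle to contain at most $t-1$ chores she disvalues), and Lemma~\ref{lem:EFX_Equal_Chores_Group} ($k>2$ is needed for EFX-envy between two agents with equal chore counts). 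In the first branch, $\s{Swap}(k_a,j_b)$ relocates the extra $K$-chore onto $b$ in exchange for a $1$-chore, so afterwards $a$ and $c$ each hold no more $K$-chores than $b$ while $b$ holds a chore each of them disvalues; the while-loop then transfers $1$-chores off $b$ onto $a$ exactly until $b$ no longer EFX-envies $a$, which must happen before $b$ exhausts her $1$-chores, and one checks that the two other ordered pairs are unaffected. The third branch ($|\x_{b_1}|=1$) is a single swap analysed the same way, and the fourth (``else'') branch is a fixed short sequence of transfers. The delicate branch is the second one, combining $\s{Swap}(j_{b'},j_c)$, a conditional second swap $\s{Swap}(k_a,j_{c^*})$, and a cleanup loop feeding $c$'s remaining $L$-chores to whichever of $a,b$ currently has fewer chores; here I would argue that passing the extra $K$-chore through two agents still destroys the original envy and that the loop terminates with no new envy, using that $c$ receives $L$-chores one at a time and that its two possible recipients are kept chore-balanced.

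The main obstacle is the second branch together with the sheer amount of bookkeeping: for each of roughly a dozen sub-configurations one must verify that relocating the single extra $K$-chore---possibly through two agents and followed by a re-balancing loop---eliminates exactly the one pre-existing EFX-envy relationship and introduces none, all while keeping the allocation balanced in both $1$-chores and $k$-chores so that Lemmas~\ref{lem:EFX_BalancedEnvy} and~\ref{lem:EFX_Equal_Chores_Group} stay applicable at every step. A secondary obstacle, visible already in the setup paragraph, is confirming that every auxiliary chore the algorithm names ($j_c$, $j_b$, $j_{b'}$, $j_{c'}$, $j_{c^*}$) actually exists in the branch that uses it; this is where the counting lemmas, together with the cost-minimization, full balance, and single-EFX-envy structure provided by Lemma~\ref{lem:R1_1extra_Input}, do the work.
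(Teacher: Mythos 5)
Your setup and overall strategy coincide with the paper's: identify $a$ as the unique holder of the extra $K$-chore, orient the single EFX-envy as ``$a$ envies $b$'' via Lemma~\ref{lem:EFX_BalancedEnvy}, confirm $j_c\in\x_c\setminus\mpb_a$ and $j_b$ exist, and then verify the four branches one by one, with CE/fPO preserved because only MPB chores move. Your treatments of branches 1, 3, and 4 match the paper's arguments in substance (branch 4 is a three-way rotation $a\to c$, $c\to b$, $b\to a$ rather than a generic ``short sequence,'' but your framework covers it).

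The gap is exactly where you flag it: branch 2 is left as an acknowledged obstacle, and the one genuinely non-obvious idea that makes it work is not identified. After $\s{Swap}(j_{b'},j_c)$, the paper observes that if the allocation is still not EFX, the only possible envy is $a$ towards $c$ with $a$ valuing \emph{all} of $c$'s 1-chores at 1; combined with the branch hypothesis that none of $c$'s 1-chores costs $k$ to $b$, this forces every 1-chore of $c$ to be a \emph{universal} 1-chore (disutility 1 for all three agents). That structural fact is what licenses the second swap $\s{Swap}(k_a,j_{c^*})$ (so that afterwards only $c$ can be EFX-envious) and underpins the termination argument for the cleanup loop: the chain $d_a(\x_a)-1\leq d_b(\x_b)-1\leq d_c(\x_b)-1\leq d_c(\x_c)\leq d_a(\x_c)$ shows $a$ and $b$ never come to envy $c$, the balanced feeding keeps $a$ and $b$ from envying each other, and $c$ must stop envying before she exhausts her (universal) 1-chores. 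Saying ``passing the extra $K$-chore through two agents still destroys the original envy'' describes the desired outcome but does not supply this argument, so as written the proof of the hardest branch is missing rather than merely compressed. The secondary concern you raise (existence of $j_{b'}$, $j_{c'}$, $j_{c^*}$ in the branches that use them) is real but routine once the universal-1-chore observation is in hand.
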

\begin{proof}
Algorithm~\ref{alg:R1-1extra} begins by defining agents $a$, $b$, and $c$ so that $a$ is the agent with the extra $k$-chore, and $a$ EFX-envies $b$ but not $c$. It follows that $a$ sees all of $b$'s 1-chores as 1 also, and there exists some 1-chore $j_c \in \x_c$ which $a$ values as $k$.

We first check in Line 7 if $c$ has a chore $j$ which is not MPB for $b$, meaning it has value 1 for $c$ but $k$ for $b$. If so, then we swap a $k$-chore of $a$ with a 1-chore of $b$. Only $b$ can EFX-envy another agent as $b$ now has the extra $k$-chore. It cannot be that $b$ EFX-envies $c$ due to $j$, so $b$ can only EFX-envy $a$. If this is the case, we simply transfer 1-chores from $b$ to $a$ until the EFX-envy disappears. We check that $c$ does not EFX-envy $a$ as they now have the same number of $k$-chores and $a$ has weakly more 1-chores than $c$. It also cannot be that $c$ EFX-envies $b$, as we have $d_c(\x_c) - 1 \leq d_a(\x_a) < d_a(\x_b) - 1 = d_b(\x_b) - 1 \leq d_c(\x_b) - 1 < d_c(\x_b)$. Thus, the allocation is EFX.

In Line 12, if $c$ has no 1-chore which is value $k$ for $b$, we check whether $b$ has a 1-chore $j_{b'}$ which is also value 1 for $c$. If so, we swap $j_{b'}$ with $j_c$. Now if the allocation is not EFX, it can only be that $a$ EFX-envies $c$ and values of $c$'s 1-chores as 1 also. Then, in fact, it must be that $c$ only has universal 1-chores which are value 1 for all agents. We now swap a $k$-chore from $a$ with a universal 1-chore from $c$, so that now only $c$ can EFX-envy someone. If $c$ does EFX-envy someone, then we transfer a 1-chore out of $c$ to whomever of $a$ and $b$ that has the fewest chores. Note, this is not necessarily the agent that $c$ EFX-envies. We show that this can be done until the allocation is EFX. As $a$ and $b$ now have equal $k$-chores, and we maintain the fact that they have balanced total chores, they cannot EFX-envy each other. We also have that $d_a(\x_a) - 1 \leq d_b(\x_b) - 1 \leq d_c(\x_b) - 1 \leq d_c(\x_c) \leq d_a(\x_c)$, so $a$ and $b$ cannot EFX-envy $c$. In turn, $c$ is guaranteed to stop EFX-envying both $a$ and $b$ before she gives away all of her 1-chores, so the allocation is EFX.

In Line 23 we do a simple check on the number of 1-chores of $b$. If $b$ has exactly one 1-chore, swapping this chore with the extra $k$-chore from $a$ is immediately EFX. 

Finally, in Line 25, it must be that $b$ values all of $c$'s 1-chores as 1, $c$ values all of $b$'s 1-chores as $k$, and $a$ values all of $b$'s 1-chores as 1. We can then rotate chores so that $a$ gives a $k$-chore to $c$, $c$ gives a 1-chore to $b$, and $b$ gives a 1-chore to $a$. Now $c$ has the extra $k$-chore and is the only agent who can EFX-envy. However, $c$ cannot EFX-envy $a$ as $a$ has received a chore from $b$, which $c$ values as $k$. In addition, $b$ retains at least one 1-chore which it had initially, which $c$ also values as $k$. Thus, $c$ cannot EFX-envy $b$ either, so the allocation is EFX.

Since all transferred chores are MPB for their recipients, we maintain a CE and thus the allocation returned is fPO.
\end{proof}

\subsection{Algorithm~\ref{alg:R2-1}: $\mathbf{R=2, |N_1| = 1}$}

\begin{algorithm}[!htb]
\caption{When $R=2$ and $|N_1| = 1$}\label{alg:R2-1}
\textbf{Input:} EF1 competitive equilibrium $(\x, \p)$\\
\textbf{Output:} EFX competitive equilibrium $(\x, \p)$
\begin{algorithmic}[1]
\If{$\x$ is not EFX}
	\State $a \gets i \in N_1$
	\State $b \gets \s{argmax}_{i \in N_2} |\x_{i_1}|$
	\State $c \gets i \in N_2 \setminus \{b\}$
	\If{$|\x_{a_1}| = |\x_{c_1}|$}
		\State $(\x, \p) \gets \s{Algorithm~\ref{alg:R1}(\x, \p)}$
	\ElsIf{$|\x_a| < |\x_c|$}
		\State $k_c \gets j \in \x_c \cap K$
		\State $\s{Transfer}(k_c, a)$
	\ElsIf{$|\x_b| < |\x_a|$ or $|\x_{b_1}| \geq 3$}
		\If{$\exists j_a \in \x_a \text{ s.t. } j_a \in \mpb_b$}
			\State $\s{Transfer}(j_a, b)$
		\Else
			\State raise payments of $\x_a$ by a factor of $k$
			\State $j_a \gets j \in \x_a$
			\State $\s{Transfer}(j_a, b)$
		\EndIf
	\Else
		\If{$\exists j_c \in \x_c \text{ s.t. } d_b(j_c) = 1$}
			\State $\s{Transfer}(j_c, b)$
		\Else
			\State $k_c \gets j \in \x_c \cap K$
			\State $j_b \gets 
			\{ j \in \x_b \mid d_c(j) = 1$ \}
			\State $\s{Swap}(k_c, j_b)$
		\EndIf
	\EndIf
\EndIf
\State \Return $(\x, \p)$
\end{algorithmic}
\end{algorithm}

We show that:

\begin{restatable}{lemma}{R2_1_EFX}\label{lem:R2_1_EFX} Algorithm~\ref{alg:R2-1} returns an EFX competitive equilibrium.
\end{restatable}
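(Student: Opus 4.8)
The plan is to follow the branching structure of Algorithm~\ref{alg:R2-1} and verify EFX (and that the competitive equilibrium, hence fPO, is preserved) in each branch. First I would record the starting configuration. By Lemma~\ref{lem:BC_Term} the input $(\x,\p)$ is a balanced EF1+fPO competitive equilibrium, and since Algorithm~\ref{alg:R2-1} is invoked only when Algorithm~\ref{alg:BalancedChores} returns $R=2$ groups with $|N_1|=1$, we have $N_1=\{a\}$ and $N_2=\{b,c\}$ with $b$ holding (weakly) more $1$-chores than $c$. The key structural facts are: every chore of $a$ is disutility $k$ for both $b$ and $c$ (Lemma~\ref{lem:Higher_k}, since $a$ is in the higher group); consequently $b$ and $c$ do not EFX-envy $a$ (Lemma~\ref{lem:EFX_Higher_Group}); and $|\x_{a_1}|\ge|\x_{b_1}|\ge|\x_{c_1}|$. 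Hence the only EFX-envy that can be present is $a$ toward $b$ or $c$, or $b$ and $c$ toward each other. If $\x$ is already EFX the algorithm returns it unchanged, so assume not, and keep in mind throughout that balancedness ($\max_i|\x_i|-\min_i|\x_i|\le 1$) and Invariant~\ref{inv:Almost_Fewer_Chores} ($|\x_b|,|\x_c|\le|\x_a|+1$) constrain the chore counts.

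The branch $|\x_{a_1}|=|\x_{c_1}|$ is the easy one: Lemma~\ref{lem:R1_Input} shows the allocation is cost-minimizing and fully balanced, so the call to Algorithm~\ref{alg:R1} returns an EFX competitive equilibrium by the correctness argument for Algorithm~\ref{alg:R1} together with Lemmas~\ref{lem:R1_2extra_Input}, \ref{lem:R1_1extra_Input}, \ref{lem:R1_2extra_EFX}, \ref{lem:R1_1extra_EFX}. So assume $|\x_{a_1}|>|\x_{c_1}|$. If $|\x_a|<|\x_c|$, then $|\x_c|=|\x_a|+1$ and I would argue that $c$ holds a $K$-chore (from the way Algorithm~\ref{alg:BalancedChores} distributes $K$ and rebalances, favoring lower groups), move it onto $a$, and check that counts become balanced and no new EFX-envy appears, since $a$ only loses a chore while gaining a $K$-chore that $b$ and $c$ already see as $k$. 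If instead $|\x_b|<|\x_a|$ or $|\x_{b_1}|\ge 3$, then $b$ has room to absorb a chore of $a$ without EFX-envying $a$ back (when $|\x_{b_1}|\ge 3$ this uses $k>2$, so that $b$ removing one of its $\ge 3$ $1$-chores outweighs $a$'s lone extra $k$-chore), so the algorithm moves such a chore from $a$ to $b$, first raising the payments of $\x_a$ by a factor $k$ if no chore of $a$ is currently MPB for $b$; I would verify that a single such raise brings a chore of $a$ into $\mpb_b$, that it does not disturb the MPB condition of $c$, and that the post-transfer allocation is EFX.

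The remaining ``else'' branch is the delicate one: here $|\x_a|\ge|\x_c|$, $|\x_b|\ge|\x_a|$ and $|\x_{b_1}|\le 2$, which together with balance forces $|\x_c|\le|\x_a|\le|\x_b|\le|\x_c|+1$. If $c$ has a chore $j_c$ with $d_b(j_c)=1$, the algorithm hands it to $b$; otherwise all of $c$'s chores are $k$ for $b$, and it swaps a $K$-chore of $c$ with a $1$-chore of $b$ that $c$ values at $1$. In each sub-case I would re-tally the $1$-chore and $k$-chore counts of the three agents and rule out each potential EFX-envy using Lemmas~\ref{lem:EFX_Best_Off}, \ref{lem:EFX_Higher_Group}, and \ref{lem:EFX_BalancedEnvy}; I also need to check that the required chores (a $1$-chore of $b$ that $c$ likes, or a $K$-chore of $c$) actually exist, which should follow from $|\x_{a_1}|>|\x_{c_1}|$, $|\x_{b_1}|\le 2$, the cost structure, and the fact that $\x$ currently fails EFX.

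Finally, every transferred chore is MPB for its recipient by construction or after the explicit payment raise, so $(\x,\p)$ stays a competitive equilibrium and the First Welfare Theorem (cf.\ Lemma~\ref{lem:pEF1impliesEF1}) gives fPO; combined with the case analysis this proves the lemma. The main obstacle I expect is the last ``else'' branch: there the chore counts are pinned only up to $\pm 1$, $k$ may equal $2$, and one must exploit the precise hypotheses ($|\x_{a_1}|>|\x_{c_1}|$, $|\x_{b_1}|\le 2$, non-EFX so a concrete envy pair exists) to show that the single swap or transfer simultaneously destroys the existing EFX-envy and creates none. A secondary subtlety is confirming, in the $|\x_b|<|\x_a|$ or $|\x_{b_1}|\ge 3$ branch, that exactly one payment raise suffices and leaves $c$'s MPB condition intact.
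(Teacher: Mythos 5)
Your plan follows the paper's branch-by-branch structure, but it is missing the single observation that organizes the whole proof: in any input to Algorithm~\ref{alg:R2-1}, the \emph{only} possible EFX-envy is from $c$ towards $b$. You correctly rule out $b$ and $c$ EFX-envying $a$ (Lemma~\ref{lem:EFX_Higher_Group}), but then leave open ``$a$ toward $b$ or $c$, or $b$ and $c$ toward each other.'' The paper closes these off: since $a$ is alone in the higher group, the tiebreaking rules and the direction of transfers guarantee $b$ and $c$ always hold at least as many $k$-chores as $a$, so $a$ never EFX-envies them; and taking $c$ to be the agent of $N_2$ who receives her first $k$-chore earlier gives $|\x_{c_k}|\ge|\x_{b_k}|$, so by Lemma~\ref{lem:EFX_BalancedEnvy} $b$ cannot EFX-envy $c$. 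Without this reduction your branch verification does not go through as sketched: for instance, the Line~10 branch hands $b$ an \emph{additional} chore of $a$ (which $b$ and $c$ both value at $k$); that move only makes sense as a way to equalize the $k$-chore counts of $b$ and $c$ and kill $c$'s envy of $b$, and it would be useless (indeed counterproductive) against a hypothetical envy emanating from $a$ or from $b$. The same reduction is what makes the final ``else'' branch tractable, since it pins the envy pair before you start re-tallying counts.

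A secondary but real error is in the $|\x_a|<|\x_c|$ branch: you justify the absence of new envy by saying ``$a$ only loses a chore,'' but $a$ is the \emph{recipient} of the $K$-chore there. The check that actually has to be made is that $a$ does not begin to EFX-envy $b$ or $c$ after gaining a $k$-chore, and the paper's argument is a counting one: $|\x_{c_1}|<|\x_{a_1}|$ together with $|\x_a|<|\x_c|$ forces $c$ to have at least two more $k$-chores than $a$ and $b$ at least one more, so after the transfer $a$ still has at most as many $k$-chores as either. Your remaining sketches (the payment-raise branch, the final swap/transfer branch) point in the right direction but inherit the first gap; I would fix the envy characterization before attempting them.
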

\begin{proof}
We first consider some properties of any input of Algorithm~\ref{alg:R2-1}. When $R = 2$ and $|N_1| = 1$, $a$ cannot EFX-envy $b$ or $c$, as they will always have at least as many $k$-chores as $a$, and $b$ and $c$ cannot EFX-envy $a$ as they see all of her chores as $k$. Furthermore, without loss of generality we may assume that $c$ receives her first $k$-chore before $b$ does, so $c$ will always have at least as many $k$-chores as $b$. Thus, the EFX-envy must exist from $c$ towards $b$. We now cover the conditions in Lines 5, 7, 10, and 17.

In Line 5, when $c$ has exactly as many 1-chores as $a$, we have that all agents have equal number 1-chores and balanced $k$-chores. Then, we may use Algorithm~\ref{alg:R1} to find an EFX allocation.

In Line 7, we now suppose that $c$ has fewer 1-chores than $a$, and $|\x_a| < |\x_c|$. Note that the latter condition only holds when no transfers occur in Algorithm~\ref{alg:BalancedChores}. In this case, we transfer a universal $k$-chore from $c$ to $a$. The number of chores remains balanced between all agents. Now $b$ and $c$ possess the same number of $k$-chores, so they cannot EFX-envy each other, and it is clear that they remain unenvious of $a$. We need only show that $a$ does not begin to EFX-envy either $b$ or $c$. Since $c$ has fewer 1-chores than $a$, yet $|\x_a| < |\x_c|$, it must be that $c$ has at least two more $k$-chores than $a$ and $b$ has at least one more $k$-chore than $a$. Then, even after $c$ loses a $k$-chore to $a$, $a$ has at most as many $k$-chores as $b$ and $c$, so she does not EFX-envy them. 

Now in Line 10, we first consider the case when $|\x_b| < |\x_a|$. We transfer a chore from $a$ to $b$, which both $b$ and $c$ value as $k$. The number of chores remains balanced between all agents, and $b$ and $c$ now have the same number of $k$-chores. Since they still see all of $a$'s chores as $k$, they remain unenvious of $a$, and $a$ herself clearly remains unenvious as she has only lost a chore. Thus, the allocation is EFX. If it is not the case that $|\x_b| < |\x_a|$, we have $|\x_c| \leq |\x_a| \leq |\x_b|$. Since $|\x_c| < |\x_b|$ implies that $b$ has as many $k$-chores as $c$, and thus no EFX-envy between them, we must in fact have $|\x_c| = |\x_a| = |\x_b| = p$. From $|\x_c| = |\x_b|$ and the fact that $c$ EFX-envies $b$ we also conclude that $b$ has one more 1-chore than $c$, so it must be that $k > 2$ for $c$ to EFX-envy $b$. Now suppose $b$ has at least three 1-chores and she receives a chore from $a$ with value $k$. Clearly $a$ remains unenvious of $b$ and $c$. By Lemma~\ref{lem:EFX_BalancedEnvy} $b$ and $c$ do not EFX-envy each other, as they have equal number $k$-chores and balanced total chores. In addition, $c$ cannot EFX-envy $a$ as the two still have balanced total chores and $c$ sees all of $a$'s chores as $k$. We need only check that $b$ does not begin to EFX-envy $a$. Since $b$ has at least three 1-chores and at most $p - 2$ $k$-chores, and we know $k > 2$, we have $d_b(\x_b) - 1 \leq (p-2)k + 2 \leq (p-1)k = d_b(\x_a)$. So $b$ does not EFX-envy $a$ and the allocation is EFX.

In Line 17, as none of the previous conditions hold, we must have $|\x_c| = |\x_a| = |\x_b| = p$, $|\x_b|_1 < 3$, and $k > 2$. We argue that in fact $|\x_b|_1 = 2$ and $|\x_c|_1 = 1$. That is, $b$ must have exactly two 1-chores and $c$ must have exactly one 1-chore. Note that for $c$ to have EFX-envy for $b$, $c$ must have at least one 1-chore. Then we must have $|\x_b|_1 \neq |\x_c|_1$, as otherwise $|\x_b| = |\x_c|$ would imply that $b$ and $c$ have the same number of $k$-chores and no EFX-envy between them. So, we must have $|\x_b|_1 > |\x_c|_1 \geq 1$ and in conjunction with $|\x_b|_1 < 3$ the desired result follows. Now suppose there exists a 1-chore of $c$'s which is also value 1 for $b$. Upon transferring this chore to $b$, we can see that $c$ no longer EFX-envies $b$, as $c$ now only has $p-1$ $k$-chores, so $d_c(\x_c) - k = (p - 2)k < (p - 2)k + 3 \leq d_c(\x_b)$. We also check that $b$ does not start EFX-envying $c$. Since $k > 2$ we have $d_b(\x_b) - 1 = (p - 2)k + 2 \leq (p - 1)k = d_b(\x_c)$. It is easy to verify that $a$ still does not EFX-envy $b$ or $c$, as $a$ still has balanced total chores with respect to either and fewer $k$-chores than either. Thus, the allocation is EFX.

Note that all transfers consist of only MPB chores, so it remains that we have a CE and the allocation is fPO.
\end{proof}

\subsection{Algorithm~\ref{alg:R2-2}: $\mathbf{R=2, |N_1| = 2}$}

\begin{algorithm}[!htb]
\caption{When $R=2$ and $|N_1| = 2$}\label{alg:R2-2}
\textbf{Input:} EF1 competitive equilibrium $(\x, \p)$\\
\textbf{Output:} EFX competitive equilibrium $(\x, \p)$
\begin{algorithmic}[1]
\If{$\x$ is not EFX}
	\State $a \gets \s{argmax}_{i \in N_1} |\x_{i_1}|$
	\State $b \gets i \in N_1 \setminus \{a\}$
	\State $c \gets i \in N_2$
	\If{$|\x_c| > |\x_a|$}
	    \If{$|\x_{c_k}| > |\x_{b_k}|$}
	        \State $k_c \gets j \in \x_c \cap K$
	        \State $\s{Transfer}(k_c, a)$
	    \Else
	        \State $(\x, \p) \gets \s{Algorithm~\ref{alg:R1}(\x, \p)}$
	    \EndIf
	\Else
	    \If{$|\x_{c_1}| \geq |\x_{b_1}|$}
	        \State $(\x, \p) \gets \s{Algorithm~\ref{alg:R1}(\x, \p)}$
	    \Else
		    \If{$\exists j_c \in \x_c \text{ s.t. } d_b(j_c) = 1$}
		    	\State $k_b \gets j \in \x_b \cap K$
		    	\State $\s{Swap}(j_c, k_b)$
	    	\Else
		    	\State $k_c \gets j \in \x_c \cap K$
		    	\State $j_a \gets \{j \in \x_a \mid d_b(j) = 1\}$
		    	\State $\s{Transfer}(k_c, a)$
		    	\State $\s{Transfer}(j_a, b)$
	    	\EndIf
	    \EndIf
	\EndIf
\EndIf
\State \Return $(\x, \p)$
\end{algorithmic}
\end{algorithm}

We show that:

\begin{restatable}{lemma}{R2_2_EFX}\label{lem:R2_2_EFX} Algorithm~\ref{alg:R2-2} returns an EFX competitive equilibrium.
\end{restatable}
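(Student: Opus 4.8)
The plan is to prove Lemma~\ref{lem:R2_2_EFX} by a case analysis that follows exactly the branching of Algorithm~\ref{alg:R2-2}, showing for every leaf that the output allocation is EFX and that each chore moved by $\s{Transfer}$ or $\s{Swap}$ is MPB for its recipient, so that $(\x,\p)$ remains a competitive equilibrium and $\x$ stays fPO. If the input $\x$ is already EFX there is nothing to prove, so assume it is not.

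First I would fix the structural picture of the input, which comes from Algorithm~\ref{alg:BalancedChores} with $R=2$, $N_1=\{a,b\}$, $N_2=\{c\}$. As shown in the proof of Lemma~\ref{lem:R1_Input} (the $R=2$, $|N_1|=2$ part), if the returned allocation is not EFX then Algorithm~\ref{alg:BalancedChores} performed no chore transfers, hence also no payment raises; so every agent has MPB ratio $1$, $\x$ is cost-minimizing (Lemma~\ref{lem:No_Transfers}), $N_1$ has fully balanced chores (Invariant~\ref{inv:Group_Full_Balance}), and $\x$ is balanced overall (Lemma~\ref{lem:BC_Term}). In this transfer-free, unraised regime Lemma~\ref{lem:PseudoKChore} together with Lemma~\ref{lem:Higher_k} implies that a chore is a $k$-chore for one agent iff it is a $K$-chore; in particular $\x_{i_k}=\x_i\cap K$ for every $i$, so each chore named $k_c,k_b$ in the pseudocode exists whenever the corresponding agent has a $k$-chore. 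By Lemma~\ref{lem:Higher_k}, $c$ has disutility $k$ for every chore of $a$ and of $b$, and by Lemma~\ref{lem:EFX_Higher_Group} $c$ never EFX-envies $a$ or $b$ as long as totals stay balanced. Thus any EFX-envy originates from $a$ or $b$, and the task is to shift at most one $K$-chore so that $a$ and $b$ have equal $k$-chore counts, while keeping totals balanced and not creating envy by $c$.

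Second I would dispatch the branches, using that $k$-chores are $K$-chores and hence have disutility $k$ for all three agents. When $|\x_c|>|\x_a|$: balance forces $|\x_c|=|\x_a|+1$ and $|\x_b|\in\{|\x_a|,|\x_a|+1\}$, and $|\x_{a_1}|\ge|\x_{b_1}|$ forces $|\x_{a_k}|\le|\x_{b_k}|$. If $|\x_{c_k}|>|\x_{b_k}|$, moving one $K$-chore from $c$ to $a$ restores balanced totals and, by a direct count using Lemma~\ref{lem:EFX_BalancedEnvy} (and Lemma~\ref{lem:EFX_Best_Off} when an agent is left with only $1$-chores), leaves no EFX-envy. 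If instead $|\x_{c_k}|\le|\x_{b_k}|$, the first half of the $R=2,|N_1|=2$ part of the proof of Lemma~\ref{lem:R1_Input} shows $\x$ is cost-minimizing and \emph{fully} balanced, so the preconditions of Algorithm~\ref{alg:R1} hold and the call returns an EFX competitive equilibrium by Lemmas~\ref{lem:R1_2extra_EFX} and \ref{lem:R1_1extra_EFX}. When $|\x_c|\le|\x_a|$: if $|\x_{c_1}|\ge|\x_{b_1}|$ the allocation is again fully balanced (second half of the same Lemma~\ref{lem:R1_Input} part) and Algorithm~\ref{alg:R1} finishes. The remaining case $|\x_{c_1}|<|\x_{b_1}|$ pins $b$ down as the unique EFX-envious agent with $|\x_{b_k}|=|\x_{a_k}|+1$, hence $k>2$ by Lemma~\ref{lem:EFX_Equal_Chores_Group}/\ref{lem:EFX_BalancedEnvy}. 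Here I split on whether some $j_c\in\x_c$ has $d_b(j_c)=1$: such a $j_c$ is automatically a $1$-chore for $c$ and MPB for $b$ (since $\alpha_b=1$ forces $p_{j_c}=1$), and swapping it with a $K$-chore of $b$ equalizes the $k$-chore counts of $a$ and $b$ while leaving $c$ with balanced totals and an all-$k$ view of both, so the residual inequalities for ``$b$ does not EFX-envy anyone'' follow from $k>2$. If no such $j_c$ exists, $b$ values every chore of $c$ at $k$; then $b$ must have a $1$-chore shared with $a$ (otherwise $b$ would see every chore in the instance as $k$ and, by balance, could not be EFX-envious), which is exactly the chore $j_a$, again automatically a $1$-chore for $a$ and MPB for $b$; transferring a $K$-chore of $c$ to $a$ and $j_a$ to $b$ equalizes the $k$-chore counts of $a$ and $b$, keeps totals balanced, and leaves no EFX-envy since $b$ and $c$ each view the other's and $a$'s chores as $k$-heavy.

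The step I expect to be the main obstacle is not any single inequality but the combinatorial bookkeeping: verifying, in each branch, (i) that the named chores exist and are MPB for their recipients---handled above via $\x_{i_k}=\x_i\cap K$ and $\alpha_i=1$---and (ii) that the few configurations of $(|\x_{i_1}|,|\x_{i_k}|)_{i\in\{a,b,c\}}$ consistent with ``$\x$ balanced, fully balanced within $N_1$, cost-minimizing, and not EFX'' each yield an EFX allocation after the prescribed move. I would handle (ii) by first enumerating that short list of profiles from Invariant~\ref{inv:Group_Full_Balance}, Lemma~\ref{lem:Higher_k}, and the branch's size constraints (this is also where one rules out $a$, rather than $b$, being the envious agent in the last case), and then checking each profile against the move. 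The one configuration that would resist---all three agents having identical disutilities---is excluded by the standing assumption (footnote~\ref{idValBivalued}) and handled separately by the envy-cycle procedure. Preservation of fPO is immediate throughout, since every transfer or swap moves only MPB chores and so keeps $(\x,\p)$ a competitive equilibrium.
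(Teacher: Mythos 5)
Your proposal is correct and follows essentially the same route as the paper's proof: the same reduction to the transfer-free, cost-minimizing, fully-balanced input regime via Lemma~\ref{lem:R1_Input}, the same identification of $b$ as the unique possible EFX-envier with $|\x_{b_k}|=|\x_{a_k}|+1$, the same branch-by-branch verification (including delegating the two fully-balanced branches to Algorithm~\ref{alg:R1}), and the same closing observation that only MPB chores move, preserving the competitive equilibrium and hence fPO. Your added bookkeeping that $\x_{i_k}=\x_i\cap K$ and $\alpha_i=1$ in this regime is a harmless elaboration of facts the paper uses implicitly.
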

\begin{proof}

We have shown in the proof of Lemma~\ref{lem:R1_Input} that when $R = 2$ and $|N_1| = 2$, if any chore transfers occur in the execution of Algorithm~\ref{alg:BalancedChores} the output of Algorithm~\ref{alg:BalancedChores} is already EFX. Thus, before Line 5 of Algorithm~\ref{alg:R2-2}, we must have a cost-minimizing CE where each agent has only 1-chores or $K$-chores. We define agent $a$ to be the agent with weakly greater 1-chores in $N_1$, $b$ to be the remaining agent in $N_1$, and $c$ to be the agent in $N_2$. Without loss of generality, we may assume that $b$ receives a $k$-chore before $a$ in Algorithm~\ref{alg:BalancedChores} so that $|\x_{b_k}| \geq |\x_{a_k}|$. Furthermore, if EFX-envy exists in $\x$, it must be that $|\x_{b_k}| = |\x_{a_k}| + 1$ and $b$ is the EFX-envious agent.

In Line 5, we check that $c$ has strictly more total chores than $a$. If so, we check in Line 6 that $c$ has strictly more $K$-chores than $b$. If both conditions are satisfied, we directly transfer a $K$-chore from $c$ to $a$. Since $|\x_{c_k}| > |\x_{b_k}| \geq |\x_{a_k}| + 1$, it remains after the transfer that $c$ has at least as many $K$-chores as $a$ and $b$, so they cannot EFX-envy $c$. It is also the case that $a$ and $b$ now have equal $K$-chores, so they cannot EFX-envy each other. It remains that $c$ cannot EFX-envy $a$ or $b$, so the allocation is EFX. Else, if $|\x_c| > |\x_a|$ but $|\x_{c_k}| \leq |\x_{b_k}|$, we call Algorithm~\ref{alg:R1}. Algorithm~\ref{alg:R1} either finds an EFX allocation immediately or calls Algorithm~\ref{alg:R1-2extra} or Algorithm~\ref{alg:R1-1extra}. We have seen from Lemma~\ref{lem:R1_2extra_EFX} and Lemma~\ref{lem:R1_1extra_EFX} that these algorithms return an EFX competitive equilibrium. The same argument applies to the execution of Line 13. 

We then check the conditions in Line 14. It must be that $|\x_c| \leq |\x_a|$ and $|\x_{c_1}| < |\x_{b_1}|$. Note that $b$ is given a $K$-chore. By the tiebreaker rules, it be then that $|\x_c| \geq |\x_b|$. In addition, Invariant~\ref{inv:Group_Full_Balance} guarantees that $a$ and $b$ have fully balanced chores. Since $|\x_{b_k}| = |\x_{a_k}| + 1$, we have that $|\x_b| \geq |\x_a|$. Combining these results we see that $|\x_c| \geq |\x_b| \geq |\x_a| \geq |\x_c|$, so in fact $|\x_c| = |\x_b| = |\x_a|$. It follows then that $|\x_{a_1}| = |\x_{b_1}| + 1 > |\x_{c_1} + 1$, i.e., $a$ has exactly one more 1-chore than $b$, and $b$ has at least one 1-chore than $c$. Similarly, we have $|\x_{c_k}| > |\x_{b_k}| = |\x_{a_k}| + 1$, i.e., $c$ has at least one more $K$-chore than $b$, and $b$ has exactly one more $K$-chore than $a$. In this case, we first check if $c$ has a chore $j_c$ which is disutility 1 for $b$. If so, we swap a $K$-chore of $b$ with $j_c$. Then $a$ and $b$ cannot EFX-envy each other as they have equal $K$-chores now, and they clearly continue to not EFX-envy $c$. At the same time $c$ can still not EFX-envy $a$ as $c$ has disutility $k$ for all of her chores. Regarding $b$, $c$ now views all but one of $b$'s chores as $k$. Then, since every agent has exactly $q$ chores, we have $\max_{j \in \x_c} d_c(\x_c \setminus j) \leq (q-1)k \leq (q-1)k + 1 = d_c(\x_b)$. So $c$ does not EFX-envy $b$ and the allocation is EFX.

Else, in Line 18 we still have that $|\x_c| = |\x_b| = |\x_a| = q$ but now $b$ must have disutility $k$ for all of $c$'s 1-chores. In this case, we transfer a $K$-chore from $c$ to $a$, and a 1-chore from $a$ to $b$. Notably, since $c$ cannot be EFX-envied by $b$ here, it must be that $b$ EFX-envies $a$ all of $a$'s chores are MPB for $b$. After this transfer $c$ continues to not envy anybody. Now $a$ and $b$ have balanced chores and equal $K$-chores, so they do not EFX-envy each other. It remains that $a$ and $c$ have balanced chores, and $c$ still has at least as many $K$-chores as $a$, so $a$ does not EFX-envy $c$. We now examine $b$ and $c$. It remains that $c$ has at least as many $K$-chores as $b$, but $b$ now has two more total chores than $c$. Thus, after removing a chore it can EFX-envy $c$ by at most 1. However, since we had to have had EFX-envy between $a$ and $b$ when $|\x_b| = |\x_a|$, by Lemma~\ref{lem:EFX_Equal_Chores_Group} it must be that $k > 2$ in this instance. Then, since $b$ has disutility $k > 2$ for all of $c$'s 1-chores, $b$ does not EFX-envy $c$. Note, it must be that $c$ has a 1-chore since it could not form a group by itself with no initial chores. Thus, the allocation is EFX.

As only MPB chores are transferred, it must be that we maintain a CE and the allocation returned is fPO.
\end{proof}

\subsection{Algorithm~\ref{alg:efxpo}: Obtaining EFX+fPO}\label{sec:efxpoFinal}
\begin{lemma}\label{lem:bivalued3agents}
Given a bivalued chore allocation instance $(N,M,D)$ with three agents, Algorithm~\ref{alg:efxpo} computes an EFX + fPO allocation in strongly polynomial-time.
\end{lemma}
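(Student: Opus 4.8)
This statement is the capstone that glues together the subroutine lemmas, so the plan is to follow the case split of Algorithm~\ref{alg:efxpo} and verify that in each branch the returned pair $(\x,\p)$ is an integral competitive equilibrium whose allocation is EFX; fPO is then free from the First Welfare Theorem, and the time bound follows from the polynomial bounds already established for each component.

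\textbf{Step 1 (initialization).} First I would invoke Lemma~\ref{lem:BC_Term}: Algorithm~\ref{alg:BalancedChores} runs in $\poly{n,m}$ time and outputs a balanced EF1+fPO competitive equilibrium $(\x,\p)$ with agent groups $\{N_r\}_{r\in[R]}$. With three agents $R\in\{1,2,3\}$ and $|N_1|\ge\cdots\ge|N_R|\ge 1$, so the reachable configurations are exactly $R=1$; $R=2$ with $|N_1|=1$; $R=2$ with $|N_1|=2$; and $R=3$ --- which is precisely the branching of Algorithm~\ref{alg:efxpo}, so the case analysis is exhaustive. The identical-disutility case is disposed of separately by the envy-cycle-elimination argument of the footnote: allocating chores greedily in decreasing order of disutility yields an EFX allocation, which is fPO since the agents are interchangeable. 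Henceforth assume the three disutility functions are not all equal.

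\textbf{Step 2 ($R=1$).} Here Algorithm~\ref{alg:R1} runs on $(\x,\p)$, whose input satisfies Lemma~\ref{lem:R1_Input}: $\x$ is cost-minimizing (so every agent's $k$-chores are $K$-chores) and fully balanced (so the $K$-chores are split $\lfloor|K|/3\rfloor$ or $\lceil|K|/3\rceil$ per agent). If $|K|\equiv 0\pmod 3$ all agents hold the same number $t$ of $K$-chores, each disutility $k$ for everyone, so the contrapositive of Lemma~\ref{lem:EFX_BalancedEnvy} (or Lemma~\ref{lem:EFX_Best_Off} if $t=0$) shows no EFX-envy exists and $\x$ is already EFX. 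If $|K|\equiv 2\pmod 3$, I would cite Lemma~\ref{lem:R1_2extra_Input}: the preliminary swaps and transfers in Lines 3--20 of Algorithm~\ref{alg:R1} keep $\x$ cost-minimizing and fully balanced while shrinking the set of EFX-envy relationships to at most one, after which Lemma~\ref{lem:R1_2extra_EFX} shows Algorithm~\ref{alg:R1-2extra} returns an EFX competitive equilibrium. The case $|K|\equiv 1\pmod 3$ is symmetric, via Lemma~\ref{lem:R1_1extra_Input} and Lemma~\ref{lem:R1_1extra_EFX} with Algorithm~\ref{alg:R1-1extra}.

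\textbf{Step 3 ($R=2$ and $R=3$), and wrap-up.} For $R=2$ with $|N_1|=1$ (resp.\ $|N_1|=2$), Algorithm~\ref{alg:R2-1} (resp.\ Algorithm~\ref{alg:R2-2}) is invoked, and Lemma~\ref{lem:R2_1_EFX} (resp.\ Lemma~\ref{lem:R2_2_EFX}) gives that it returns an EFX competitive equilibrium; these may recurse into Algorithm~\ref{alg:R1}, already covered by Step~2. For $R=3$ every agent is alone in her group: for $i\in N_r$, $h\in N_{r'}$ with $r<r'$, Lemma~\ref{lem:EFX_Higher_Group} gives that $h$ does not EFX-envy $i$, and since chore transfers in Algorithm~\ref{alg:BalancedChores} only move chores to lower groups and $K$-chores are steered to the agents with fewest chores, $i$ shares a balanced chore count with $h$ and holds no more $k$-chores than $h$, so $i$ does not EFX-envy $h$ either (Lemma~\ref{lem:EFX_BalancedEnvy}, Lemma~\ref{lem:EFX_Best_Off}); hence the allocation is already EFX and returned unchanged. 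In every branch all chore moves are of MPB chores, so $(\x,\p)$ remains an integral competitive equilibrium and the returned allocation is fPO; each subroutine performs $O(m)$ transfers and $O(1)$ payment operations, so together with Lemma~\ref{lem:BC_Term} the whole procedure is strongly polynomial, completing the proof.

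\textbf{Main obstacle.} Essentially all the difficulty lives in the $R=1$ branch: (i) showing the preliminary rotations in Algorithm~\ref{alg:R1} preserve cost-minimization and full balance while reducing EFX-envy to a single relationship, and (ii) the long, delicate case analyses inside Algorithms~\ref{alg:R1-2extra} and \ref{alg:R1-1extra}, which must eliminate that last EFX-envy without creating any new EFX-envy and while only ever moving MPB chores. These are exactly Lemmas~\ref{lem:R1_2extra_Input}, \ref{lem:R1_2extra_EFX}, \ref{lem:R1_1extra_Input}, \ref{lem:R1_1extra_EFX}; granting them, the proof of the present lemma is just the bookkeeping of Steps~1--3.
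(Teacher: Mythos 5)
Your proposal is correct and follows essentially the same route as the paper's proof: case-split on the number of groups $R$ returned by Algorithm~\ref{alg:BalancedChores}, delegate the $R=1$ and $R=2$ branches to Lemmas~\ref{lem:R1_2extra_EFX}, \ref{lem:R1_1extra_EFX}, \ref{lem:R2_1_EFX}, \ref{lem:R2_2_EFX}, argue directly that the $R=3$ output is already EFX, and bound the running time by the polynomial bounds on each subroutine. The only cosmetic difference is that your $R=3$ argument compresses the paper's raised/unraised-group case distinction, but the substance is the same.
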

\begin{proof}
We show that Algorithm~\ref{alg:efxpo} computes such an allocation in strongly-polynomial time.  Algorithm~\ref{alg:efxpo} first calls Algorithm~\ref{alg:BalancedChores}. Then,  \begin{itemize}
    \item If $R=1$, Algorithm~\ref{alg:R1} is invoked. Then, it must be that either Algorithm~\ref{alg:R1} finds an EFX+fPO allocation, or one of Algorithms \ref{alg:R1-2extra} and \ref{alg:R1-1extra} is called. From Lemmas \ref{lem:R1_2extra_EFX} and \ref{lem:R1_1extra_EFX} we know that an EFX+fPO allocation is found in either case.
    \item If $R=2$, then either Algorithm~\ref{alg:R2-1} or Algorithm~\ref{alg:R2-2} is invoked. From Lemmas \ref{lem:R2_1_EFX} and \ref{lem:R2_2_EFX} we know that both algorithms give an EFX+fPO allocation.
    \item Finally, if $R=3$, we show that the output of Algorithm~\ref{alg:BalancedChores} is already EFX+fPO. Suppose the groups are such that $N_1 = \{a\}$, $N_2 = \{a\}$, and $N_3 = \{c\}$. From Lemma~\ref{lem:BC_Term} we have that the allocation is balanced. Then from Lemma~\ref{lem:EFX_Higher_Group} it cannot be that $c$ EFX-envies $a$ or $b$, and it cannot be that $b$ EFX-envies $a$. If an agent belongs to a raised group, then by Lemma~\ref{lem:Raised1Chores} they have only 1-chores, and so by Lemma~\ref{lem:EFX_Best_Off} agents in such groups cannot EFX-envy anyone. We then need only consider an agent in an unraised group who may EFX-envy an agent in a group below her. By construction of the groups, we have after $\s{MakeInitGroups}$ is called that $|\x_{a_1}| \geq |\x_{b_1}| \geq |\x_{c_1}|$. By Condition (3) of Lemma~\ref{lem:Payment_Raises}, agents in unraised groups cannot have lost a chore. By Lemma~\ref{lem:Gain_k} they may only gain $k$-chores. Then, by the tiebreaker rules, we have that $|\x_{c_k}| \geq |\x_{b_k}| \geq |\x_{a_k}|$. Since any unraised agent sees every $k$-chore of every other unraised agent as $k$ by Lemma~\ref{lem:PseudoKChore} and the allocation is balanced, it cannot be that $a$ EFX-envies $b$ or $c$, or that $b$ EFX-envies $c$. Thus, the allocation is EFX, and we have seen from Lemma~\ref{lem:BC_Term} that it is also fPO.
\end{itemize}
Thus, Algorithm~\ref{alg:efxpo} finds an EFX+fPO allocation. We now show that Algorithm~\ref{alg:efxpo} terminates in strongly-polynomial time. By Lemma~\ref{lem:BC_Term} Algorithm~\ref{alg:BalancedChores} terminates in $\s{poly}(n,m)$ time. If Algorithm~\ref{alg:R1} is not called, it is clear that Algorithm~\ref{alg:R2-1} and Algorithm~\ref{alg:R2-2} both terminate in $\s{poly}(n,m)$ time as they perform only constantly many chore transfers. If Algorithm~\ref{alg:R1} is called, it performs only constantly many chore transfers before calling either Algorithm~\ref{alg:R1-2extra} or Algorithm~\ref{alg:R1-1extra}. Algorithm~\ref{alg:R1-2extra} performs only constantly many transfers outside of its two while loops in Line 12 and Line 36. In both of these cases, we show in Lemma~\ref{lem:R1_2extra_EFX} that the loop terminates after at most $m$ transfers. The same is done for Algorithm~\ref{alg:R1-1extra}, which performs constantly many tranfers outside of its two while loops in Lines 19 and 17. Lemma~\ref{lem:R1_1extra_EFX} shows that they too terminate in at most $m$ transfers. Thus, it must be that Algorithm~\ref{alg:efxpo} terminates in strongly-polynomial time.
\end{proof}

\section{Illustrative Examples}\label{app:examples}

\begin{example}The algorithm of \label{ex:efx-alg-not-po}\cite{zhou22efx} does not necessarily return a PO allocation. \end{example} 
Zhou and Wu~\cite{zhou22efx} use the \RR procedure as a subroutine in their algorithm. We show that \RR need not give a PO allocation. Consider the following instance with three agents $\{a,b,c\}$ and three chores $\{j_1,j_2,j_3\}$.

\begin{table}[ht]
\centering
\begin{tabular}{|c||c|c|c|}\hline
    & $j_1$ & $j_2$ & $j_3$ \\\hline 
$a$ & 1 & 1 & 1 \\\hline
$b$ & $k$ & 1 & $k$ \\\hline
$c$ & 1 & $k$ & $k$ \\\hline
\end{tabular}
\end{table}
Consider an execution of the \RR algorithm using the order $a,b,c$. First $a$ picks $j_1$, then $b$ picks $j_2$, then $c$ picks $j_3$. This gives a disutility vector of $v_1 = (1,1,k)$ for agents $a,b,c$. However allocating $j_3$ to $a$, $j_2$ to $b$ and $j_1$ to $c$ gives a utility vector of $v_2 = (1,1,1)$, which Pareto-dominates $v_1$. Hence the outcome of \RR is not necessarily PO, thus neither is outcome of the algorithm of \cite{zhou22efx}.

\begin{example}\label{ex:MakeInitGroups} Execution of Algorithm $\s{MakeInitGroups}$
\end{example}
Consider the chore allocation instance below (with only $L$-chores).
\begin{center}
\begin{tabular}{|c||c|c|c|c|c|c|c|c|} 
 \hline
 & $j_1$ & $j_2$ & $j_3$ & $j_4$ & $j_5$\\
 \hline
 $a$ & 1 & 1 & 1 & 1 & $k$ \\
 \hline
 $b$ & 1 & 1 & 1 & $k$ & 1 \\
 \hline
 $c$ & $k$ & $k$ & $k$ & $k$ & 1 \\
 \hline
\end{tabular}
\end{center}
$\s{MakeInitGroups}$ begins with some cost-minimizing allocation $\x$, say:
\begin{itemize}
    \item $\x_a = \{j_1, j_2, j_3, j_4\}$,
    \item $\x_b = \{j_5\}$,
    \item $\x_c = \{\}$.
\end{itemize}
Note that all chore payments are 1. Then, $a$ is the big earner, and she pEF1-envies an agent who is in her component, such as $b$. Indeed, $(b, j_1, a)$ is a shortest path from $b$ to $a$ in the MPB graph. We thus transfer $j_1$ from $a$ to $b$, giving us:
\begin{itemize}
    \item $\x_a = \{j_2, j_3, j_4\}$,
    \item $\x_b = \{j_5, j_1\}$,
    \item $\x_c = \{\}$.
\end{itemize}
It remains that $a$ is the big earner but she no longer pEF1-envies $b$, only $c$. As $c$ is in her component via shortest path $(c, j_5, b, j_2, a)$, we transfer $j_5$ to $c$, so we have:
\begin{itemize}
    \item $\x_a = \{j_2, j_3, j_4\}$,
    \item $\x_b = \{j_1\}$,
    \item $\x_c = \{j_5\}$.
\end{itemize}
Still $a$ is the big earner. While she pEF1-envies both $b$ and $c$, we note that $c$ is no longer in her component. We do however have shortest path ($b, j_2, a$). We transfer $j_2$ to $b$ to get:
\begin{itemize}
    \item $\x_a = \{j_3, j_4\}$,
    \item $\x_b = \{j_1, j_2\}$,
    \item $\x_c = \{j_5\}$.
\end{itemize}
Now we may take either $a$ or $b$ to be the big earner. Suppose we choose $b$. Only $a$ is in $b$'s component, and $b$ does not pEF1-envy her. We are thus ready to form agent group $N_1 = \{b, a\}$. Then, $c$ is left as the only agent not in a group, so she cannot pEF1-envy any other agent without a group. So, $c$ forms her own group $N_2 = \{c\}$. As all agents now belong to a group, $\s{MakeInitGroups}$ outputs the following:
\begin{itemize}
    \item $\x_a = \{j_3, j_4\}$,
    \item $\x_b = \{j_1, j_2\}$,
    \item $\x_c = \{j_5\}$.
    \item $N_1 = \{b, a\}$
    \item $N_2 = \{c\}$
\end{itemize}
It is easy to verify that this output satisfies properties (i), (ii), and (iii) of Section~\ref{sec:overviewsimple}.

\begin{example}\label{ex:Alg1extraK} Execution of Algorithm~\ref{alg:BalancedChores}, Algorithm~\ref{alg:efxpo}, Algorithm~\ref{alg:R1}, and Algorithm~\ref{alg:R1-1extra} 
\end{example}
Consider the chore allocation instance below where $k = 5$.
\begin{center}
\begin{tabular}{|c||c|c|c|c|c|c|c|c|} 
 \hline
 & $j_1$ & $j_2$ & $j_3$ & $j_4$ & $j_5$ & $j_6$ & $j_7$ & $j_8 \ldots j_{11}$\\
 \hline
 $a$ & 1 & 1 & 1 & 1 & 1 & 1 & $k$ & $k$ \\
 \hline
 $b$ & 1 & 1 & 1 & 1 & 1 & 1 & 1 & $k$ \\
 \hline
 $c$ & 1 & 1 & 1 & 1 & $k$ & 1 & 1 & $k$ \\
 \hline
\end{tabular}
\end{center}
Suppose Algorithm~\ref{alg:BalancedChores} outputs one agent group (so no transfers are required) with the following allocation:
\begin{itemize}
    \item $\x_a = \{j_1, j_2, j_3, j_8\}$,
    \item $\x_b = \{j_4, j_5, j_9, j_{10}\}$,
    \item $\x_c = \{j_6, j_7, j_{11}\}$.
\end{itemize}
Then, Algorithm~\ref{alg:efxpo} calls Algorithm~\ref{alg:R1}. Since $b$ EFX-envies both $a$ and $c$, we swap $j_9$ and $j_3$, giving us the following allocation where the only EFX-envy is from $a$ towards $b$:
\begin{itemize}
    \item $\x_a = \{j_1, j_2, j_9, j_8\}$,
    \item $\x_b = \{j_4, j_5, j_3, j_{10}\}$,
    \item $\x_c = \{j_6, j_7, j_{11}\}$.
\end{itemize}
Finally Algorithm~\ref{alg:R1-1extra} is called. We first execute on Line 14, where we swap $j_4$ and $j_7$, giving us:
\begin{itemize}
    \item $\x_a = \{j_1, j_2, j_9, j_8\}$,
    \item $\x_b = \{j_7, j_5, j_3, j_{10}\}$,
    \item $\x_c = \{j_6, j_4, j_{11}\}$.
\end{itemize}
We see that $a$ still EFX-envies $c$, so we execute Line 17, swapping $j_9$ with $j_4$. We now have:
\begin{itemize}
    \item $\x_a = \{j_1, j_2, j_4, j_8\}$,
    \item $\x_b = \{j_7, j_5, j_3, j_{10}\}$,
    \item $\x_c = \{j_6, j_9, j_{11}\}$.
\end{itemize}
Still we do not have EFX, as $c$ now EFX-envies $a$. Then, we execute on Line 23, transferring $j_6$ to $b$. This gives us our final EFX allocation:
\begin{itemize}
    \item $\x_a = \{j_1, j_2, j_4, j_8\}$,
    \item $\x_b = \{j_6, j_7, j_5, j_3, j_{10}\}$,
    \item $\x_c = \{j_9, j_{11}\}$.
\end{itemize}

\begin{example}\label{ex:Alg2extraK} Execution of Algorithm~\ref{alg:BalancedChores}, Algorithm~\ref{alg:efxpo}, Algorithm~\ref{alg:R1}, and Algorithm~\ref{alg:R1-2extra} 
\end{example}
Consider the chore allocation instance below where $k = 5$.
\begin{center}
\begin{tabular}{|c||c|c|c|c|c|c|} 
 \hline
 & $j_1$ & $j_2$ & $j_3$ & $j_4$ & $j_5$ & $j_6 \ldots j_{10}$\\
 \hline
 $a$ & 1 & 1 & 1 & $k$ & $k$ & $k$ \\
 \hline
 $b$ & 1 & 1 & 1 & 1 & 1 & $k$ \\
 \hline
 $c$ & 1 & 1 & 1 & 1 & 1 & $k$ \\
 \hline
\end{tabular}
\end{center}
Suppose Algorithm~\ref{alg:BalancedChores} outputs one agent group (so no transfers are required) with the following allocation:
\begin{itemize}
    \item $\x_a = \{j_1, j_2, j_6\}$,
    \item $\x_b = \{j_3, j_7, j_8\}$,
    \item $\x_c = \{j_4, j_5, j_9, j_{10}\}$.
\end{itemize}
Algorithm~\ref{alg:efxpo} again calls Algorithm~\ref{alg:R1}. Since both $b$ and $c$ EFX-envy $a$, and $|\x_c| > |\x_a|$, we transfer a $K$-chore from $c$ to $a$, giving us:
\begin{itemize}
    \item $\x_a = \{j_1, j_2, j_6, j_{10}\}$,
    \item $\x_b = \{j_3, j_7, j_8\}$,
    \item $\x_c = \{j_4, j_5, j_9\}$.
\end{itemize}
Now only $b$ EFX-envies $c$ and we move to Algorithm~\ref{alg:R1-2extra}. Here we execute at Line 37 and first transfer a 1-chore from $a$ to $c$, then transfer a 1-chore from $b$ to $c$. This gives us the EFX allocation:
\begin{itemize}
    \item $\x_a = \{j_2, j_6, j_{10}\}$,
    \item $\x_b = \{j_7, j_8\}$,
    \item $\x_c = \{j_4, j_5, j_9, j_1, j_3\}$.
\end{itemize}

\begin{example}\label{ex:Alg7} Execution of Algorithm~\ref{alg:BalancedChores}, Algorithm~\ref{alg:efxpo}, and Algorithm~\ref{alg:R2-1}
\end{example}
Consider the chore allocation instance below where $k = 5$.
\begin{center}
\begin{tabular}{|c||c|c|c|c|c|c|c|c|c|} 
 \hline
 & $j_1$ & $j_2$ & $j_3$ & $j_4$ & $j_5$ & $j_6$ & $j_7$ & $j_8$ & $j_9 \ldots j_{11}$\\
 \hline
 $a$ & 1 & 1 & 1 & 1 & 1 & 1 & $k$ & 1 & $k$ \\
 \hline
 $b$ & $k$ & $k$ & $k$ & $k$ & $k$ & $k$ & 1 & $k$ & $k$ \\
 \hline
 $c$ & $k$ & $k$ & $k$ & $k$ & $k$ & $k$ & 1 & 1 & $k$ \\
 \hline
\end{tabular}
\end{center}
Here Algorithm~\ref{alg:BalancedChores} creates two agent groups $\{a\}$ and $\{b, c\}$ with the following initial allocation:
\begin{itemize}
    \item $\x_a = \{j_1, j_2, j_3, j_4, j_5, j_6\}$,
    \item $\x_b = \{j_7, j_9\}$,
    \item $\x_c = \{j_8, j_{10}, j_{11}\}$.
\end{itemize}
Then, to balance the chores, we raise the payments of chores belonging to $a$ and transfer a chore to both $b$ and $c$, giving us:
\begin{itemize}
    \item $\x_a = \{j_1, j_2, j_3, j_4\}$,
    \item $\x_b = \{j_7, j_9, j_5\}$,
    \item $\x_c = \{j_8, j_{10}, j_{11}, j_6\}$.
\end{itemize}
Now the chores are balanced and Algorithm~\ref{alg:R2-1} is called. Since $c$ still EFX-envies $b$, we next execute Line 14, raising payments of $\x_a$ and then transferring a chore, say $j_4$, to $b$. This gives us the EFX allocation:
\begin{itemize}
    \item $\x_a = \{j_1, j_2, j_3\}$,
    \item $\x_b = \{j_7, j_9, j_5, j_4\}$,
    \item $\x_c = \{j_8, j_{10}, j_{11}, j_6\}$.
\end{itemize}

\begin{example}\label{ex:AlgR2-2} Execution of Algorithm~\ref{alg:BalancedChores}, Algorithm~\ref{alg:efxpo}, and Algorithm~\ref{alg:R2-2}
\end{example}
Consider the chore allocation instance below where $k = 5$.
\begin{center}
\begin{tabular}{|c||c|c|c|c|c|c|c|} 
 \hline
 & $j_1$ & $j_2$ & $j_3$ & $j_4$ & $j_5$ & $j_6$ & $j_7 \ldots j_{12}$\\
 \hline
 $a$ & 1 & 1 & 1 & 1 & 1 & $k$ & $k$ \\
 \hline
 $b$ & 1 & 1 & 1 & 1 & 1 & $k$ & $k$ \\
 \hline
 $c$ & $k$ & $k$ & $k$ & $k$ & $k$ & 1 & $k$ \\
 \hline
\end{tabular}
\end{center}
Here Algorithm~\ref{alg:BalancedChores} creates two agent groups $\{a, b\}$ and $\{c\}$ with the following initial allocation:
\begin{itemize}
    \item $\x_a = \{j_1, j_2, j_3, j_7\}$,
    \item $\x_b = \{j_4, j_5, j_8, j_9\}$,
    \item $\x_c = \{j_6, j_{10}, j_{11}, j_{12}\}$.
\end{itemize}
Since the chores are balanced, no transfers are needed. Algorithm~\ref{alg:efxpo} then calls Algorithm~\ref{alg:R2-2}. We satisfy the conditions in Lines 11, 14 and 18 and execute accordingly. We transfer a $K$-chore from $c$ to $a$ and a 1-chore from $a$ to $b$. This gives us the final EFX allocation:
\begin{itemize}
    \item $\x_a = \{j_2, j_3, j_7, j_{12}\}$,
    \item $\x_b = \{j_4, j_5, j_8, j_9, j_1\}$,
    \item $\x_c = \{j_6, j_{10}, j_{11}\}$.
\end{itemize}
\end{document}